\renewcommand{\mathbf}[1]{\boldsymbol{#1}}
\newcommand{\concept}[1]{\emph{#1}}
\newcommand{\hyper}[1]{\mathcal{#1}}
\newcommand{\RHtree}[2]{{\mathbb{T}}_{#1,#2}}
\newtheorem{theorem}{Theorem}[section]
\newtheorem{lemma}[theorem]{Lemma}
\newtheorem{corollary}[theorem]{Corollary}
\newtheorem{definition}{Definition}[section]
\newtheorem{remark}{Remark}[section]
\newtheorem*{remark*}{Remark}
\theoremstyle{plain}
\newtheorem{proposition}[theorem]{Proposition}
\newcommand{\group}[1]{{\mathbb{#1}}}
\newcommand{\mat}[1]{\boldsymbol{#1}}
\newcommand{\TSAW}{\mathcal{T}_{\mathrm{SAW}}}
\newcommand{\ptree}[2]{\mathbb{P}_{#1}^{#2}} 
\newcommand{\pgraphhv}[1]{p_{{\hyper{H}},{v}}^{#1}} 
\newcommand{\ptreehv}[1]{\mathbb{P}_{\hyper{T}}^{#1}} 
\newcommand{\VHtree}[2]{\widehat{\mathbb{T}}_{#1,#2}}
\newcommand{\defeq}{\triangleq}
\newcommand{\ceil}[1]{\ensuremath{\left\lceil#1\right\rceil}}
\newcommand{\floor}[1]{\ensuremath{\left\lfloor#1\right\rfloor}}
\newcommand{\neigh}[2]{B_{#1 #2}}
\newcommand{\localto}{\to_{\mathrm{loc}}}
\newcommand{\law}[1]{\mathbb{#1}}
\title{Counting hypergraph matchings up to uniqueness threshold}
\author{
Renjie Song\thanks{Department of Computer Science and Technology, Nanjing University, China. \texttt{song.renjie@foxmail.com}. Supported by NSFC grants 61272081 and 61321491.}
\and
Yitong Yin\thanks{State Key Laboratory for Novel Software Technology, Nanjing University, China. \texttt{yinyt@nju.edu.cn}. Supported by NSFC grants 61272081 and 61321491.}
\and
Jinman Zhao\thanks{Department of Computer Science, University of Wisconsin-Madison. \texttt{jinman.zhao@gmail.com}. This work was done when Jinman Zhao was an undergraduate student at Nanjing University.}
}
\date{}
\begin{document}

\maketitle
\begin{abstract}
We study the problem of approximately counting matchings in hypergraphs of bounded maximum degree and maximum size of hyperedges. With an activity parameter $\lambda$, each matching $M$ is assigned a weight $\lambda^{|M|}$. The counting problem is formulated as computing a partition function that gives the sum of the weights of all matchings in a hypergraph.
This problem unifies two extensively studied statistical physics models in approximate counting: the hardcore model (graph independent sets) and the monomer-dimer model (graph matchings).

For this model, the critical activity $\lambda_c= \frac{d^d}{k (d-1)^{d+1}}$ is the threshold for the uniqueness of Gibbs measures on the infinite $(d+1)$-uniform $(k+1)$-regular hypertree. 
Consider hypergraphs of maximum degree at most $k+1$ and maximum  size of hyperedges at most $d+1$.
We show that when $\lambda < \lambda_c$, there is an FPTAS for computing the partition function; and when $\lambda = \lambda_c$, there is a PTAS for computing the log-partition function.
These algorithms are based on the decay of correlation (strong spatial mixing) property of Gibbs distributions. 
When $\lambda > 2\lambda_c$, there is no PRAS for the partition function or the log-partition function unless NP$=$RP.

Towards obtaining a sharp transition of computational complexity of approximate counting, we study the local convergence from a sequence of finite hypergraphs to the infinite lattice with specified symmetry. We show a surprising connection between the local convergence and the reversibility of a natural random walk. This leads us to a barrier for the hardness result: The non-uniqueness of infinite Gibbs measure is not realizable by any finite gadgets.

\end{abstract}

\section{Introduction}

Counting problems have long been studied in the context of statistical physics models.  
Perhaps the two most well studied statistical physics models for approximate counting are the \concept{hardcore model} and the \concept{monomer-dimer model}.

In the hardcore model, given a graph $G=(V,E)$ and a vertex-activity $\lambda$, the model assigns each independent set $I$ of $G$ a weight $w^{\mathsf{IS}}_\lambda(I)=\lambda^{|I|}$. A natural probability distribution, the Gibbs distribution, is defined over all independent sets of $G$ as $\mu^{\mathsf{IS}}_\lambda(I)={w^{\mathsf{IS}}_\lambda(I)}/{Z^{\mathsf{IS}}_\lambda(G)}$ where the normalizing factor $Z^{\mathsf{IS}}_\lambda(G)=\sum_{I}w^{\mathsf{IS}}_\lambda(I)$  
is the \concept{partition function}. 
In the monomer-dimer model, given a graph $G=(V,E)$ and an edge-activity $\lambda$, the model assigns each matching $M$ of $G$ a weight $w^{\mathsf{M}}_\lambda(M)=\lambda^{|M|}$. The Gibbs distribution over all matchings of $G$ is defined accordingly. And the partition function now becomes $Z^{\mathsf{M}}_\lambda(G)=\sum_{M}w^{\mathsf{M}}_\lambda(M)$. 
The counting problems are then formulated as computing the partition functions $Z^{\mathsf{IS}}_\lambda(G)$ and $Z^{\mathsf{M}}_\lambda(G)$, or the \concept{log-partition functions} $\log Z^{\mathsf{IS}}_\lambda(G)$ and $\log Z^{\mathsf{M}}_\lambda(G)$.

It was well known that the hardcore model exhibits the following phase transition. For the infinite $(d+1)$-regular tree $\mathbb{T}_d$, there is a critical activity $\lambda_c(\mathbb{T}_d)=d^d/(d-1)^{d+1}$,  called the \concept{uniqueness threshold}, such that when $\lambda<\lambda_c$ the correlation between the marginal distribution at the root and any boundary condition on leaves at level $t$ decays exponentially in the depth $t$, but when $\lambda>\lambda_c$ the boundary-to-root correlation remains substantial even as $t\to\infty$. This property of correlation decay is also called spatial mixing, and was known to be equivalent to the uniqueness of the infinite-volume Gibbs measure on the infinite $(d+1)$-regular tree $\mathbb{T}_d$~\cite{weitz2005combinatorial}.
In a seminal work~\cite{weitz2006counting}, Weitz showed that for all $\lambda<\lambda_c(\mathbb{T}_d)$ the decay of correlation holds for the hardcore model on all graphs of maximum degree bounded by $d+1$ and there is a deterministic FPTAS for approximately computing the partition function on all such graphs. Here the specific notion of decay of correlation established is the strong spatial mixing. 
The connection of approximability of partition function to the phase transition of the model is further strengthened in a series of works~\cite{sly2010computational, sly2014counting, galanis2012improved, galanis2012inapproximability} which show that unless NP=RP there is no PRAS for the partition function or the log-partition function of the hardcore model when $\lambda>\lambda_c(\mathbb{T}_d)$ on graphs with maximum degree bounded by $d+1$.

For the monomer-dimer model, it was well known that the model has no such phase transition~\cite{heilmann1972existence, heilmann1972theory}. And analogously there is an FPRAS due to Jerrum and Sinclair~\cite{jerrum1989approximating} for the partition function of the monomer-dimer model on all graphs. In~\cite{bayati2007simple} strong spatial mixing with an exponential rate was established for the model on all graphs with maximum degree bounded by an arbitrary constant and a deterministic FPTAS was also given for the partition function on all such graphs.

In this paper, we study \concept{hypergraph matchings}, a model that unifies both the hardcore model and the monomer-dimer model.
A hypergraph $\hyper{H}=(V,E)$ consists of a vertex set $V$ and a collection $E$ of vertex subsets, called the (hyper)edges. A \concept{matching} of $\hyper{H}$ is a set $M\subseteq E$ of disjoint hyperedges in $\hyper{H}$.
Given a hypergraph $\hyper{H}$ and an \concept{activity parameter} $\lambda>0$, a \concept{configuration} is a matching $M$ of $\hyper{H}$, and is assigned a weight $w_{\lambda}(M)=\lambda^{|M|}$. The \concept{Gibbs measure} over all matchings of $\hyper{H}$ is defined as $\mu(M)={w_\lambda(M)}/{Z_{\lambda}(\hyper{H})}$, where the normalizing factor $Z_{\lambda}(\hyper{H})$ is the \concept{partition function} for the model, defined as:
\[
Z_{\lambda}(\hyper{H})=\sum_{M:\text{  matching of }\hyper{H}} \lambda^{|M|}.
\]
This model represents an interesting subclass of Boolean CSP defined by the matching (packing) constraints. It also unifies the hardcore model and the monomer-dimer model.
Consider the family of hypergraphs of maximum edge size $d+1$ and maximum degree $k+1$:
\begin{itemize}
\item When $d=1$, the model becomes the monomer-dimer model on graphs of maximum degree $k+1$. 

\item When $k=1$, the partition function takes sum over independent sets in the dual graph, and the model becomes the hardcore model on graphs of maximum degree $d+1$.
\end{itemize}
For hypergraphs, the study of approximate counting hypergraph matchings was initiated in~\cite{karpinski2013approximate}. In~\cite{dudek2014approximate}, an FPTAS was obtained for counting matchings in 3-uniform hypergraphs of maximum degree at most 3 by considering the correlation decay for the independent sets in claw-free graphs. In~\cite{liu2013fptas}, an FPTAS was given for 3-uniform hypergraphs of maximum degree at most 4 by the correlation decay of the original CSP. All these results assumed $\lambda=1$, i.e.~the problem of counting the number of matchings in a hypergraph.

\paragraph*{Our results.}
We show that for hypergraph matchings $\lambda_c=\lambda_c(\RHtree{d}{k})=\frac{d^{d}}{k(d-1)^{{d+1}}}$ is the uniqueness threshold on the infinite $(d+1)$-uniform $(k+1)$-regular hypertree~$\RHtree{d}{k}$.

\begin{proposition}\label{proposition-uniqueness-threshold}

There is a unique Gibbs measure on matchings of $\RHtree{d}{k}$ if and only if $\lambda\le\lambda_c$.
\end{proposition}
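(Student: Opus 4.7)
The plan is to translate uniqueness of the Gibbs measure into convergence of a one-dimensional tree recursion on the root marginal of finite truncations of $\RHtree{d}{k}$, and then carry out the standard ``decreasing-map / period-2'' analysis.

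\textbf{Step 1: Tree recursion.} Let $T$ be a rooted sub-hypertree in which the root $v$ has $k$ child hyperedges $e_1,\ldots,e_k$, each containing $d$ further vertices whose hanging sub-hypertrees I denote $T_{i,j}$ with $i\in[k]$, $j\in[d]$. Write $p_T$ for the Gibbs marginal probability that $v$ is uncovered by the matching. Case-splitting on whether some $e_i$ is in the matching (in which case the $d$ sub-hypertrees hanging from $e_i$ must each have an uncovered root), I get
\[
p_T \;=\; \frac{1}{1 + \lambda \sum_{i=1}^{k}\prod_{j=1}^{d} p_{T_{i,j}}}.
\]
Specialising to depth-$t$ truncations of the regular hypertree (with an arbitrary boundary configuration at depth $0$, which amounts to an arbitrary initial value $p_0\in[0,1]$) this collapses to the scalar iteration
\[
p_{t+1} \;=\; f(p_t) \;:=\; \frac{1}{1 + \lambda k\, p_t^{\,d}}.
\]

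\textbf{Step 2: Fixed-point analysis.} The map $f$ sends $[0,1]$ into itself and is strictly decreasing, so $f\circ f$ is strictly increasing and there is a unique fixed point $p^*\in(0,1)$ satisfying $\lambda k (p^*)^{d+1}=1-p^*$. The standard criterion for a decreasing map says iterates converge to $p^*$ from every $p_0\in[0,1]$ if and only if $f\circ f$ admits no other fixed point, equivalently $|f'(p^*)|\le 1$. Differentiating and using the fixed-point identity gives $f'(p^*) = -d\lambda k (p^*)^{d+1} = -d(1-p^*)$, so the stability condition becomes $p^*\ge (d-1)/d$. Plugging $p^*=(d-1)/d$ into the fixed-point equation yields precisely $\lambda = d^d/(k(d-1)^{d+1}) = \lambda_c$; since $\lambda=(1-p^*)/(k(p^*)^{d+1})$ is strictly decreasing in $p^*$, the stability condition is equivalent to $\lambda\le\lambda_c$.

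\textbf{Step 3: From recursion to Gibbs measures.} For $\lambda\le\lambda_c$, convergence $p_t(p_0)\to p^*$ uniformly in $p_0$ combined with the recursion applied at every internal vertex shows that all finite-region marginals are determined in the thermodynamic limit independently of the boundary, forcing by the DLR condition that there is at most one infinite-volume Gibbs measure; existence being standard on amenable trees, uniqueness follows. For $\lambda>\lambda_c$, the map $f\circ f$ picks up two additional fixed points $p^-<p^*<p^+$ satisfying $f(p^-)=p^+$; imposing ``fully matched'' boundary conditions at even versus odd depths and passing to the thermodynamic limit (through a compactness / Kolmogorov-extension argument) produces two infinite-volume Gibbs measures whose root marginals converge to the distinct values $p^-$ and $p^+$.

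\textbf{Main obstacle.} The analytic core (Step 2) is a one-line calculation. The real work is Step 3: one must verify rigorously on the hypertree that the period-2 fixed points of $f$ yield genuinely distinct measures on the full matching configuration space, and conversely that the global contraction of $f$ pins down a unique DLR state. This follows the usual Preston/Georgii extremal-Gibbs-measure machinery for trees, but the hypertree structure means that the ``levels'' separating parent from child are hyperedges rather than edges, so the parity used in the non-uniqueness construction is in terms of $f\circ f$ on the vertex recursion with branching factor $kd$, and one must check that pinning the $d$ co-vertices of a boundary hyperedge behaves like the standard ``$+$''/``$-$'' boundary in the graph case.
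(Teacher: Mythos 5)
Your proposal is correct and takes a route that is mathematically conjugate to, but structurally more economical than, the paper's proof. You work directly with matchings of $\RHtree{d}{k}$ and track $p_T=\Pr[v\text{ uncovered}]$, obtaining the scalar recursion $p\mapsto (1+\lambda k p^d)^{-1}$; under the change of variable $x=(1-p)/p$ this is exactly the paper's hardcore recursion $x\mapsto k\lambda/(1+x)^d$, so your fixed-point analysis in Step 2 (including $f'(p^*)=-d(1-p^*)$, threshold at $p^*=(d-1)/d$, and strict monotonicity of $\lambda$ in $p^*$) matches the paper's $\hat{x}=1/(d-1)$ analysis in different coordinates. The paper instead dualizes to independent sets in $\RHtree{k}{d}$, derives a constraint valid for arbitrary simple Gibbs measures (Lemma 3.2), introduces the branching matrices $\widehat{\boldsymbol{D}},\widehat{\boldsymbol{K}}$ and the $\widehat{\group{G}}$-translation-invariant symmetry class, and only then collapses to a two-variable antisymmetric system. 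That extra machinery is not needed for Proposition~\ref{proposition-uniqueness-threshold} by itself; the paper deploys it to prove the strictly stronger Theorem~\ref{thm-uniqueness-threshold} (the threshold is attained within a two-orbit symmetry class), which Section~\ref{sec:local-converge} requires. For identifying $\lambda_c$ alone, your depth-parity boundary conditions produce the same extremal pair $\mu^\pm$ without the two-coloring.

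You are right that Step 3 is where the real work lies, and the two gaps you should make explicit are precisely the ones the paper handles elsewhere. First, your uniform-in-$p_0$ convergence only covers constant boundary conditions; to dominate arbitrary boundaries you need the monotonicity of the multivariable tree recursion to sandwich every boundary between the all-$0$ and all-$\infty$ extremes (and the validity of these extremes as genuine matching configurations is exactly what the paper's footnote in Section~\ref{sec:SSM} addresses). Second, convergence of root marginals does not by itself give uniqueness of the infinite-volume Gibbs measure: the paper establishes weak spatial mixing quantitatively (Theorem~\ref{theorem-wsm-IS}) and then invokes the known equivalence between WSM and uniqueness of Gibbs measures on trees, citing~\cite{weitz2005combinatorial}; your phrase about ``all finite-region marginals being determined in the thermodynamic limit'' should be replaced by that citation rather than re-derived. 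With those two standard ingredients supplied, your argument is complete and agrees with the paper's; the paper also defers the fact that $f\circ f$ has exactly one versus three fixed points according as $\lambda\le\lambda_c$ or $\lambda>\lambda_c$ to~\cite{kelly1985stochastic,spitzer1975markov}, so deferring it in your Step 2 is consistent with the paper's level of detail.
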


This fact was implicit in the literature. Here we give a formal proof.
It subsumes the well-known uniqueness threshold $\lambda_c(\RHtree{d}{1})=\frac{d^{d}}{(d-1)^{{d+1}}}$ for the hardcore model on the infinite $(d+1)$-regular tree and also the lack of phase-transition for the monomer-dimer model.

We then establish the decay of correlation for hypergraph matchings on all hypergraphs with bounded maximum size of hyperedges and bounded maximum degree when the activity $\lambda$ is in the uniqueness regime for the uniform regular hypertree.
The specific notion of decay of correlations that we establish here is the strong spatial mixing~\cite{weitz2006counting} (see Section~\ref{sec:prelim} for a formal definition).
Consequently, we give an FPTAS for the partition function when $\lambda$ is in the interior of the uniqueness regime, and a PTAS for the log-partition function when $\lambda$ is at the critical threshold.

\begin{theorem}\label{theorem-main}
For every finite integers $d,k\ge 1$, the following holds for matchings with activity $\lambda$ on all hypergraphs of maximum edge-size at most $d+1$ and maximum degree at most $k+1$:
\begin{itemize}
\item if $\lambda < \lambda_c$, the model exhibits strong spatial mixing at an exponential rate and there exists an FPTAS for computing the partition function;
\item if $\lambda = \lambda_c$, the model exhibits strong spatial mixing at a polynomial rate and there is a PTAS for computing the log-partition function.
\end{itemize}
\end{theorem}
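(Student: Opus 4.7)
The plan is to follow the correlation-decay paradigm pioneered by Weitz for the hardcore model, adapted to the hypergraph matching setting. The starting point is a standard reduction: for any hypergraph $\hyper{H}$ and vertex $v$, let $p_{\hyper{H},v}$ denote the marginal probability that $v$ is unmatched in a Gibbs sample; a telescoping identity expresses $Z_\lambda(\hyper{H})$ as a product of such marginals over a sequence of hypergraphs obtained by pinning vertices one at a time, so it suffices to deterministically approximate these marginals. To this end I would generalize Weitz's self-avoiding walk construction to hypergraphs, building a tree $\TSAW(\hyper{H},v)$ under which $p_{\hyper{H},v}$ equals the root marginal of $\TSAW(\hyper{H},v)$ with appropriate boundary conditions. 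The tree has hyperedges of size at most $d+1$ and branching at most $k$ at internal nodes, so analyzing correlation decay on trees suffices.

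On a hypertree rooted at $v$ with incident hyperedges $e_1,\ldots,e_s$, where $e_i=\{v\}\cup U_i$ and $|U_i|=d$, and with subtrees $T_{i,j}$ hanging off each $u_{i,j}\in U_i$ once $v$ is removed, a direct computation of $Z^U/(Z^U+Z^M)$ (splitting matchings by whether $v$ is covered) yields the recursion
\[
p_v \;=\; F(\mathbf{p}) \;=\; \frac{1}{\,1+\lambda\sum_{i=1}^s\prod_{j=1}^d p_{u_{i,j}}\,},
\]
where $p_{u_{i,j}}$ is the marginal of $u_{i,j}$ in $T_{i,j}$. Its symmetric fixed-point equation $p^*=1/(1+k\lambda(p^*)^d)$ encodes the uniqueness analysis of Proposition~\ref{proposition-uniqueness-threshold}, and a short computation recovers the threshold $\lambda_c=d^d/(k(d-1)^{d+1})$.

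The heart of the argument is to show contraction of $F$ via a carefully chosen potential function $\Phi$. Working in the transformed coordinate $y=\Phi(p)$ with $\widetilde F$ the conjugated recursion, I would bound
\[
\max_{\mathbf{p}} \sum_{i,j}\left|\frac{\partial \widetilde F}{\partial y_{i,j}}\right| \;\le\; \alpha,
\]
with $\alpha<1$ strictly for $\lambda<\lambda_c$ and $\alpha=1$ at $\lambda=\lambda_c$. By symmetry and monotonicity of the partial derivatives, the worst case reduces to a one-variable optimization at the symmetric fixed point, where the critical threshold emerges exactly. A contraction factor $\alpha<1$ translates to exponential strong spatial mixing on $\TSAW$; at $\lambda=\lambda_c$ the first-order contraction degenerates, and I would obtain polynomial-rate SSM by a second-order analysis showing that the iteration near the fixed point behaves like $y_{t+1}\approx y_t-c\,y_t^2$, giving an $O(1/t)$ error bound.

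Given SSM, the algorithmic consequences are standard. Truncating $\TSAW(\hyper{H},v)$ at depth $O(\log(n/\varepsilon))$ in the exponential regime and evaluating the recursion bottom-up yields an FPTAS for $Z_\lambda$; at criticality, polynomial-depth truncation yields additive error $\varepsilon n$ on $\log Z_\lambda$, which suffices for a PTAS since $\log Z_\lambda=\Theta(n)$. The main obstacle is the potential-function design: finding a single $\Phi$ that simultaneously handles the product-over-hyperedge and sum-over-incident-edge structure of $F$, delivers the sharp threshold $\lambda_c$, and remains sufficiently well behaved at criticality to produce a uniform second-order bound. Hypergraph matchings interpolate between the hardcore model ($k=1$) and the monomer-dimer model ($d=1$), so the potential must be flexible enough to recover the known tight analyses in both limits while respecting the coupling among the $d$ vertices within a hyperedge.
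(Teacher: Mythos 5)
Your overall architecture (hypergraph SAW tree, tree recursion for marginals, self-reduction from $Z_\lambda$ to marginals, depth-truncation) matches the paper, and your matching recursion $p_v=1/(1+\lambda\sum_i\prod_j p_{u_{i,j}})$ is the correct dual of the paper's independent-set recursion~\eqref{eq:tree-recursion}. The genuine divergence is in how the decay of correlation is established, and it is there that your plan has a gap.

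You propose a potential-function contraction argument, bounding $\sum_{i,j}|\partial\widetilde F/\partial y_{i,j}|\le\alpha$ in some transformed coordinate $\Phi$, and you assert that ``by symmetry and monotonicity the worst case reduces to a one-variable optimization at the symmetric fixed point.'' That reduction is not automatic; it is precisely the hard part. The paper deliberately avoids the potential method and instead proves an extremal comparison statement (Proposition~\ref{proposition-ssm-tree}, realized technically in Theorem~\ref{theorem-srj-sens-de} and the two-part inductive Lemma~\ref{lemma-srj-hyp}): among all sub-hypertrees of $\RHtree{k}{d}$ with arbitrary activities $0\le\lambda_v\le\lambda$ and arbitrary boundary conditions, the one that decays slowest is the uniform $(k+1)$-uniform $(d+1)$-regular hypertree with constant activity $\lambda$. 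Only after this reduction does the analysis become a one-dimensional iteration $x_{t+1}=g(x_t)$ near $\hat x$, where the Taylor-expansion / $O(1/\sqrt{t})$ argument you sketch is carried out (Theorem~\ref{theorem-wsm-IS}). Your plan skips the reduction entirely and jumps to the one-dimensional analysis.

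The reason this matters is exactly the critical case. For $\lambda<\lambda_c$ a contraction constant $\alpha<1$ is forgiving: even a crude potential that overshoots the symmetric case still gives exponential SSM. At $\lambda=\lambda_c$ the first-order contraction is exactly $1$ at the fixed point, and a potential-method bound of the form ``amplification $\le 1$'' conveys no decay whatsoever; you need a genuinely multivariate second-order statement that is uniform over all branching patterns, all activities in $[0,\lambda_c]$, and all boundary values, not just over the scalar symmetric iteration. Establishing that uniformity is equivalent in difficulty to the extremal comparison the paper proves by induction. You flag ``potential-function design'' as the main obstacle, but the obstacle is prior to that: without an extremality lemma, a single well-behaved $\Phi$ with the sharp threshold is not known to exist for this model at criticality, and the paper explicitly notes choosing Weitz's combinatorial comparison over the potential method for exactly this reason. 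Your $\lambda<\lambda_c$ half is plausible in outline; the $\lambda=\lambda_c$ half, as written, is not a proof.

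One further small point: you write that a PTAS for $\log Z_\lambda$ follows because $\log Z_\lambda=\Theta(n)$. This is not the argument used, and is not needed: the paper approximates each term $-\log(1-p_{v_i}^{\varnothing_{i-1}})$ to within relative error $\varepsilon$ by approximating $p_{v_i}^{\varnothing_{i-1}}$ to additive error $\Theta(\varepsilon/\ln(1/\varepsilon))$, exploiting the lower bound $1-p\ge 1/(1+\lambda)$; summing these gives a multiplicative $\varepsilon$-approximation of $\log Z$ term by term without any global $\Theta(n)$ lower bound.
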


\begin{remark*}
The theorem unifies the strong spatial mixing and FPTAS for the hardcore model~\cite{weitz2006counting} and the monomer-dimer model~\cite{bayati2007simple}, and also covers as special cases the results for approximate counting non-weighted hypergraph matchings in~\cite{karpinski2013approximate, dudek2014approximate, liu2013fptas}.

For hypergraph matchings, the case of critical threshold is of significance. There is a natural combinatorial problem that corresponds to the threshold case: counting matchings in $3$-uniform hypergraphs of maximum degree at most 5. Here $d=2$, $k=4$, and the critical $\lambda_c=\frac{d^d}{k(d-1)^{d+1}}=1$, which corresponds to counting the number of hypergraph matchings without weight.
\end{remark*}

Unlike most recent correlation-decay-based algorithms, where the strong spatial mixings were established by a potential analysis,  we do not use the potential method to analyze the decay of correlation. Instead, we prove the following stronger extremal statement.

\begin{proposition}\label{proposition-ssm-tree}
For hypergraph matchings, the worst case of (weak or strong) spatial mixing, in terms of decay rate, among all hypergraphs of maximum edge-size at most $d+1$ and maximum degree at most $k+1$, is represented by the weak spatial mixing on $\RHtree{d}{k}$.
\end{proposition}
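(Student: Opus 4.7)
The plan is to follow the general blueprint of Weitz's reduction to self-avoiding walks, adapted to hypergraph matchings, in three steps: construct a matching SAW hypertree $\TSAW(\hyper{H},v)$, reduce strong spatial mixing on $\hyper{H}$ to weak spatial mixing on this tree, and then bound the worst weak spatial mixing over all admissible hypertrees by that on the fully regular $\RHtree{d}{k}$.

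First I would construct, for every hypergraph $\hyper{H}=(V,E)$ of maximum edge-size at most $d+1$ and maximum degree at most $k+1$ and every vertex $v\in V$, a hypertree $\TSAW(\hyper{H},v)$ rooted at $v$ whose hyperedges are indexed by self-avoiding walks of hyperedges out of $v$ in $\hyper{H}$. At each explored vertex we list all incident hyperedges and recurse into each other endpoint; when a recursion would revisit a previously seen vertex, we terminate by fixing that revisit endpoint to be unmatched (the natural cycle-closing convention for matchings, mirroring the fixed-spin rule in Weitz's hardcore construction). By design $\TSAW(\hyper{H},v)$ has maximum edge-size $\le d+1$ and maximum degree $\le k+1$, and the marginal probability of $v$ being matched in $\hyper{H}$ under any boundary condition equals the root marginal in $\TSAW(\hyper{H},v)$ under the pulled-back boundary condition.

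Second, I would use this construction to reduce strong spatial mixing on $\hyper{H}$ to weak spatial mixing on $\TSAW(\hyper{H},v)$. Since every boundary configuration on $\hyper{H}$ (including all partial pinnings that define the SSM regime) lifts to a boundary configuration on the SAW hypertree, the decay rate of SSM on $\hyper{H}$ at radius $t$ is dominated by the worst-case WSM rate at depth $t$ over all hypertrees with edge-size $\le d+1$ and degree $\le k+1$. Hence it suffices to establish that the slowest WSM decay rate within this family of hypertrees is achieved by $\RHtree{d}{k}$.

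Third, I would establish this extremal statement via a monotonicity argument on the root-marginal tree recursion for hypergraph matchings. The recursion expresses the root marginal in terms of the sub-hypertree marginals through a polynomial combination that can be shown to be monotone in each sub-marginal and to have extremal inputs corresponding to the all-unmatched and all-forced-matched boundaries. By this monotonicity, attaching additional branches to a hypertree $\hyper{T}$ (while respecting the degree and edge-size bounds) can only enlarge the worst-case spread of the root marginal over boundaries, and the ``saturated'' hypertree is precisely $\RHtree{d}{k}$; comparing decay rates over $t$ levels then gives the proposition.

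The hardest step is the third one: proving the monotonicity and extremality of boundary configurations for the hypergraph matching recursion. For graph matchings (Heilmann--Lieb) and the hardcore model these properties are classical and rely on the one-parameter nature of the recursion together with extremality of the all-$+$/all-$-$ boundaries; in the hypergraph setting each hyperedge contributes a product of several child marginals, so one must verify that this multi-branch combination preserves the right monotonicity and that the extremal boundaries are still the two natural ones. Once this is confirmed, the reduction in Steps 1 and 2 upgrades the tree statement to the desired statement about arbitrary hypergraphs.
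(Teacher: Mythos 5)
Your high-level blueprint (SAW hypertree, reduce SSM on $\hyper{H}$ to the tree, then show the regular hypertree is extremal) matches the paper's, but two of its load-bearing steps are misdescribed in ways that matter. First, the SAW hypertree construction is wrong as stated: the cycle-closing rule is not ``always fix the revisit endpoint to be unmatched.'' In Weitz's construction---and the paper's hypergraph analogue---a cycle-closing vertex is either deleted (effectively pinned unoccupied) or retained and recursed into, depending on whether the closing hyperedge is ranked higher or lower than the starting hyperedge at the cycle-closing vertex. This case split is precisely what makes the telescopic-product identity $R_{\hyper{H},v}^{\sigma_\Lambda}=\prod_i R_{\hyper{H}^v,v_i}^{\sigma_\Lambda\tau_i}$ hold; pinning every closing vertex unmatched gives a different, incorrect tree. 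Moreover, for hypergraphs one must add a second self-avoidance requirement that has no graph analogue: the new vertex $v_i$ may not be incident to \emph{any} of the earlier hyperedges $e_1,\dots,e_{i-1}$, not just the last one, since two hyperedges can share a vertex. Without this the SAW hypertree would not respect the degree/edge-size bounds and the marginal identity would fail.

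The bigger gap is your third step. Your phrase ``attaching additional branches can only enlarge the worst-case spread'' does not deliver the reduction you need. After passing to the SAW hypertree and then embedding it as a sub-hypertree of $\RHtree{d}{k}$ (with missing vertices pinned unoccupied, as in Proposition~\ref{lem-pru}), what remains is a \emph{strong}-spatial-mixing statement on the regular hypertree: pinnings at intermediate depths, modeled as non-uniform activities $\vec\lambda$ with some $\lambda_u=0$, must be shown to produce a \emph{smaller} spread $R^+_\ell(\vec\lambda)/R^-_\ell(\vec\lambda)$ than the uniform-activity WSM spread $R^+_\ell/R^-_\ell$. This is Theorem~\ref{theorem-srj-sens-de} / Lemma~\ref{lemma-srj-hyp}, and the induction there must carry a \emph{second} invariant $\frac{1+kR^+_\ell(\vec\lambda)}{1+kR^-_\ell(\vec\lambda)} \le \frac{1+kR^+_\ell}{1+kR^-_\ell}$ alongside the obvious one; the two are coupled and the first alone does not close the induction. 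The reason a second invariant is needed is exactly the new hypergraph feature: each hyperedge contributes a factor $\bigl(1+\sum_{j=1}^k R_{ij}\bigr)^{-1}$---a \emph{sum} over its $k$ children, not (as you write) a ``product of several child marginals''---and controlling the ratio of such sums forces the $1+kR$ comparison. Your proposal recognizes that step~3 is hard, but it does not identify this coupled two-inequality induction, which is the essential new technical content for the hypergraph case.
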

We construct a hypergraph version of Weitz's self-avoiding walk tree. Then we show that weak spatial mixing on the uniform regular hypertree implies strong spatial mixing on all smaller hypertrees by a step-by-step comparison of correlation decay. This was the original approach used by Weitz for the hardcore model~\cite{weitz2006counting}. Compared to the more recent potential method~\cite{li2013correlation,li2012approximate,restrepo2013improved,sinclair2014approximation,sinclair2013spatial,liu2013fptas}, this method of analyzing the decay of correlation has the advantage in dealing with the critical case.

On the other hand, due to a simple reduction from the inapproximability of the hardcore model in the non-uniqueness regime~\cite{sly2014counting}, we have the following hardness result.

\begin{theorem}\label{theorem-main-hardness}
If $\lambda > \frac{2k+1+(-1)^k}{k+1}\lambda_c\approx2\lambda_c$, then there is no PRAS for the partition function or the log-partition function for the family of hypergraphs stated in Theorem~\ref{theorem-main}, unless NP=RP.
\end{theorem}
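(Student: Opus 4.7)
The plan is a simple gadget reduction from the hardness of the hardcore model at non-uniqueness. For $d=1$ the hypertree threshold $\lambda_c$ is infinite and there is nothing to show, so assume $d\ge 2$. By~\cite{sly2014counting} (and~\cite{galanis2012inapproximability} for the log-partition analogue), for every $\lambda' > \lambda_c^{\mathsf{IS}}(d) \defeq d^d/(d-1)^{d+1}$ there is no PRAS, unless NP$=$RP, for either $Z^{\mathsf{IS}}_{\lambda'}(G)$ or $\log Z^{\mathsf{IS}}_{\lambda'}(G)$ on $(d+1)$-regular graphs $G$. I will transport this hardness to the family of hypergraphs with maximum edge size $d+1$ and maximum degree $k+1$ via a multi-copy gadget whose multiplicity is chosen to saturate the available degree budget.

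Set $m \defeq \lceil k/2\rceil$, so that $2m\le k+1$. Given an input $(d+1)$-regular graph $G$, I would construct a hypergraph $\hyper{H}_G$ as follows: put one vertex $x_e$ in $\hyper{H}_G$ for each edge $e\in E(G)$, and for each $v\in V(G)$ add $m$ parallel hyperedges $e_v^1,\dots,e_v^m$, each equal to $\{x_e : e\ni v\}$. Every hyperedge has size $d+1$, and each vertex $x_{uv}$ lies in the $m$ copies at $u$ together with the $m$ copies at $v$, so has degree exactly $2m\le k+1$; hence $\hyper{H}_G$ belongs to the family of Theorem~\ref{theorem-main} and is built in polynomial time.

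The crucial identity comes from a bijection between matchings of $\hyper{H}_G$ and pairs $(I,\sigma)$ with $I$ an independent set of $G$ and $\sigma : I\to [m]$: two hyperedges $e_v^i, e_v^j$ with $i\ne j$ share all $d+1$ vertices, two hyperedges $e_u^i, e_v^j$ with $uv\in E(G)$ share $x_{uv}$, and any other pair is disjoint. Summing weights gives
\[
Z_\lambda(\hyper{H}_G) \;=\; \sum_{I\text{ indep.\ in }G} m^{|I|}\lambda^{|I|} \;=\; Z^{\mathsf{IS}}_{m\lambda}(G),
\]
and hence also $\log Z_\lambda(\hyper{H}_G) = \log Z^{\mathsf{IS}}_{m\lambda}(G)$. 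A PRAS for either quantity on $\hyper{H}_G$ at activity $\lambda$ therefore yields a PRAS for the corresponding hardcore quantity on $G$ at activity $m\lambda$. The non-uniqueness hardness regime $m\lambda > \lambda_c^{\mathsf{IS}}(d) = k\lambda_c$ rearranges to $\lambda > (k/m)\lambda_c$, and a parity case check (with $m=(k+1)/2$ for odd $k$ and $m=k/2$ for even $k$) verifies $k/m = (2k+1+(-1)^k)/(k+1)$, matching the theorem's threshold exactly.

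The only genuinely nontrivial step is the parity arithmetic above; the construction and the matching/independent-set bijection are routine. The residual factor-of-roughly-two gap between $\lambda_c$ and the hardness threshold is inherent to this class of reductions, since the per-vertex multiplicity $m$ cannot exceed half the degree budget $k+1$; closing this gap via a sharper gadget appears to run into a fundamental obstruction, which is precisely the non-realizability phenomenon the paper's subsequent sections develop through local convergence and reversibility.
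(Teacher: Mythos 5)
Your reduction is the hypergraph dual of the paper's: the paper reduces the hardcore model on a graph $G$ of maximum degree $d+1$ to hypergraph \emph{independent sets} by replacing each vertex $v$ of $G$ with $t=\lfloor(k{+}1)/2\rfloor$ copies $w_{v,1},\dots,w_{v,t}$ and each edge $e=(u,v)$ with a single hyperedge $\{w_{u,1},\dots,w_{u,t},w_{v,1},\dots,w_{v,t}\}$ of size $2t\le k+1$, giving $Z^{\mathsf{IS}}_\lambda(\hyper{H})=Z^{\mathsf{IS}}_{t\lambda}(G)$, whereas you pass to the dual matching formulation with one vertex $x_e$ per edge of $G$ and $m=\lceil k/2\rceil=t$ parallel hyperedges per vertex of $G$, giving the same identity. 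The two constructions are literally hypergraph duals of each other and yield the same threshold, so this is the paper's proof; the one place to be careful is that your $m$ parallel hyperedges $e_v^1,\dots,e_v^m$ are identical as vertex subsets, so you implicitly need to allow multi-hyperedges---the paper's formulation sidesteps this by blowing up vertices (which are genuinely distinct) rather than hyperedges, which is why it chooses to phrase the gadget on the independent-set side and invoke duality only at the end.
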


\begin{figure}
\centering
\includegraphics[scale=.6]{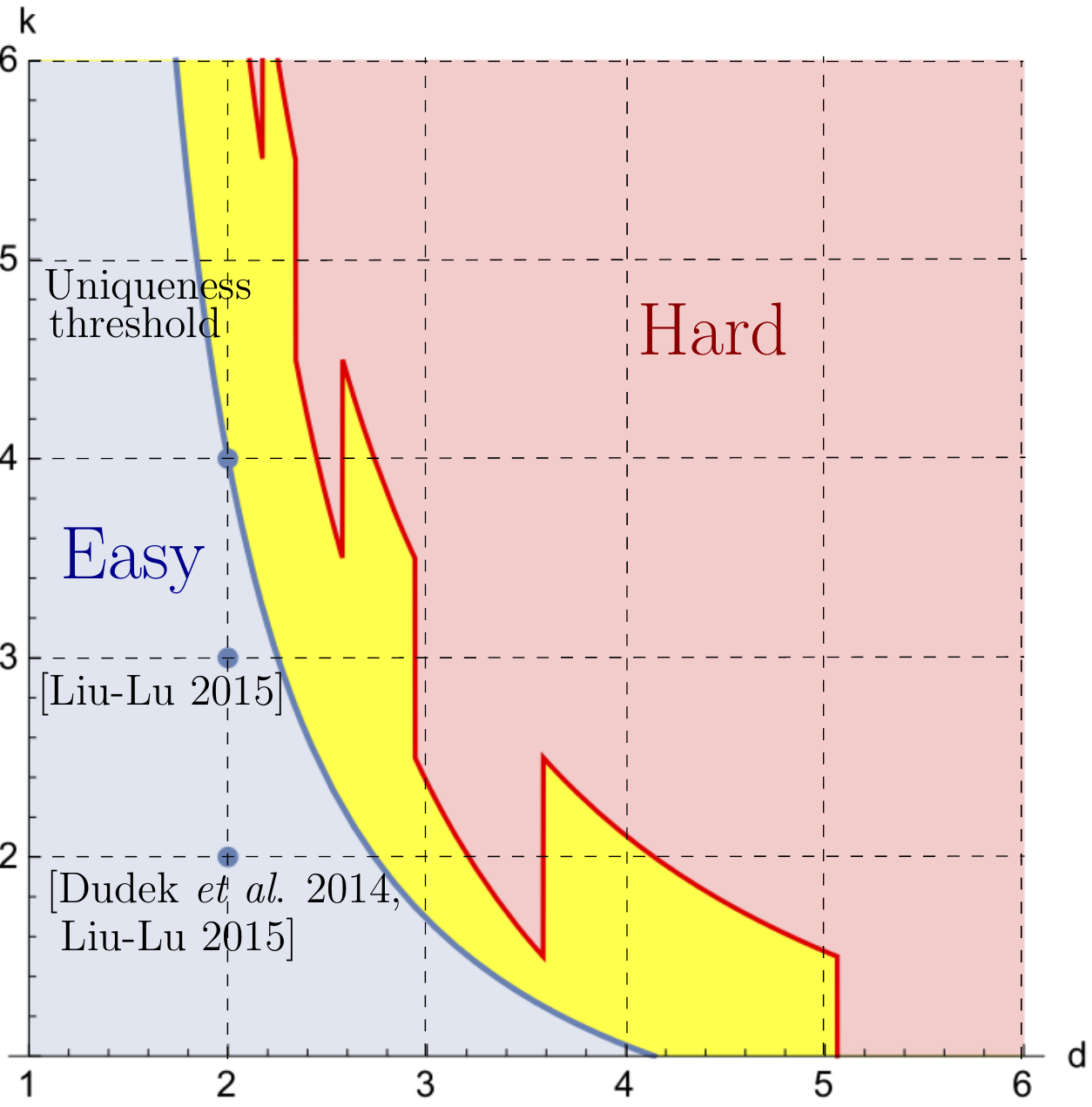}
\caption{\footnotesize{The classification of computational complexity of approximately counting matchings in hypergraphs of max-degree $(k+1)$ and max-edge-size $(d+1)$ when $\lambda=1$. The blue curve is the uniqueness threshold. The non-continuity of the red curve is due to rounding.}}
\label{fig:classification}
\end{figure}

Figure~\ref{fig:classification} illustrates the classification of approximability of counting hypergraph matchings when $\lambda=1$. Each integral point $(d,k)$ corresponds to the problem of approximately counting matchings in hypergraphs of max-degree $(k+1)$ and max-edge-size $(d+1)$. The landscape will continuously change when $\lambda$ changes.

It is worth noticing that in our reduction the hard instances contain many small cycles, while from the algorithmic side the worst cases for the decay of correlation are trees. This obvious inconsistency between upper and lower bounds and the \textit{ad hoc} nature of the simple reduction seem to suggest that the current hardness threshold is not optimal.

We then explore the possibility of bringing the current hardness threshold from $\approx2\lambda_c$ down to the phase-transition threshold $\lambda_c$. 
We discover a reason why getting the exact transition of approximability could be so challenging for this model on hypergraphs.

To state our discovery, let us first review the current approach for establishing computational phase transition for approximate counting~\cite{dyer2002counting, mossel2009hardness, sly2010computational, sly2014counting, galanis2012improved, galanis2012inapproximability,galanis2014inapproximability}, which consists of two main steps:
\begin{itemize}
\item (from all infinite measures to finitely many infinite measures)
The uniqueness threshold $\lambda_c(\mathbb{T}_d)$ for the Gibbs measure on the infinite regular tree $\mathbb{T}_d$ is achieved by a sub-family of Gibbs measures with simple structure: the Gibbs measures that are invariant under a group $\group{G}$ of automorphisms on $\mathbb{T}_d$. For the hardcore model, these are the so-called \concept{semi-translation invariant} Gibbs measures, which are invariant under parity-preserving automorphisms on $\mathbb{T}_d$, and the threshold $\lambda_c(\mathbb{T}_d)$ for the uniqueness of all Gibbs measures on $\mathbb{T}_d$ is the same as the threshold $\lambda_c(\mathbb{T}_d^{\group{G}})$ for the uniqueness of only those Gibbs measures that are invariant under the group $\group{G}$ of parity-preserving automorphisms. 
\item (from finitely many infinite measures to finite measures)
A sequence of (possibly random) finite graphs $G_n$ is constructed to converge locally to $\mathbb{T}_d^\group{G}$, the infinite tree $\mathbb{T}_d$ equipped with the symmetry specified by group $\group{G}$. For the hardcore model, and more generally antiferromagnetic spin systems, $G_n$ are the random regular bipartite graphs~\cite{dyer2002counting, mossel2009hardness, sly2010computational, sly2014counting, galanis2012improved, galanis2012inapproximability,galanis2014inapproximability}, which converge locally to the infinite tree $\mathbb{T}_d$ respecting the symmetry between vertices of the same parity. The ``random''  and ``regular'' parts in this construction guarantee to preserve the local tree structure in distribution, while the bipartiteness respects the parity of vertices.
\end{itemize}
For the model of hypergraph matchings, the first step follows. We show that there indeed is a group $\widehat{\group{G}}$ of automorphisms on the infinite $(d+1)$-uniform $(k+1)$-regular hypertree $\RHtree{d}{k}$ such that $\lambda_c(\RHtree{d}{k})=\lambda_c(\RHtree{d}{k}^{\widehat{\group{G}}})$, i.e.~the uniqueness of Gibbs measure on $\RHtree{d}{k}$ is represented precisely by the uniqueness of only those Gibbs measures invariant under $\widehat{\group{G}}$. This gives a natural generalization of semi-translation Gibbs measures to the hypergraph model.

However, we show that there does not exist \emph{any} sequence of (deterministic or random) finite hypergraphs that converge locally to $\RHtree{d}{k}^{\widehat{\group{G}}}$ unless $k=1$ where the model degenerates to the hardcore model on graphs. In fact, we give a complete characterization of the symmetry described by a group ${\group{G}}$ of automorphisms on $\RHtree{d}{k}$ that there exists a sequence of finite hypergraphs that converge locally to $\RHtree{d}{k}^{\group{G}}$.

\begin{theorem}\label{theorem-main-local-convergence}
Let $\group{G}$ be a group of automorphisms on $\RHtree{d}{k}$ with finitely many orbits.
There exist a sequence of random finite hypergraphs $\hyper{H}_n$ that converge locally to $\RHtree{d}{k}^{\group{G}}$
if and only if the uniform random walk on $\RHtree{d}{k}$ projected onto the orbits of $\group{G}$ is reversible.
\end{theorem}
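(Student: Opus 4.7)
The plan is to translate between the reversibility of the projected random walk and the symmetry of hyperedge incidence in any finite hypergraph, using the orbit partition as a common labeling on vertices.

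For the necessity direction, suppose $\hyper{H}_n$ converges locally to $\RHtree{d}{k}^{\group{G}}$. Since $\group{G}$ has only finitely many orbits $O_1,\dots,O_m$, each orbit is determined by the isomorphism type of a sufficiently large rooted neighborhood in $\RHtree{d}{k}^{\group{G}}$, so the local limit assigns an orbit label to each vertex of $\hyper{H}_n$ and the limiting fractions $\pi(O_i)$ of vertices labeled $O_i$ exist. I would then count in $\hyper{H}_n$ the quantity $N_n(i,j)$, the number of ordered pairs $(u,v)$ of distinct vertices sharing a common hyperedge with labels $O_i$ and $O_j$ respectively. Hyperedge incidence is symmetric, so $N_n(i,j)=N_n(j,i)$ exactly. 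Dividing by $|V(\hyper{H}_n)|$ and passing to the limit yields the detailed balance identity $\pi(O_i)P(O_i,O_j)=\pi(O_j)P(O_j,O_i)$ for the projected walk, which is precisely its reversibility.

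For sufficiency, assume the projected walk is reversible with stationary measure $\pi$ on orbits. I would build $\hyper{H}_n$ by a configuration-model-style random construction. For each orbit $O_i$ introduce about $n\pi(O_i)$ vertices of type $O_i$, each carrying $k+1$ labeled ``half-hyperedge stubs''; each stub label encodes, for a representative vertex of orbit $O_i$ in $\RHtree{d}{k}^{\group{G}}$, the orbit of the incident hyperedge together with the position of the vertex within it. A random hypergraph is then formed by uniformly matching stubs of compatible labels into hyperedges of the prescribed orbit types. Reversibility is precisely the balance equation ensuring that, for every stub label, the total supply and demand are asymptotically equal, so the matching exists with high probability. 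Local convergence to $\RHtree{d}{k}^{\group{G}}$ then follows from standard configuration-model arguments: finite-radius neighborhoods of a uniformly random root are asymptotically tree-like with the correct orbit-labeled distribution.

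The main obstacle lies in the sufficiency direction: one must show that reversibility is exactly the stub-balance condition and that nothing more is needed. In particular, one has to verify that the balance equations on stubs of different orbit labels can be solved simultaneously at every hyperedge-orbit, that no higher-order parity or compatibility constraints intervene, and that the random matching produces an honest hypergraph of maximum edge size $d+1$ and maximum degree $k+1$ with high probability. Once the stub-counting bookkeeping is in place, the local convergence itself follows the classical paradigm from the theory of the configuration model; the conceptual novelty is the identification of the reversibility of the projected random walk as the right global constraint that the local orbit labels have to satisfy.
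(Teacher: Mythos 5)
Your necessity direction contains a genuine gap. You count ordered pairs of \emph{vertices} sharing a hyperedge and conclude, from $N_n(i,j)=N_n(j,i)$, that the projected walk is reversible. But the orbits of $\group{G}$ partition \emph{both} the vertices and the hyperedges, and the projected walk is the bipartite chain
$\boldsymbol{P}=\begin{bmatrix}\boldsymbol{0} & \frac{1}{d+1}\boldsymbol{D}\\ \frac{1}{k+1}\boldsymbol{K} & \boldsymbol{0}\end{bmatrix}$
whose states include the hyperedge orbits. Counting vertex--vertex co-occurrences only constrains the \emph{two-step} vertex-to-vertex chain with transition matrix proportional to $\boldsymbol{D}\boldsymbol{K}$, and reversibility of that two-step chain is strictly weaker than reversibility of the bipartite chain. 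In fact, this distinction is precisely the point of the theorem: for the paper's central example $\widehat{\boldsymbol{D}}=\left[\begin{smallmatrix}1 & d\\ d & 1\end{smallmatrix}\right]$, $\widehat{\boldsymbol{K}}=\left[\begin{smallmatrix}k & 1\\ 1 & k\end{smallmatrix}\right]$, the product $\widehat{\boldsymbol{D}}\widehat{\boldsymbol{K}}=\left[\begin{smallmatrix}k+d & 1+dk\\ dk+1 & d+k\end{smallmatrix}\right]$ is symmetric, so the two-step vertex chain \emph{is} reversible, while the bipartite chain $\widehat{\boldsymbol{P}}$ is not (unless $dk=1$). Your argument would therefore fail to detect exactly the non-realizability phenomenon that the theorem is built around.

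The correct double counting is of vertex--hyperedge \emph{incidences}. For types $(i,j)$, counting incidences from the vertex side gives $d_{ij}(|V_{n,i}|+o(1))$ while counting from the hyperedge side gives $k_{ji}(|E_{n,j}|+o(1))$; equating and letting $n\to\infty$ yields $|E_{n,j}|/|V_{n,i}|\to d_{ij}/k_{ji}$, which is the bipartite detailed balance $p_i d_{ij}=q_j k_{ji}$ with $p_i, q_j$ proportional to the limiting orbit-class sizes. This requires tracking the hyperedge orbit labels, which your argument discards. Your sufficiency direction, by contrast, is on the right track and matches the paper's construction in spirit: the paper realizes the configuration-model idea by fixing $\lceil p_s n\rceil$ vertices and $\lceil q_t n\rceil$ hyperedges of each type and randomly matching compatible stubs via a uniform permutation, with the stub-balance $\lceil d_{st}p_s n\rceil=\lceil k_{ts}q_t n\rceil$ supplied by reversibility; local tree convergence then follows from a standard branching-process argument.
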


See Theorem~\ref{theorem-local-converge} and its proof for more details of Theorem~\ref{theorem-main-local-convergence}.

\paragraph*{Discussion.} To summarize our discoveries for the model of hypergraph matchings:  
\begin{itemize}
\item
Theorem~\ref{theorem-main} implicitly but rigorously shows that the worst case for the decay of correlation among a family of hypergraphs with bounded maximum degree and bounded maximum edge-size, is achieved by the infinite uniform regular hypertree. 
\item
However, in the current inapproximability stated by Theorem~\ref{theorem-main-hardness}, the hard instances are not locally tree-like, but rather, the gadgets locally converge to an infinite hypergraph which is not a hypertree (see Section~\ref{sec:algorithms}).
\item
And finally, Theorem~\ref{theorem-main-local-convergence} gives an explanation of this inconsistence between upper and lower bounds: the extremal case for the decay of correlation in Theorem~\ref{theorem-main}, which is achieved by an infinite-hypertree measure, can never be realized by \emph{any} finite hypergraphs.\footnote{In fact, aided by numerical simulations, so far we have not encountered any family of measures on the infinite uniform regular hypertree $\RHtree{d}{k}$ realizable by finite hypergraphs, whose uniqueness threshold is below $2\lambda_c$. This seems to provide some empirical evidence for that on finite hypergraphs, the worst case for uniqueness might not be locally tree-like.}
\end{itemize}
Altogether, these discoveries deliver the following very interesting message:
In order to establish a sharp connection between computational complexity of approximate counting and phase transitions for hypergraph matchings or other more general models, a more fine-grained definition of uniqueness on finite graphs is necessary.

\paragraph*{Remark on exposition.}
For convenience of visualizing the results, all our results in the rest of the paper are presented for \emph{independent sets} in the dual hypergraphs.
Note that matchings are equivalent to independent sets under hypergraph duality. The only effect of duality on a family of hypergraphs with bounded maximum edge size and bounded maximum degree is to switch the bounds on the edge size and the degree.
We emphasize that our notion of hypergraph independent set is different from the more popular definition used in~\cite{bezakova2015counting, bordewich2008path}. We call a vertex subset $I\subseteq V$ in a hypergraph $\hyper{H}=(V,E)$ an independent set if no two vertices in $I$ are contained in the same hyperedge, while in~\cite{bezakova2015counting, bordewich2008path}, an $I\subseteq V$ is an independent set if it does not contain any hyperedge as subset.

\paragraph*{Related works.}
Approximate counting of hypergraph matchings was studied in~\cite{karpinski2013approximate} for hypergraphs with restrictive structures, and in~\cite{liu2013fptas, dudek2014approximate} for hypergraphs with bounded edge size and maximum degree. In~\cite{bordewich2008path, lu2015fptas}, approximate counting of a variant of hypergraph independent sets was studied, where the definition of hypergraph independent set is different from ours.
In a very recent breakthrough~\cite{bezakova2015counting}, FPTAS for this problem is obtained when there is no strong spatial mixing. In~\cite{galanis2015complexity}, the hardness is established for a class of hypergraph models including ours.

The spatial mixing (decay of correlation) is already a widely studied topic in Computer Science, because it may support FPTAS for \#P-hard counting problems. The decay of correlation was established via the self-avoiding walk tree for the hardcore model~\cite{weitz2006counting, sinclair2013spatial}, monomer-dimer model~\cite{bayati2007simple, sinclair2014approximation}, and two-spin systems~\cite{li2013correlation, li2012approximate, sinclair2014approximation}.
Similar tree-structured recursions were employed to prove the decay of correlation for multi-spin systems~\cite{gamarnik2012correlation, lu2013improved, gamarnik2013strong} and more general CSPs~\cite{lin2014simple, lu2014fptas, liu2013fptas}.

\section{Preliminaries}
\label{sec:prelim}
For a \concept{hypergraph} $\hyper{H}=(V,E)$, the \concept{size} of a hyperedge $e\in E$ is its cardinality $|e|$, and the \concept{degree} of a vertex $v\in V$,  denoted by $\deg{v}=\mathrm{deg}_{\hyper{H}}(v)$,  is the number of hyperedges $e\in E$ \concept{incident to} $v$, i.e.~satisfying $v\in e$.
A hypergraph $\hyper{H}$ is \concept{$k$-uniform} if all hyperedges are of the same size $k$, and is \concept{$d$-regular} if all vertices have the same degree $d$.
The \concept{incidence graph} of a hypergraph $\hyper{H}=(V,E)$ is a bipartite graph with $V$ and $E$ as vertex sets on the two sides, such that each $(v,e)\in V\times E$ is a bipartite edge if and only if $v$ is incident to $e$.

A \concept{matching} of hypergraph $\hyper{H}=(V,E)$ is a set $M\subseteq E$ of disjoint hyperedges in $\hyper{H}$. Given an \concept{activity parameter} $\lambda>0$, the \concept{Gibbs measure} is a probability distribution over matchings of $\hyper{H}$ proportional to the weight $w^{\mathsf{M}}_\lambda(M)=\lambda^{|M|}$, defined as $\mu^{\mathsf{M}}_\lambda(M)=w^{\mathsf{M}}_\lambda(M)/Z^{\mathsf{M}}_\lambda(\hyper{H})$, where the normalizing factor $Z^{\mathsf{M}}_\lambda(\hyper{H})=\sum_{M}w^{\mathsf{M}}_{\lambda}(M)$ is the partition function.

Similarly, an \concept{independent set} of hypergraph $\hyper{H}=(V,E)$ is a set $I\subseteq V$ of vertices satisfying $|I\cap e|\le 1$ for all hyperedges $e$ in $\hyper{H}$. The \concept{Gibbs measure} over independent sets of $\hyper{H}$ with activity $\lambda>0$ is given by
\begin{align}\label{IS-measure}
\mu^{\mathsf{IS}}_\lambda(I)=\frac{w^{\mathsf{IS}}_\lambda(I)}{Z^{\mathsf{IS}}_\lambda(\hyper{H})}=\frac{\lambda^{|I|}}{Z^{\mathsf{IS}}_\lambda(\hyper{H})},
\end{align}
where the normalizing factor $Z^{\mathsf{IS}}_\lambda(\hyper{H})=\sum_{I}w^{\mathsf{IS}}_{\lambda}(I)$ is the partition function for independent sets of $\hyper{H}$ with activity $\lambda$.

Independent sets and matchings are equivalent under hypergraph duality. The \concept{dual} of a hypergraph $\hyper{H}=(V,E)$, denoted by $\hyper{H}^*=(E^*,V^*)$, is the hypergraph whose vertex set is denoted by $E^*$ and edge set is denoted by $V^*$, such that every vertex $v\in V$ (and every hyperedge $e\in E$) in $\hyper{H}$ is one-to-one corresponding to a hyperedge $v^*\in V^*$ (and a vertex $e^*\in E^*$), such that $e^*\in v^*$ if and only if $v\in e$.
Note that under duality, matchings and hypergraphs are the same CSP and hence result in the same Gibbs measure, which remains to be true even with activity $\lambda$. Also a family of hypergraphs of bounded maximum edge size and bounded maximum degree is transformed under duality to a family of hypergraphs with the bounds on the edge size and degree exchanged.

\begin{remark}\label{remark-duality}
With the above equivalence under duality, from now on we state all our results in terms of the independent sets in the dual hypergraph and omit the superscript $\cdot^{\mathsf{IS}}$ in notations. 
\end{remark}

Given the Gibbs measure over independent sets of hypergraph $\hyper{H}$ and a vertex $v$, we define the \concept{marginal probability} $p_{v}$ as 
\[
p_{v}
=p_{\hyper{H},v} 
=\Pr[v\in I ]
\]
which is the probability that $v$ is in an independent set $I$ sampled from the Gibbs measure (such a vertex is also said to be \emph{occupied}). Given a vertex set $\Lambda\subset V$, a \concept{configuration} is a $\sigma_\Lambda\in\{0,1\}^\Lambda$ which corresponds to an independent set $I_\Lambda$ partially specified over $\Lambda$ such that $\sigma_\Lambda(v)$ indicates whether a $v\in\Lambda$ is occupied by the independent set.
We further define the marginal probability $p_{\hyper{H},v}^{\sigma_\Lambda} $ as
\[
p_{v}^{\sigma_\Lambda}
=p_{\hyper{H},v}^{\sigma_\Lambda}
=\Pr[v\in I \mid I_\Lambda=\sigma_{\Lambda}]
\]
which is the probability that $v$ is occupied under the Gibbs measure conditioning on the configuration of vertices in $\Lambda\subset V$ being fixed as $\sigma_\Lambda$.

\begin{definition}
The independent sets of a finite hypergraph $\hyper{H}=(V,E)$ with activity $\lambda>0$ exhibit \concept{weak spatial mixing (WSM)} with rate $\delta : \mathbb{N} \to \mathbb{R}^+$ if for any $v \in V$, $\Lambda \subseteq V$, and any two configurations $\sigma_\Lambda, \tau_\Lambda\in\{0,1\}^{\Lambda}$ which correspond to two independent sets partially specified on $\Lambda$,
\[
\left| p_{v}^{\sigma_\Lambda} - p_{v}^{\tau_\Lambda} \right|
\leq \delta(\mathrm{dist}_{\hyper{H}}(v, \Lambda)),
\]
where $\mathrm{dist}_{\hyper{H}}(v, \Lambda)$ is the shortest distance between $v$ and any vertex in $\Lambda$ in hypergraph $\hyper{H}$.
\end{definition}

\begin{definition}
The independent sets of a finite hypergraph $\hyper{H}=(V,E)$ with activity $\lambda>0$ exhibit \concept{strong spatial mixing (SSM)} with rate $\delta : \mathbb{N} \to \mathbb{R}^+$ if for any $v \in V$, $\Lambda \subseteq V$, and any two configurations $\sigma_\Lambda, \tau_\Lambda\in\{0,1\}^{\Lambda}$ which correspond to two independent sets partially specified on $\Lambda$,
\[
\left| p_{v}^{\sigma_\Lambda} - p_{v}^{\tau_\Lambda} \right|
\leq \delta(\mathrm{dist}_{\hyper{H}}(v, \Delta)),
\]
where $\Delta \subseteq \Lambda$ stands for the subset on which $\sigma_\Lambda$ and $\tau_\Lambda$ differ and $\mathrm{dist}_{\hyper{H}}(v, \Delta)$ is the shortest distance between $v$ and any vertex in $\Delta$ in hypergraph $\hyper{H}$.
\end{definition}

The definitions of WSM and SSM extend to infinite hypergraphs with the same conditions to be satisfied for every finite region $\Psi\subset V$ conditioning on the vertices in $\partial\Psi$ being unoccupied.

\section{Gibbs measures on the infinite tree}\label{sec:uniqueness}
We follow Remark~\ref{remark-duality} and state our discoveries in terms of independent sets in the dual hypergraphs.
Let $\RHtree{k}{d}$ be the infinite $(k+1)$-uniform $(d+1)$-regular hypertree, whose incidence graph is the infinite tree in which all vertices with parity 0 are of degree $(k+1)$ and all vertices with parity 1 are of degree $(d+1)$.
A probability measure $\mu$ on hypergraph independent sets of $\RHtree{k}{d}$ is \concept{Gibbs} if for any finite sub-hypertree $\hyper{T}$, conditioning $\mu$ upon the event that all vertices on the outer boundary of $\hyper{T}$ are unoccupied gives the same distribution on independent sets of $\hyper{T}$ as defined by~\eqref{IS-measure} with $\hyper{H}=\hyper{T}$.
We further consider the \concept{simple} Gibbs measures satisfying \concept{conditional independence}: Conditioning $\mu$ on a configuration of a subset $\Lambda$ of vertices results in a measure in which the configurations on the components separated by $\Lambda$ are independent of each other. The Gibbs distribution on a finite hypergraph is always simple.
A Gibbs measure  on $\RHtree{k}{d}$ is \concept{translation-invariant} if it is invariant under all automorphisms of $\RHtree{k}{d}$.
Fix an automorphism group $\group{G}$ of $\RHtree{k}{d}$.
A \concept{$\group{G}$-translation-invariant} Gibbs measure on $\RHtree{k}{d}$ is a measure that is invariant under all automorphisms from $\group{G}$.
For example, the \concept{semi-translation-invariant} Gibbs measures on regular tree are invariant under all parity-preserving automorphisms on $\RHtree{1}{d}$.
The natural group actions of $\group{G}$ respectively on vertices and hyperedges partition the sets of vertices and hyperedges into orbits. For example, in the semi-translation-invariant symmetry on regular tree, vertices with the same parity form an orbit.
We will show that $\lambda_c(\RHtree{k}{d})=\frac{d^d}{k(d-1)^{d+1}}$ is the uniqueness threshold for the Gibbs measures on hypergraph independent sets of $\RHtree{k}{d}$. Furthermore,  this uniqueness threshold is achieved by a family of Gibbs measures with simple structure.

\begin{theorem}\label{thm-uniqueness-threshold}
There is always a unique simple translation-invariant Gibbs measure on independent sets of $\RHtree{k}{d}$. Let $\lambda_c=\lambda_c(\RHtree{k}{d})=\frac{d^d}{k(d-1)^{d+1}}$. There is a unique Gibbs measure on $\RHtree{k}{d}$ if and only if $\lambda \le \lambda_c$.
Furthermore, there is an automorphism group $\widehat{\group{G}}$ on $\RHtree{k}{d}$ which classifies all vertices of $\RHtree{k}{d}$ into 2 orbits, such that the threshold for the uniqueness of $\widehat{\group{G}}$-translation invariant Gibbs measures on $\RHtree{k}{d}$, denoted as $\lambda_c(\RHtree{k}{d}^{\widehat{\group{G}}})$, is $\lambda_c(\RHtree{k}{d}^{\widehat{\group{G}}})=\lambda_c(\RHtree{k}{d})$.

\end{theorem}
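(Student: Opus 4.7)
The plan is to reduce the question to a one-dimensional tree recursion on $\RHtree{k}{d}$ and then run a standard dynamical-systems analysis. First I would set up the branching recursion: rooting a sub-branching of $\RHtree{k}{d}$ at a vertex $v$ with $d$ descending hyperedges each containing $k$ other children, and writing $R := \Pr[v\text{ occupied}]/\Pr[v\text{ unoccupied}]$, a direct computation using the independent-set constraint gives $R = F(R')$ with $F(R') := \lambda/(1+kR')^d$, whenever all the children sub-branchings carry a common ratio $R'$. This $F$ is continuous, strictly decreasing, and smooth on $[0,\infty)$ with range $(0,\lambda]$, hence has a unique positive fixed point $R^{*}$; differentiating and substituting $F(R^{*})=R^{*}$ yields $F'(R^{*}) = -dkR^{*}/(1+kR^{*})$, so the marginal stability condition $|F'(R^{*})|\le 1$ is equivalent to $R^{*} \le 1/(k(d-1))$, which together with $\lambda = R^{*}(1+kR^{*})^d$ is in turn equivalent to $\lambda \le \lambda_c = d^d/(k(d-1)^{d+1})$.

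Next I would use these ingredients to separate the regimes. For $\lambda < \lambda_c$, $F\circ F$ is a strict contraction near $R^{*}$ and iterates of $F$ from any starting ratio converge exponentially to $R^{*}$, which gives WSM with exponential rate on $\RHtree{k}{d}$ and hence DLR-uniqueness of the infinite-volume Gibbs measure. For the converse and the $\widehat{\group{G}}$-claim, I would let $\widehat{\group{G}}$ be the subgroup of $\mathrm{Aut}(\RHtree{k}{d})$ preserving the natural bipartition of vertices by distance parity; since the full automorphism group is transitive on vertices and each automorphism either preserves or globally swaps this parity, $\widehat{\group{G}}$ has exactly two orbits on vertices. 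A simple $\widehat{\group{G}}$-invariant Gibbs measure is then parameterised by a pair $(r_0,r_1)$ of occupation ratios on the two orbits, and the local DLR condition at each vertex forces $r_0 = F(r_1)$ and $r_1 = F(r_0)$; thus $\widehat{\group{G}}$-invariant measures correspond bijectively to fixed points of $F\circ F$. When $\lambda > \lambda_c$ we have $|F'(R^{*})| > 1$, so $F\circ F$ acquires two extra fixed points $R_1 < R^{*} < R_2$ forming a $2$-cycle of $F$; these yield two distinct $\widehat{\group{G}}$-invariant Gibbs measures (realised as weak limits of finite-volume measures with all-unoccupied boundaries at depths of opposite parities), giving non-uniqueness on $\RHtree{k}{d}$ and also $\lambda_c(\RHtree{k}{d}^{\widehat{\group{G}}}) = \lambda_c$. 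Finally, the first sentence of the theorem follows separately: a simple translation-invariant Gibbs measure has the same ratio $R$ at every vertex by transitivity of the full automorphism group, and $R = F(R)$ has a unique solution for every $\lambda > 0$, so simple translation-invariant uniqueness holds unconditionally.

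The main obstacle I foresee is the critical case $\lambda = \lambda_c$, where $F'(R^{*}) = -1$ makes linear stability inconclusive and one must extract a strict but sub-exponential contraction of $F\circ F$ from higher-order Taylor data at $R^{*}$. A careful computation should show $(F\circ F)''(R^{*})=0$ together with a suitable sign on $(F\circ F)'''(R^{*})$, forcing polynomial convergence $|R_t - R^{*}| \lesssim 1/\sqrt{t}$ and hence uniqueness with polynomial rate --- the same rate that later drives the PTAS at criticality in Theorem~\ref{theorem-main}. A secondary but mostly routine task is to verify that the above $2$-cycle really promotes to two \emph{distinct} infinite-volume Gibbs measures rather than just two compatible marginal ratios, which is standard via compactness of the space of Gibbs measures together with the alternating boundary-condition construction sketched above.
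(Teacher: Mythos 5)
Your reduction to the tree recursion $F(R)=\lambda/(1+kR)^d$, the fixed-point/stability analysis giving $|F'(R^*)|\le 1\iff\lambda\le\lambda_c$, the contraction argument for $\lambda<\lambda_c$, and the third-order Taylor argument at criticality all agree with what the paper does (Section~\ref{sec:uniqueness} for the reduction, Theorem~\ref{theorem-wsm-IS} and Lemma~\ref{lemma-rate} for the rates; the paper's $f(x)=k\lambda/(1+x)^d$ is your $F$ under $x=kR$). The problem is the group $\widehat{\group{G}}$.

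Your definition of $\widehat{\group{G}}$ as ``the subgroup of $\mathrm{Aut}(\RHtree{k}{d})$ preserving the natural bipartition of vertices by distance parity'' does not work for $k\ge 2$: any hyperedge of size $k+1\ge 3$ contains two vertices $u,w$ that are each at hypergraph distance $1$ from a third vertex $v$ of that hyperedge, yet $u,w$ are also at distance $1$ from each other, so no parity $2$-coloring of the vertices exists. The paper makes exactly this point --- ``For hypertree $\RHtree{k}{d}$ with $k\ge 2$, there are no parity-preserving automorphisms'' --- and instead defines $\widehat{\group{G}}$ through the branching matrices $\widehat{\boldsymbol{D}}=\begin{bmatrix}1 & d\\ d & 1\end{bmatrix}$, $\widehat{\boldsymbol{K}}=\begin{bmatrix}k & 1\\ 1 & k\end{bmatrix}$: each vertex has one ``like'' and $d$ ``unlike'' incident hyperedges, and each hyperedge has $k$ ``like'' vertices and one ``unlike'' vertex. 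Crucially, this coloring is \emph{not} proper in the graph sense: a $+$-vertex has $k-1$ neighbors of its own type. Consequently, the DLR equations for a $\widehat{\group{G}}$-invariant measure do \emph{not} read $r_0=F(r_1)$, $r_1=F(r_0)$ in the raw marginal ratios as you assert; they form the coupled system
\[
p_{\pm} = \lambda(1-p_\pm)^{-d}(1-k p_\pm - p_\mp)(1-p_\pm - k p_\mp)^d
\]
which only becomes the symmetric system $y=f(x),\,x=f(y)$ after the nontrivial change of variables $x=\frac{kp_+}{1-p_--kp_+}$, $y=\frac{kp_-}{1-p_+-kp_-}$. So while the final fixed-point equation you analyze is right, the claimed bijection between $2$-cycles of $F$ and $\widehat{\group{G}}$-invariant Gibbs measures is not established by your argument: you need both the correct group (existence of which is guaranteed by Proposition~\ref{prop-bracnching-matrix}) and the change of variables to justify it.
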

This proves the uniqueness threshold stated in Proposition~\ref{proposition-uniqueness-threshold}.

\subsection{Branching matrices}

The automorphism group ${\group{G}}$ on $\RHtree{k}{d}$ can be described conveniently by a notion of {branching matrices}.
For an automorphism group ${\group{G}}$ on $\RHtree{k}{d}$, the natural group actions of $\group{G}$ respectively on vertices and hyperedges partition the sets of vertices and hyperedges into orbits. Let $\tau_v$ and $\tau_e$ be the respective numbers of orbits for vertices and hyperedges. For each $i\in[\tau_v]$, we say a vertex is of \concept{type-$i$} if it is in the $i$-th orbit for vertices; and the same also applies to hyperedges.
Assuming the symmetry on $\RHtree{k}{d}$ given by automorphism group $\group{G}$, the \concept{hypergraph branching matrices}, or just \concept{branching matrices}, are the following two nonnegative integral matrices:
\[
\boldsymbol{D}=\boldsymbol{D}^{\tau_v\times \tau_e}=[d_{ij}]\quad\mbox{and}\quad \boldsymbol{K}=\boldsymbol{K}^{\tau_e\times \tau_v}=[k_{ji}],
\]
which satisfy that for any $i\in[\tau_v]$ and $j\in[\tau_e]$:
\begin{itemize}
\item every vertex in $\RHtree{k}{d}$ of type-$i$ is incident to precisely $d_{ij}$ hyperedges of type-$j$;
\item every hyperedge in $\RHtree{k}{d}$ of type-$j$ contains precisely $k_{ji}$ vertices of type-$i$.
\end{itemize}
The $\boldsymbol{D}$ and $\boldsymbol{K}$ are transition matrices from vertex-types to hyperedge-types and vice versa in $\RHtree{k}{d}$. The definition can be seen as a hypergraph generalization of the branching matrix for multi-type Galton-Watson tree~\cite{restrepo2013improved}.
Since types (orbits) are invariant under all automorphisms from $\group{G}$, it is clear that the above $\boldsymbol{D}$ and $\boldsymbol{K}$ are well-defined for every automorphism group $\group{G}$ on $\RHtree{k}{d}$ with finitely many orbits. 

\begin{proposition}\label{prop-bracnching-matrix}
Every automorphism group $\group{G}$ on $\RHtree{k}{d}$ with finitely many orbits can be identified by a pair of branching matrices $\boldsymbol{D}$ and $\boldsymbol{K}$ with rules as described above and satisfy: (1) $\sum_{j}d_{ij}=d+1$ and $\sum_{i}k_{ji}=k+1$; (2) $d_{ij}=0$ if and only if $k_{ji}=0$; and (3) $\boldsymbol{D}\boldsymbol{K}$ and $\boldsymbol{K}\boldsymbol{D}$ are irreducible.

Conversely, any pair of nonnegative integral matrices $\boldsymbol{D}$ and $\boldsymbol{K}$ satisfying these conditions are branching matrices for some automorphism group $\group{G}$ on $\RHtree{k}{d}$.

\end{proposition}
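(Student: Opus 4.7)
The plan is to prove both implications by exploiting that $\RHtree{k}{d}$ is a tree: the incidence graph is acyclic and connected, so any locally consistent labelling rules propagate to the whole hypertree with no global conflict, and symmetries can be generated freely subtree by subtree.

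For the forward direction, fix a group $\group{G}$ with $\tau_v$ vertex-orbits and $\tau_e$ hyperedge-orbits. For each $i\in[\tau_v]$ and $j\in[\tau_e]$, pick any type-$i$ vertex $v$ and define $d_{ij}$ to be the number of type-$j$ hyperedges incident to $v$; the count is independent of the choice of $v$ because for any other type-$i$ vertex $v'$ there is $g\in\group{G}$ with $g(v)=v'$, and $g$ preserves both incidence and types. Define $k_{ji}$ symmetrically. Condition~(1) is just the $(d+1)$-regularity and $(k+1)$-uniformity of $\RHtree{k}{d}$. Condition~(2) holds because both $d_{ij}>0$ and $k_{ji}>0$ encode the existence of the same incident pair of a type-$i$ vertex with a type-$j$ hyperedge. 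For~(3), $(\boldsymbol{D}\boldsymbol{K})_{i,i'}>0$ means some type-$i$ and some type-$i'$ vertex share a hyperedge, so the digraph on $[\tau_v]$ associated with $\boldsymbol{D}\boldsymbol{K}$ is strongly connected iff the incidence graph of $\RHtree{k}{d}$ is connected, which it is; the argument for $\boldsymbol{K}\boldsymbol{D}$ is symmetric.

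For the converse, given matrices $\boldsymbol{D},\boldsymbol{K}$ satisfying (1)--(3), I first build a labelling of $\RHtree{k}{d}$ by induction on distance from a fixed root. Pick any root vertex $r$ and any starting type $i_0\in[\tau_v]$, assign $r$ type $i_0$, and propagate: at a type-$i$ vertex $v$ with parent edge (if any) of type $j_0$, label the remaining incident edges so that the total edge-type multiplicities at $v$ are exactly $(d_{i,1},\ldots,d_{i,\tau_e})$; at a type-$j$ hyperedge $e$ with parent vertex of type $i_0'$, label the remaining vertices of $e$ so that the total vertex-type multiplicities in $e$ are $(k_{j,1},\ldots,k_{j,\tau_v})$. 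Condition~(1) makes these counts fit the degree and uniformity, and condition~(2) guarantees that the parent's own type is an admissible slot (so the recursion never stalls). Because the incidence graph is a tree, every vertex and edge is reached by a unique path from $r$, so the local choices never collide and the labelling is globally well-defined.

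Now let $\group{G}$ be the group of label-preserving hypergraph automorphisms of $\RHtree{k}{d}$. Transitivity of $\group{G}$ on each type class follows from the uniqueness (up to labelled isomorphism) of the rooted labelled hypertree generated by the recursion from a given root type: for two type-$i$ vertices $u,v$, the rooted labelled subtrees at $u$ and at $v$ are both uniquely determined by the starting type $i$, and the tree isomorphism between them extends outward, level by level, to a global label-preserving automorphism of $\RHtree{k}{d}$ sending $u$ to $v$. Irreducibility of $\boldsymbol{D}\boldsymbol{K}$ ensures that every declared vertex-type actually appears in the labelling, since otherwise the set of reachable types would be a proper invariant subset yielding a block decomposition of $\boldsymbol{D}\boldsymbol{K}$; the same applies to hyperedge-types via $\boldsymbol{K}\boldsymbol{D}$. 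Hence the orbits of $\group{G}$ are exactly the type classes, and its branching matrices coincide with $\boldsymbol{D}$ and $\boldsymbol{K}$ by construction. The main obstacle I anticipate is exactly this last claim of transitivity: one needs to verify carefully, by tree induction on distance from the root, that the local label-preserving isomorphism between two rooted subtrees of the same root type extends consistently to the entire infinite hypertree, as this is precisely the step that promotes purely local branching data to a genuine global symmetry.
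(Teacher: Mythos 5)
Your proposal is correct and mirrors the paper's own argument: the forward direction derives (1)--(3) by the same invariance-under-$\group{G}$ and connectivity-of-the-incidence-graph reasoning, and the converse builds the labelling by the branching rules and then produces type-preserving automorphisms between same-type vertices by level-by-level extension along the tree. The only cosmetic differences are that you take $\group{G}$ to be the full label-preserving automorphism group rather than the subgroup generated by the maps $\phi_{u\to v}$ (both yield the same orbits), and your phrase ``strongly connected iff the incidence graph is connected'' should be ``follows from'' rather than an equivalence (only the forward implication is used, and it is the true one); neither affects correctness.
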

\begin{proof}
Let $\group{G}$ be an automorphism group on $\RHtree{k}{d}$ with finitely many orbits. It is trivial to see that the branching matrices $\boldsymbol{D}$ and $\boldsymbol{K}$ are well-defined and satisfy $\sum_{j}d_{ij}=d+1$ and $\sum_{i}k_{ji}=k+1$.

A vertex $v$ of type-$i$ is incident to a hyperedge $e$ of type-$j$ if and only if $e$ of type-$j$ contains a vertex $v$ of type $i$, thus $k_{ji} \ne 0$ if and only if $d_{ij} \ne 0$.

The irreducibility of $\boldsymbol{D}\boldsymbol{K}$ and $\boldsymbol{K}\boldsymbol{D}$ follows that of  the matrix $\begin{bmatrix}
\boldsymbol{0} & \boldsymbol{D}\\
\boldsymbol{K} & \boldsymbol{0}
\end{bmatrix}$, which is a consequence to the that every type of vertex and hyperedge is accessible from all other types of vertices and hyperedges, which follows the simple fact that the incidence graph $\RHtree{k}{d}$ is strongly connected.

Conversely, let $\mat{D}$ and $\mat{K}$ be a pair of nonnegative integral matrices satisfying the conditions above. We can start from any vertex  (or hyperedges) $o$ of type-$i$ and construct an infinite hypertree rooted at $o$ with each vertex and hyperedge labeled with the respective type according to the rules specified by the branching matrices $\mat{D}$ and $\mat{K}$. Since $d_{ij}=0$ if and only if $k_{ji}=0$, the construction is always possible.
Since $\sum_{j}d_{ij}=d+1$ and $\sum_{i}k_{ji}=k+1$, the resulting infinite hypertree must be $k$-uniform and $d$-regular.
Since $\mat{D}\mat{K}$ and $\mat{K}\mat{D}$ are irreducible, no matter how we choose the type for the root $o$, the resulting hypertree contains all  types of vertices and hyperedges.

We can then construct an automorphism group $\group{G}$ on $\RHtree{k}{d}$ according with orbits being the types just specified.
For every pair of vertices (or hyperedges) $u,v$ with the same type, by generating the hypertree according to $\mat{D}$, $\mat{K}$ starting from $u$ and $v$ respectively, we obtain an automorphism $\phi_{u \to v}$ on $\RHtree{k}{d}$ which maps $u$ to $v$ and preserves the types of all vertices and hyperedges.
Let $\group{G}= \langle \, \{\phi_{u \to v} \mid \forall u, v \text{ with the same type}\} \, \rangle$ be the group generated from all such automorphisms. Then $\mat{D}$ and $\mat{K}$ are branching matrices for automorphism group $\group{G}$ on $\RHtree{k}{d}$.
\end{proof}

\subsection{Extremal Gibbs measures}\label{subsection-extremal}
Consider a special automorphism group $\widehat{\group{G}}$ on $\RHtree{k}{d}$ defined by the following branching matrices $(\widehat{\boldsymbol{D}},\widehat{\boldsymbol{K}})$.
Assume that there are two vertex-types and two hyperedge-types, both denoted as $\{+,-\}$, and the branching matrices are defined as $\widehat{\boldsymbol{D}}=\begin{bmatrix}
1 & d\\
d & 1
\end{bmatrix}$ and $\widehat{\boldsymbol{K}}=\begin{bmatrix}
k & 1\\
1 & k
\end{bmatrix}$, i.e.:
\begin{enumerate}
\item
every `$\pm$'-vertex is incident to a `$\pm$'-hyperedge and $d$ `$\mp$'-hyperedges;
\item
every `$\pm$'-hyperedge contains $k$ `$\pm$'-vertices and a `$\mp$'-vertex.
\end{enumerate}
See Figure~\ref{fig:coloring} for an illustration.
\begin{figure}
\centering
\includegraphics[scale=.6]{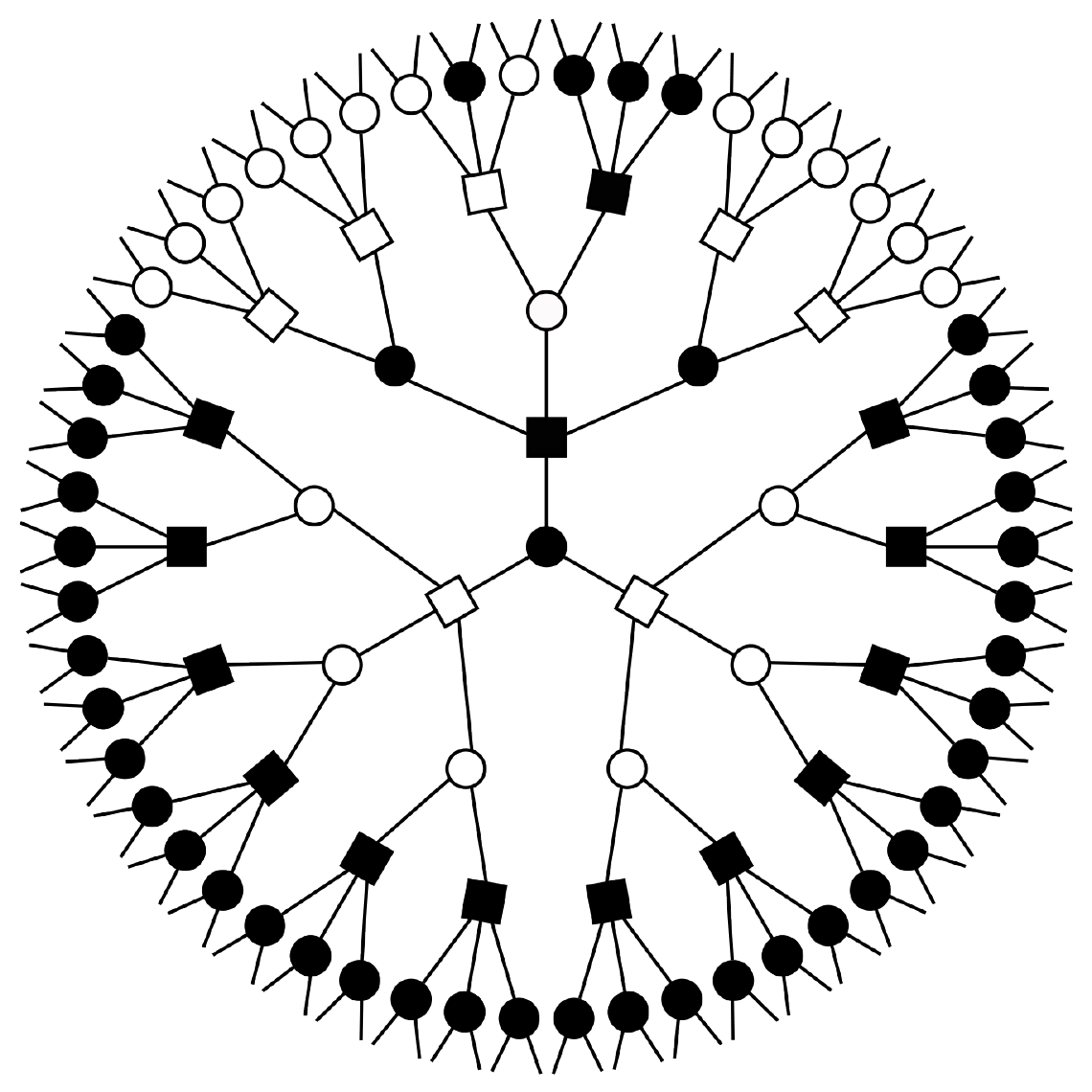}
\caption{\footnotesize{Classifying vertices and hyperedges of $\RHtree{3}{2}$ into two types `$+$'(black) and `$-$'(white). The hypergraph is represented as its incidence graph where circles stand for vertices and squares stand for hyperedges.}}
\label{fig:coloring}
\end{figure} 
Fix a `$+$'-vertex $v$ in $\RHtree{k}{d}$ as the root. Let $\mu^+$ (resp.~$\mu^-$) be the Gibbs measure on $\RHtree{k}{d}$ defined by conditioning on all vertices to be occupied for the $t$-th `$+$'-vertices (resp.~`$-$'-vertices) along all path from the root and taking the weak limit as $t\to\infty$. Note that for the 2-coloring given by $\widehat{\boldsymbol{D}}$ and $\widehat{\boldsymbol{K}}$, on any path any `$\pm$'-vertex has a `$\mp$'-vertex within 2 steps, so the limiting sequence is well-defined. And by symmetry, starting from a root of type-`$-$' gives the same pair of measures.

The $\mu^{\pm}$ generalize the extremal semi-translation-invariant Gibbs measures on infinite regular trees.
For hypertree $\RHtree{k}{d}$ with $k\ge 2$, there are no parity-preserving automorphisms. Nevertheless, the symmetry given by $\widehat{\boldsymbol{D}}$ and $\widehat{\boldsymbol{K}}$ generalizes the parity-preserving automorphisms to hypertrees and has the similar phase-transition as semi-translation-invariant Gibbs measures on trees.

The $\mu^{\pm}$ are simple and are $\widehat{\group{G}}$-translation-invariant for the automorphism group $\widehat{\group{G}}$ with orbits given by $\widehat{\boldsymbol{D}}$ and $\widehat{\boldsymbol{K}}$. In fact, they are extremal $\widehat{\group{G}}$-translation-invariant Gibbs measures on $\RHtree{k}{d}$. We will see that the model has uniqueness if and only if $\mu^+=\mu^-$. 

\subsection{Uniqueness of Gibbs measures}
\begin{lemma}
Let $\mu$ be a simple Gibbs measure on independent sets of $\RHtree{k}{d}$.
Let $v$ be a vertex in $\RHtree{k}{d}$ and $v_{ij}$ the $j$-th vertex (besides $v$) in the $i$-th hyperedge incident to $v$, for $i=1,2,\ldots,d+1$ and $j=1,2,\ldots, k$. Let $p_v=\mu[\,v\text{ is occupied}\,]$ and $p_{v_{ij}}=\mu[\,v_{ij}\text{ is occupied}\,]$. It holds that
\begin{align}\label{eq:regular-tree-marginal}
p_v = \lambda (1 - p_v)^{-d} \prod_{i=1}^{d+1} \left( 1 - p_v - \sum_{j=1}^{k} p_{v_{ij}} \right).
\end{align}
\end{lemma}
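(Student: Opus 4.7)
The plan is to derive the identity by using the Gibbs / simplicity property of $\mu$ to express $p_v$ through the "ground state" event that all of $v$'s hyperedge-neighbors are unoccupied, and then to decompose that event over the $d+1$ hyperedges $e_i$ using branch independence.

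First I would observe that since $v\in I$ forces $v_{ij}\notin I$ for all $i,j$, we have $p_v=\Pr[v\in I,\ v_{ij}\notin I\ \forall i,j]$. Conditioning on all $v_{ij}\notin I$, the Gibbs/simplicity property applied to the single-vertex region $\{v\}$ (whose removal in the hypertree leaves $v$ isolated from the rest, because each $e_i$ loses all its other vertices through the boundary $\{v_{ij}\}$) gives
\[
\Pr[v\in I\mid v_{ij}\notin I\ \forall i,j]=\frac{\lambda}{1+\lambda},
\qquad
\Pr[v\notin I\mid v_{ij}\notin I\ \forall i,j]=\frac{1}{1+\lambda},
\]
since the hyperedge constraints $|e_i\cap I|\le1$ become vacuous once all the $v_{ij}$ are fixed to $0$. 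Multiplying out yields two basic identities:
\[
\Pr[v_{ij}\notin I\ \forall i,j]=\frac{(1+\lambda)\,p_v}{\lambda},
\qquad
\Pr[v\notin I,\ v_{ij}\notin I\ \forall i,j]=\frac{p_v}{\lambda}.
\]

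Next I would decompose the joint event using branch independence. Because $\{v\}$ separates $\RHtree{k}{d}$ into the $d+1$ sub-hypertrees hanging off the hyperedges $e_i$, simplicity gives, conditioned on $v\notin I$, that the events $E_i:=\{v_{ij}\notin I\ \forall j\}$ are mutually independent. Therefore
\[
\Pr[v_{ij}\notin I\ \forall i,j\mid v\notin I]=\prod_{i=1}^{d+1}\Pr[E_i\mid v\notin I]
=\prod_{i=1}^{d+1}\frac{\Pr[|I\cap e_i|=0]}{1-p_v}.
\]
The hyperedge constraint $|I\cap e_i|\le1$ makes the events $\{v\in I\}$ and $\{v_{ij}\in I\}$ pairwise disjoint within $e_i$, so $\Pr[|I\cap e_i|=0]=1-p_v-\sum_{j=1}^{k}p_{v_{ij}}$.

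Combining the two displayed identities with this decomposition gives
\[
\frac{p_v}{\lambda}=\Pr[v\notin I,\,v_{ij}\notin I\,\forall i,j]=(1-p_v)\cdot\prod_{i=1}^{d+1}\frac{1-p_v-\sum_{j=1}^{k}p_{v_{ij}}}{1-p_v}
=\frac{\prod_{i=1}^{d+1}\bigl(1-p_v-\sum_{j=1}^{k}p_{v_{ij}}\bigr)}{(1-p_v)^{d}},
\]
which rearranges to the claimed identity \eqref{eq:regular-tree-marginal}. The only genuinely delicate point—and the step that requires the hypothesis that $\mu$ be a \emph{simple} Gibbs measure—is the branch-independence step, where one must verify that $\{v\}$ really does separate the hypertree into $d+1$ components whose configurations are conditionally independent given $\{v\notin I\}$; this follows because in a hypertree the neighborhoods through distinct incident hyperedges intersect only at $v$ itself, so removing $v$ disconnects the incidence graph. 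Everything else is bookkeeping.
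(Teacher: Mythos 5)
Your proof is correct and follows essentially the same route as the paper's: both apply the Gibbs property to get the $\frac{\lambda}{1+\lambda}$ conditional probability given the neighborhood of $v$ is empty, use simplicity for branch independence across the $d+1$ incident hyperedges conditioned on $v$ being unoccupied, and exploit the mutual exclusivity of occupation events within a hyperedge to compute each branch factor as $1-p_v-\sum_j p_{v_{ij}}$. The only difference is cosmetic: you work directly with $\Pr[v\notin I,\ v_{ij}\notin I\ \forall i,j]=p_v/\lambda$, whereas the paper first expands $\Pr[\text{all neighbors unoccupied}]=p_v+(1-p_v)\prod_i(\cdots)$ and then cancels the $p_v$ term algebraically; the two are a trivial rearrangement of the same computation.
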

\begin{proof}
Since $\mu$ is a Gibbs measure, for any vertex $v$ in $\RHtree{k}{d}$, it holds that
\[
p_v=\mu[\,v\text{ is occupied}\,] = \frac{\lambda}{1+\lambda} \cdot \mu[\,\text{all the neighbors of }v\text{ are unoccupied}\,]
\]
On the other hand, since $\mu$ is simple, conditioning on the root being unoccupied the sub-hypertrees are independent of each other, thus
\begin{align*}
&\mu[\,\text{all the neighbors of }v\text{ are unoccupied}\,] \\
=& \mu[\,v\text{ is occupied}\,]  \cdot \mu[\,\text{all the neighbors of }v\text{ are unoccupied}\mid v\text{ is occupied}\,] \\
&+ \mu[\,v\text{ is unoccupied}\,] \prod_{i=1}^{d+1} \mu[\,\forall 1\le j\le k, v_{ij}\text{ is unoccupied}\mid v\text{ is unoccupied}\,] \\
=& p_v+ (1-p_v) \prod_{i=1}^{d+1} \left(1 - \sum_{j=1}^k \mu[\,v_{ij}\text{ is occupied} \mid v\text{ is unoccupied}\,]\right).
\end{align*}
Note that for any two adjacent vertices $v,v_{ij}$, we have $\mu[\,v_{ij}\text{ is occupied}\,] = \mu[\,v_{ij}\text{ is occupied} \mid v\text{ is unoccupied}\,] \cdot \mu[\,v\text{ is unoccupied}\,] $, thus
\[
\mu[\,v_{ij}\text{ is occupied} \mid v\text{ is unoccupied}\,]=\frac{\mu[\,v_{ij}\text{ is occupied}\,]}{1-\mu[\,v\text{ is occupied}\,]}=\frac{p_{v_{ij}}}{1-p_v}.
\]
The lemma follows by combining everything together.
\end{proof}

Equation~\eqref{eq:regular-tree-marginal} gives an infinite system involving all vertices in $\RHtree{k}{d}$.
If the simple Gibbs measure $\mu$ is $\group{G}$-translation-invariant for some automorphism group $\group{G}$ on $\RHtree{k}{d}$, the marginal probability $p_v=\mu[\,v\text{ is occupied}\,]$ depends only on the type (orbit) of $v$.
\begin{corollary}\label{corollary-regular-tree-marginal-DK}
Let $\mu$ be a simple $\group{G}$-translation-invariant Gibbs measure on $\RHtree{k}{d}$ with branching matrices $\boldsymbol{D}^{\tau_v\times \tau_e}=[d_{ij}]$ and $\boldsymbol{K}^{\tau_e\times \tau_v}=[k_{ji}]$.
For every $i\in[\tau_{v}]$, let $p_i=\mu[\,v\text{ is occupied}\,]$ for vertex $v$ in $\RHtree{k}{d}$ of type-$i$. It holds for every $s\in[\tau_v]$ that
\begin{align*}
p_s = \lambda (1 - p_s)^{-d} \prod_{j\in[\tau_e]}\left( 1 - \sum_{i\in[\tau_v]} k_{ji} \cdot p_{i} \right)^{d_{ij}}.
\end{align*}
\end{corollary}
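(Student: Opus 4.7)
The plan is to specialize the marginal identity from the previous lemma to a type-$s$ vertex and then invoke $\group{G}$-translation invariance together with the definitions of the branching matrices $\mat{D}$ and $\mat{K}$ to collect terms. The whole argument is essentially bookkeeping, so no step is a serious obstacle; the only point that demands any care is correctly tracking that the marginal $p_v$ appearing inside each per-hyperedge factor gets absorbed into one of the $k_{ji}\cdot p_i$ contributions when we pass to the by-type sum.

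Concretely, I would fix an arbitrary vertex $v$ of type-$s$ and apply the previous lemma directly, rewriting each per-hyperedge factor $1-p_v-\sum_{j=1}^k p_{v_{ij}}$ as $1-\sum_{u\in e}p_u$, where $e$ ranges over the hyperedges incident to $v$ (so that the $v$-term is included in the sum). By $\group{G}$-translation invariance, $p_u$ depends only on the type of $u$; hence the left-hand side becomes $p_s$, and for any hyperedge $e$ of type-$j$,
\[
\sum_{u\in e} p_u \;=\; \sum_{i\in[\tau_v]} k_{ji}\cdot p_i,
\]
since by definition of $\mat{K}$ exactly $k_{ji}$ of the $k+1$ vertices of $e$ have type-$i$ (with $v$ itself counted among the $k_{js}$ type-$s$ vertices of $e$). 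This is the step where $p_v$ is absorbed into the by-type sum.

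Finally, I would group the $d+1$ hyperedges incident to $v$ by type using $\mat{D}$: exactly $d_{sj}$ of them are of type-$j$ for each $j\in[\tau_e]$, so the per-hyperedge factors collapse into
\[
\prod_{j\in[\tau_e]}\Bigl(1-\sum_{i\in[\tau_v]} k_{ji}\cdot p_i\Bigr)^{d_{sj}},
\]
which together with the prefactor $\lambda(1-p_s)^{-d}$ gives the claimed identity. No part of the argument uses anything beyond the previous lemma and the defining properties of $\mat{D},\mat{K}$, confirming that the corollary is indeed a straightforward reorganization of the vertex-level recursion along the orbit structure of $\group{G}$.
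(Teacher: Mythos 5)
Your proof is correct and fills in exactly the bookkeeping that the paper leaves implicit (the paper states the corollary with no proof beyond noting that $p_v$ depends only on the type of $v$). The one nontrivial point — rewriting each factor as $1-\sum_{u\in e}p_u$ so that the $p_v$ term is absorbed into the $k_{js}\cdot p_s$ contribution, which is legal because $v\in e$ forces $d_{sj}\ge 1$ and hence $k_{js}\ge 1$ by Proposition~\ref{prop-bracnching-matrix} — is handled correctly. You also implicitly corrected what appears to be a typo in the stated exponent: as written it reads $d_{ij}$ with $i$ a bound index of the inner sum, but the intended exponent is $d_{sj}$, which is what your grouping of the $d+1$ incident hyperedges by type produces.
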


Applying with the branching matrices $\widehat{\boldsymbol{D}}$ and $\widehat{\boldsymbol{K}}$ defined in Section~\ref{subsection-extremal}, the system in Corollary~\ref{corollary-regular-tree-marginal-DK} becomes
\begin{equation*}
\begin{cases}
p_+ = \lambda (1 - p_+)^{-d} (1 - k  \, p_+ - p_-) (1 - p_+ - k \,  p_-)^d, \\
p_- = \lambda (1 - p_-)^{-d} (1 - k  \, p_- - p_+) (1 - p_- - k \,  p_+)^d.
\end{cases}
\end{equation*}
Let $x=\frac{kp_+}{1 - p_- - k \, p_+}$ and $y=\frac{kp_-}{1 - p_+ - k \, p_-}$. The system becomes
$\begin{cases}
y=f(x)\\ 
x=f(y) 
\end{cases}$, where $f(x)=\frac{k\lambda}{(1+x)^d}$ is the hardcore tree-recursion.
Since $f(x)$ is positive and decreasing in $x$, it follows that there is a unique positive $\hat{x}$ such that $\hat{x}=f(\hat{x})$, which means there is always a unique simple translation-invariant Gibbs measure on $\RHtree{k}{d}$. It is well-known (see~\cite{galanis2012inapproximability} and~\cite{kelly1985stochastic,spitzer1975markov}) the system has three distinct  solutions $(\hat{x},\hat{x}),(x^+,x^-)$ and $(x^-,x^+)$ where $0<x^-<\hat{x}<x^+$, when $k\lambda>d^d/(d-1)^{d+1}$, i.e.~$\lambda>\lambda_c(\RHtree{k}{d})=\frac{d^d}{k(d-1)^{d+1}}$; and the three solutions collide into a unique solution $(\hat{x},\hat{x})$ when $\lambda\le \lambda_c(\RHtree{k}{d})$, which means there is a unique simple $\widehat{\group{G}}$-translation-invariant Gibbs measure on $\RHtree{k}{d}$ if and only if $\lambda\le\lambda_c(\RHtree{k}{d})$. 
Recall that $\mu^\pm$ are simple and are extremal $\widehat{\group{G}}$-translation-invariant Gibbs measures, and hence it also holds that $\mu^+=\mu^-$ if and only if $\lambda\le \lambda_c(\RHtree{k}{d})$, therefore, it holds that $\lambda_c(\RHtree{k}{d})=\lambda_c(\RHtree{k}{d}^{\widehat{\group{G}}})$.
In particular if $\lambda> \lambda_c(\RHtree{k}{d})$, then $\mu^+\neq\mu^-$ and the Gibbs measure on $\RHtree{k}{d}$ is non-unique.

To complete the proof of Theorem~\ref{thm-uniqueness-threshold}, we only need to show the Gibbs measure on $\RHtree{k}{d}$ is unique if $\lambda \le \lambda_c(\RHtree{k}{d})$. This is implied by the weak spatial mixing on $\RHtree{k}{d}$ when $\lambda\le\lambda_c$, proved later in Theorem~\ref{theorem-wsm-IS}. 
With the weak spatial mixing on $\RHtree{k}{d}$, the uniqueness of the Gibbs measure is implied by a generic equivalence between weak spatial mixing and uniqueness of Gibbs measure (see e.g.~\cite{weitz2005combinatorial}).

\section{The hypergraph self-avoiding walk tree}
\label{sec:SAW}

We call a hypergraph a hypertree if its incidence graph has no cycles.
Let $\hyper{T}=(V,E)$ be a rooted hypertree with vertex $v$ as its root.
We assume that root $v$ is incident to $d$ distinct hyperedges $e_1,e_2,\ldots,e_d$, such that for $i=1,2,\ldots,d$,

\begin{itemize}
\item $|e_i|=k_i+1$; and
\item $e_i=\{v,v_{i1},v_{i2},\ldots,v_{ik_i}\}$.
\end{itemize}
For $1\le i\le d$ and $1\le j\le k_i$, let $\hyper{T}_{ij}$ be the sub-hypertree rooted at $v_{ij}$. Recall that all hypertrees considered by us satisfy the property that any two hyperedges share at most one common vertex, thus all $v_{ij}$ are distinct and the sub-hypertrees $\hyper{T}_{ij}$ are disjoint.

Let $\Lambda\subset V$. Let $\sigma_\Lambda\in\{0,1\}^{\Lambda}$ be a configuration indicating an independent set partially specified on vertex set $\Lambda$, and for each $1\le i\le d$ and $1\le j\le k_i$, let $\sigma_{\Lambda_{ij}}$ be the restriction of $\sigma_\Lambda$ on the sub-hypertree $\hyper{T}_{ij}$.
Consider the ratios of marginal probabilities:

\begin{align*}
R^{\sigma_\Lambda}_{\hyper{T}}={p_{\hyper{T},v}^{\sigma_\Lambda}}/{\left(1-p_{\hyper{T},v}^{\sigma_\Lambda}\right)}\quad\text{ and }\quad
R^{\sigma_{\Lambda_{ij}}}_{\hyper{T}_{ij}}={p_{\hyper{T}_{ij},v_{ij}}^{\sigma_{\Lambda_{ij}}}}/{\left(1-p_{\hyper{T}_{ij},v_{ij}}^{\sigma_{\Lambda_{ij}}}\right)}.
\end{align*}
The following recursion can be easily verified due to the disjointness between sub-hypertrees:
\begin{align}
R^{\sigma_\Lambda}_{\hyper{T}}
=
\lambda\prod_{i=1}^d\frac{1}{1+\sum_{j=1}^{k_i}R^{\sigma_{\Lambda_{ij}}}_{\hyper{T}_{ij}} }.\label{eq:tree-recursion}
\end{align}
This is the ``tree recursion'' for hypergraph independent sets. The tree recursions for the hardcore model~\cite{weitz2006counting} and the monomer-dimer model~\cite{bayati2007simple} can both be interpreted as special cases.

For general hypergraphs which are not trees, we construct a hypergraph version of \concept{self-avoiding-walk tree}, which allows computing  marginal probabilities in arbitrary hypergraphs with the tree recursion. Moreover, we show that the uniform regular hypertree is the worst case for SSM among all hypergraphs of bounded maximum edge-size and bounded maximum degree.

\begin{theorem}\label{thm-saw}
For any positive integers $k,d$ and any positive $\lambda$, if the independent sets of $\RHtree{k}{d}$ with activity $\lambda$ exhibit strong spatial mixing with rate $\delta(\cdot)$, then the independent sets of any hypergraph of maximum edge size at most $(k+1)$ and maximum degree at most $(d+1)$, with activity $\lambda$,  exhibit strong spatial mixing with the same rate $\delta(\cdot)$.
\end{theorem}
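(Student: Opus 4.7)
The plan is to adapt Weitz's self-avoiding walk construction to hypergraphs. The argument has two main components: construct a hypergraph self-avoiding-walk tree $\TSAW(\hyper{H},v)$ that faithfully represents the marginal at $v$ in $\hyper{H}$ together with any boundary condition, and then reduce strong spatial mixing on this tree to strong spatial mixing on the infinite uniform regular hypertree $\RHtree{k}{d}$.

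For the construction, I would fix an arbitrary total ordering of the hyperedges incident to each vertex of $\hyper{H}$, and unroll the incidence graph of $\hyper{H}$ by depth-first search starting from $v$. When a search path reaches a current vertex $u$ and tries to traverse a hyperedge $e$ that already appears earlier in the path (closing a cycle through some earlier vertex $u'$), truncate the traversal at $u$ and attach a pinned leaf whose state is determined by comparing $e$ to the next hyperedge along the path at $u'$ in the chosen ordering: pinned unoccupied on one side of the comparison and occupied on the other, so that the two opposite traversals of the same cycle yield complementary pins. The resulting hypertree $\TSAW(\hyper{H},v)$ has maximum edge size at most $k+1$ and maximum degree at most $d+1$, and the distance from the root to the image of any vertex $w$ in $\hyper{H}$ is at least $\mathrm{dist}_{\hyper{H}}(v,w)$. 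The marginal identity
\[
p_{\hyper{H},v}^{\sigma_\Lambda} = p_{\TSAW(\hyper{H},v),v}^{\tau},
\]
where $\tau$ combines $\sigma_\Lambda$ (on the tree-copies of vertices in $\Lambda$) with the cycle-closure pins, is then established by induction on the unrolling using the tree recursion~\eqref{eq:tree-recursion}. At a step where a hyperedge $e$ closes a cycle, case-splitting on whether $e$ contributes a matched vertex recovers the two pin values, and the two summands reconstruct the correct marginal in $\hyper{H}$. The cleanest implementation is to work on the incidence graph, which is an ordinary bipartite graph, so that Weitz's original rule applies verbatim and can be translated back to a hypertree.

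For the reduction to $\RHtree{k}{d}$, I embed $\TSAW(\hyper{H},v)$ as a sub-hypertree of $\RHtree{k}{d}$ and pin every additional vertex of $\RHtree{k}{d}$ outside the embedding to unoccupied; since unoccupied vertices impose no constraint, this does not change the marginal at $v$. Any pin to occupied inside $\TSAW(\hyper{H},v)$ can then be converted to pins to unoccupied: an occupied vertex $u$ forces the remaining $k$ vertices in each incident hyperedge to be unoccupied, which (via~\eqref{eq:tree-recursion}, setting the corresponding ratio $R$ to $\infty$) is equivalent to detaching the subtree above $u$ and pinning its vertices to unoccupied in $\RHtree{k}{d}$. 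After these transformations, any two boundary configurations on $\hyper{H}$ become two all-unoccupied boundaries in $\RHtree{k}{d}$ that differ only on images of $\Delta$, all at distance at least $\mathrm{dist}_{\hyper{H}}(v,\Delta)$ from $v$. Strong spatial mixing on $\RHtree{k}{d}$ with rate $\delta$ then delivers $\bigl|p_{\hyper{H},v}^{\sigma_\Lambda}-p_{\hyper{H},v}^{\tau_\Lambda}\bigr|\le \delta\bigl(\mathrm{dist}_{\hyper{H}}(v,\Delta)\bigr)$.

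The main obstacle I expect is making the cycle-closing pin rule precise enough for the marginal identity to hold exactly, because a hypergraph cycle closes through a hyperedge containing several vertices, and one must identify which vertex and which pair of hyperedges is compared so that opposite traversals of a common cycle produce genuinely complementary pins whose two contributions telescope to the correct value. Reducing to the incidence-graph viewpoint circumvents this bookkeeping entirely by appealing to the well-understood bipartite-graph case. A secondary technical point is to verify that pin-to-occupied conditions on the hypertree really can be converted to pin-to-unoccupied conditions without increasing the relevant distance to a disagreement, so that the strong spatial mixing rate $\delta(\cdot)$ on $\RHtree{k}{d}$ is inherited with no loss.
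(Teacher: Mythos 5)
Your overall architecture matches the paper's: build a hypergraph self-avoiding-walk tree $\TSAW(\hyper{H},v)$ that preserves the marginal at $v$, then embed $\TSAW(\hyper{H},v)$ as a sub-hypertree of $\RHtree{k}{d}$ and realize any boundary condition on the sub-hypertree by an extension that pins the extra vertices to unoccupied (this is exactly the paper's Proposition~\ref{lem-pru}). Your side remark about converting pins-to-occupied into pins-to-unoccupied is unnecessary, since SSM on $\RHtree{k}{d}$ is already stated for arbitrary boundary configurations, but it is harmless.

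However, there is a genuine gap in the construction of the SAW tree, and the fallback you propose to patch it does not work. Your rule unrolls the incidence graph by DFS and truncates only when a hyperedge already on the path is revisited, i.e.\ you enforce that the walk $(v_0,e_1,v_1,\ldots,e_\ell,v_\ell)$ is a simple path in the incidence graph. For hypergraphs this is not enough. It can happen that all nodes of the incidence-graph path are distinct, and yet a later vertex $v_i$ is incident to an earlier hyperedge $e_j$ with $j<i$ (the walk re-enters a hyperedge through a new vertex without ever traversing that hyperedge again). In that case $v_i$'s state is constrained through $e_j$ by vertices that lie on the already-traversed portion of the path, so the subtree rooted at $v_i$ is \emph{not} conditionally independent given $v_{i-1}$ and the tree recursion~\eqref{eq:tree-recursion} no longer computes the true marginal. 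The paper handles this with a second, hypergraph-specific requirement in its definition of self-avoiding walk, namely that $v_i$ is incident to \emph{none} of $\{e_1,\ldots,e_{i-1}\}$; the paper explicitly flags this as ``new to the hypergraphs.'' Your construction omits it, so the claimed marginal identity $p_{\hyper{H},v}^{\sigma_\Lambda}=p_{\TSAW(\hyper{H},v),v}^{\tau}$ will fail on hypergraphs where this situation occurs.

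The proposed escape hatch --- ``reduce to the incidence-graph viewpoint so Weitz's original rule applies verbatim'' --- does not repair this. The independent-set model on $\hyper{H}$, read on the incidence graph, is not a two-spin model on that bipartite graph: the vertex-side nodes carry binary states while each hyperedge-side node is an \emph{at-most-one} constraint over its $\le k+1$ neighbors, which is not a pairwise interaction, and it is certainly not the graph independent-set model on the incidence graph. Weitz's bipartite SAW tree is a statement about two-spin systems on graphs and does not transfer verbatim to this CSP; in particular it would not tell you to add the additional no-reentry condition that the hypergraph setting demands. You would still have to invent that condition and re-derive the telescoping argument of the marginal-preservation proof (as the paper does in the proof of Theorem~\ref{thm-peq}), which is exactly the bookkeeping you were hoping to avoid.
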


Under duality, the same holds for the hypergraph matchings.

We then define the hypergraph self-avoiding walk tree.
A walk in a hypergraph $\hyper{H}=(V,E)$ 
is a sequence $(v_0,e_1,v_1,\ldots,e_{\ell},v_\ell)$ of alternating vertices and hyperedges such that every two consecutive vertices $v_{i-1},v_{i}$ are incident to the hyperedge $e_{i}$ between them. A walk $w=(v_0,e_1,v_1,\ldots,e_{\ell},v_\ell)$ is called \concept{self-avoiding} if:
\begin{itemize}
\item $w=(v_0,e_1,v_1,\ldots,e_{\ell},v_\ell)$ forms a simple path in the incidence graph of $\hyper{H}$; and
\item for every $i=1,2,\ldots,\ell$, vertex $v_i$ is incident to none of $\{e_1,e_2,\ldots,e_{i-1}\}$.
\end{itemize}
Note that the second requirement is new to the hypergraphs.

A self-avoiding walk $w=(v_0,e_1,v_1,\ldots,e_{\ell},v_\ell)$ can be extended to a \concept{cycle-closing} walk $w'=(v_0,e_1,v_1,\ldots,e_{\ell},v_\ell,e',v')$ so that  the suffix $(v_{i},e_{i+1},v_{i+1},\ldots,e_{\ell},v_\ell,e',v')$, for some $0\le i\le \ell-1$, of the walk forms a simple cycle in the incidence graph of $\hyper{H}$.
We call $v'$ the cycle-closing vertex.

Given a hypergraph $\hyper{H}=(V,E)$, an ordering of incident hyperedges at every vertex can be arbitrarily fixed, so that for any two hyperedges $e_1,e_2$ incident to a vertex $u$ we use $e_1<_ue_2$ to denote that $e_1$ is ranked higher than $e_2$ according to the ordering of hyperedges incident to $u$.
With this local ordering of hyperedges, given any vertex $v\in V$, a rooted hypertree $\hyper{T}=\TSAW(\hyper{H}, v)$, called the \concept{self-avoiding walk (SAW) tree}, is constructed as follows:
\begin{enumerate}
\item
Every vertex of $\hyper{T}$ corresponds to a distinct {self-avoiding walk} in $\hyper{H}$ originating from $v$,  where the root corresponds to the trivial walk $(v)$.

\item For any vertex $u$ in $\hyper{T}$, which corresponds to a self-avoiding walk $w=(v,e_1, v_1\ldots,e_{\ell},v_\ell)$, we partition all self-avoiding walks $w'=(v,e_1, v_1\ldots,e_{\ell},v_\ell,e',v')$ in $\hyper{H}$ which extends $w$, into sets according to which hyperedge they use to extend the original walk $w$, so that self-avoiding walks within the same sets extends $w$ with the same hyperedge $e'$. For every set, we create a distinct hyperedge in $\hyper{T}$ incident to $u$ which contains the children of $u$ corresponding to the self-avoiding walks within that set.

\item
We further modify the hypertree $\hyper{T}$ obtained from the above two steps according to how cycles are closed. For any vertex $u$ in $\hyper{T}$ corresponding to a self-avoiding walk $w=(v,e_1, v_1\ldots,e_{\ell},v_\ell)$ which can be extended to a cycle-closing walk $w'=(v,e_1, v_1\ldots,e_{\ell},v_\ell,e',v')$ such that $v'\in\{v,v_1,\ldots,v_{\ell-1}\}$, denoted by $e''$ the hyperedge in $w$ starting that cycle, if it holds that $e'<_{v'}e''$, i.e.~the hyperedge ending the cycle is ranked higher than the hyperedge starting the cycle by the cycle-closing vertex, then vertex $u$ along with all its descendants  in $\hyper{T}$ are deleted from $\hyper{T}$. Any hyperedges whose size becomes 1 because of this step are also deleted from $\hyper{T}$.

\end{enumerate}

The construction is illustrated in Figure~\ref{fig:saw}.

\begin{figure}
\centering
\includegraphics[scale=.75]{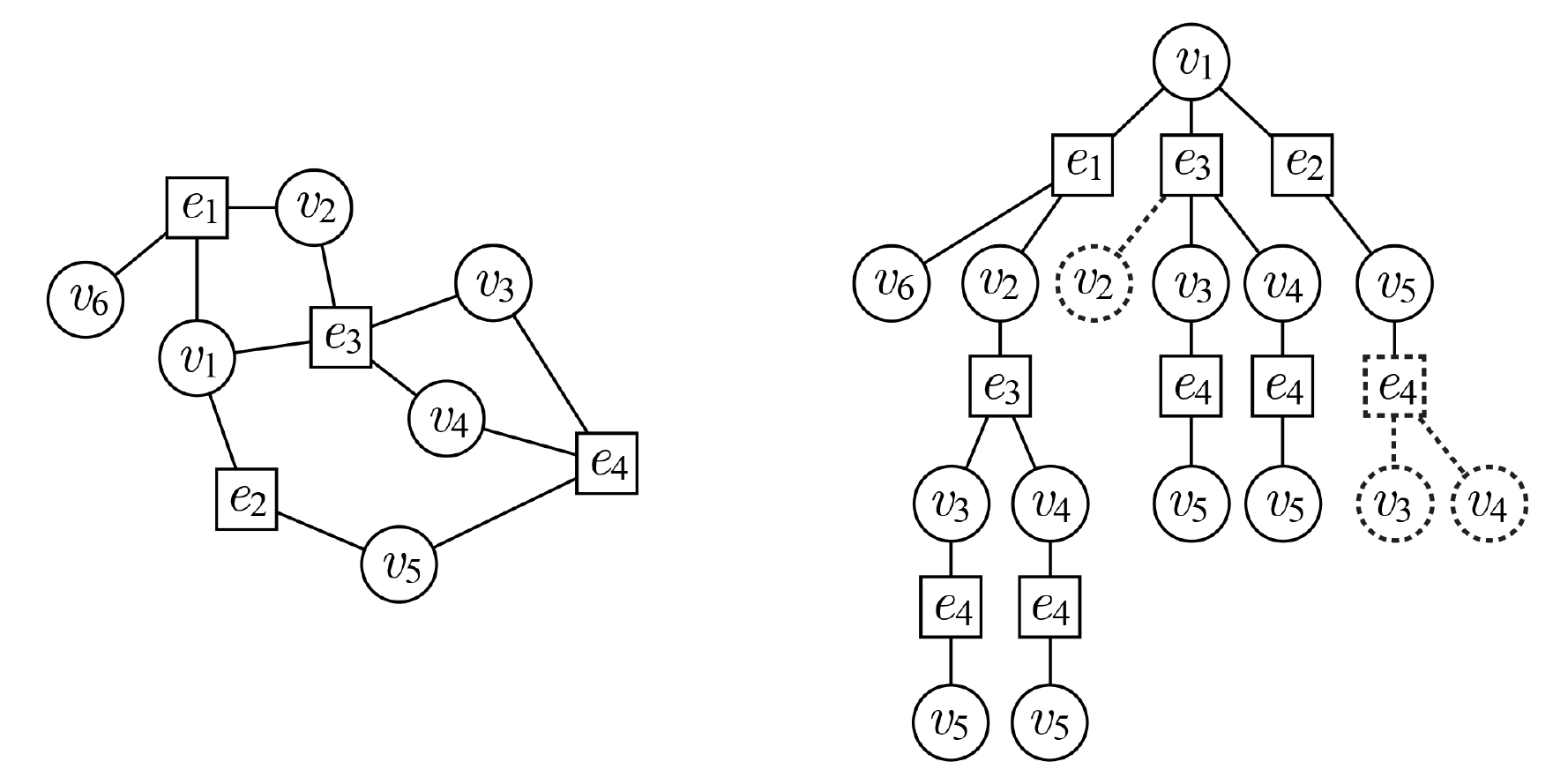}
\caption{\footnotesize{
The construction of $\TSAW$. On the left is a hypergraph $\hyper{H}$ and on the right is $\TSAW(\hyper{H}, v_1)$, both drawn as incident graphs. The ordering of the hyperedges incident to each vertex in $\hyper{H}$ is given by the subscripts. Each vertex or hyperedge in $\TSAW$ is labeled by the name of the vertex or hyperedge to which it is identified in $\hyper{H}$. 
Dashed vertices are the ones deleted according to the ordering of incident hyperedges at the cycle-closing vertices.
Dashed hyperedge is deleted because its size becomes $1$.
}}
\label{fig:saw}
\end{figure}

We consider the Gibbs measure of a rooted hypertree $\hyper{T}$ with activity $\lambda$, and use $\ptree{\hyper{T}}{\sigma_{\Lambda}}$ to denote the marginal probability of the root of $\hyper{T}$ being occupied conditioning on $\sigma_\Lambda$.

Note that each vertex $u$ in $\TSAW(\hyper{H}, v)$ can be naturally identified (many-to-one) to the vertex in $\hyper{H}=(V,E)$ at which the self-avoiding walk corresponding to $u$ ends, thus a configuration $\sigma_\Lambda$ partially specified on a subset $\Lambda\subset V$ of vertices in $\hyper{H}$ can be directly translated to a partially specified configuration in $\TSAW(\hyper{H}, v)$ through the one-to-many association. We abuse the notation and still denote the resulting configuration in $\hyper{T}=\TSAW(\hyper{H}, v)$ as $\sigma_\Lambda$, thus $\ptree{\hyper{T}}{\sigma_{\Lambda}}$ is well-defined.

\begin{theorem}\label{thm-peq}
Let $\hyper{H}=(V,E)$ be a hypergraph and $\lambda>0$. For any $v\in V$, $\Lambda \subseteq V$ and $\sigma_{\Lambda}\in\{0,1\}^\Lambda$, it holds that $p_{\hyper{H},v}^{\sigma _{\Lambda}} = \ptree{\hyper{T}}{\sigma_{\Lambda}}$ where $\hyper{T} = \TSAW(\hyper{H}, v)$.
\end{theorem}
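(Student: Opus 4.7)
The plan is to induct on the number of hyperedges of $\hyper{H}$, reducing both sides to the tree recursion~\eqref{eq:tree-recursion} at the root $v$. The base case is immediate: if $v \in \Lambda$, or if $v$ has no incident hyperedges, then both $p_{\hyper{H},v}^{\sigma_\Lambda}$ and $\ptree{\TSAW(\hyper{H},v)}{\sigma_\Lambda}$ agree trivially (equal to $\sigma_\Lambda(v)$ in the first case and to $\lambda/(1+\lambda)$ in the second).

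For the inductive step, let $e_1 <_v \cdots <_v e_d$ enumerate the hyperedges incident to $v$, with $e_i = \{v, v_{i1}, \ldots, v_{ik_i}\}$. Starting from
\[
\frac{p_{\hyper{H},v}^{\sigma_\Lambda}}{1-p_{\hyper{H},v}^{\sigma_\Lambda}} \;=\; \lambda\cdot\Pr\!\left[\forall i,j\colon v_{ij}\text{ unoccupied} \,\Big|\, v\text{ unoccupied},\,\sigma_\Lambda\right],
\]
and applying the chain rule hyperedge-by-hyperedge in the order induced by $<_v$, and then vertex-by-vertex within each hyperedge, I obtain
\[
R_{\hyper{H},v}^{\sigma_\Lambda} \;=\; \lambda\prod_{i=1}^{d}\frac{1}{1+\sum_{j=1}^{k_i} R_{\hyper{H},v_{ij}}^{\tilde\sigma^{(i,j)}}},
\]
where $R_{\cdot}^{\cdot} = p_{\cdot}^{\cdot}/(1-p_{\cdot}^{\cdot})$ and $\tilde\sigma^{(i,j)}$ extends $\sigma_\Lambda$ by additionally pinning $v$, all $v_{i'j'}$ with $i' < i$, and all $v_{ij'}$ with $j' \neq j$ to be unoccupied. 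Structurally this matches, term for term, the tree recursion~\eqref{eq:tree-recursion} applied to $\hyper{T} = \TSAW(\hyper{H},v)$ at its root, so it suffices to show for each $(i,j)$ that $R_{\hyper{H},v_{ij}}^{\tilde\sigma^{(i,j)}}$ equals the analogous ratio at the child $v_{ij}$ in $\hyper{T}$.

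To verify this I use the standard fact that pinning a vertex to the unoccupied state in an independent-set Gibbs measure is equivalent to deleting it from the hypergraph, which yields $R_{\hyper{H},v_{ij}}^{\tilde\sigma^{(i,j)}} = R_{\hyper{H}^{(i,j)},v_{ij}}^{\sigma_\Lambda}$, where $\hyper{H}^{(i,j)}$ is the sub-hypergraph obtained by excising $v$, all $v_{i'j'}$ with $i' < i$, and all $v_{ij'}$ with $j' \neq j$ (together with any now-trivial hyperedges). Since $e_i$ collapses to the singleton $\{v_{ij}\}$ and is removed, $\hyper{H}^{(i,j)}$ has strictly fewer hyperedges than $\hyper{H}$, and the inductive hypothesis yields $R_{\hyper{H}^{(i,j)},v_{ij}}^{\sigma_\Lambda} = R_{\TSAW(\hyper{H}^{(i,j)},v_{ij}),v_{ij}}^{\sigma_\Lambda}$.

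The heart of the argument, and the main obstacle, is the combinatorial identification $\TSAW(\hyper{H}^{(i,j)},v_{ij}) \cong \hyper{T}_{ij}$, where $\hyper{T}_{ij}$ denotes the subtree of $\TSAW(\hyper{H},v)$ rooted at the child $v_{ij}$ under $e_i$ (with inherited orderings). One direction is routine: prepending $(v,e_i,\cdot)$ embeds every SAW from $v_{ij}$ in $\hyper{H}^{(i,j)}$ as a SAW in $\hyper{H}$ that never revisits an excised vertex and therefore survives the deletion rule. The reverse inclusion requires showing that any SAW $(v,e_i,v_{ij},\ldots,e_m,u_m)$ in $\hyper{H}$ whose trajectory enters a removed vertex---some $v_{i'j'}$ with $i' < i$, or some $v_{ij'}$ with $j' \neq j$---is deleted in $\TSAW(\hyper{H},v)$. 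The key observation is that at any such removed vertex, there is an incident hyperedge ($e_{i'}$ in the first case, and $e_i$ in the second) that, when present as an unused edge, closes a cycle back to $v$ through a closing hyperedge ranked strictly below $e_i$ at $v$, triggering deletion. If the candidate hyperedge has already been consumed earlier in the walk, I trace back to the earliest prefix of the walk whose terminal vertex first admits such a closure; that prefix, which is an ancestor of the given walk in the tree, is itself deleted, and the original walk is removed as a descendant. Completing these case analyses---keeping careful track of which hyperedges remain unused and which cycle is closed---is the technical core of the proof, and once this identification is established the three equalities chain together to close the induction.
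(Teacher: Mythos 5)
Your proposal is correct and uses the same underlying strategy as the paper: establish a graph-side recursion at $v$ that term-for-term mirrors the tree recursion \eqref{eq:tree-recursion}, then close an induction by identifying subtrees of $\TSAW(\hyper{H},v)$ with $\TSAW$ applied to appropriate sub-hypergraphs. What differs is the route to the graph recursion. The paper follows Weitz's vertex-splitting device: replace $v$ by $d$ copies $v_1,\dots,v_d$, each with activity $\lambda^{1/d}$ and contained in a single $e_i$, write $R_{\hyper{H},v}^{\sigma_\Lambda}$ as a telescopic product $\prod_{i} R_{\hyper{H}^v,v_i}^{\sigma_\Lambda\tau_i}$, and then condition further within each $e_i$. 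You derive the same recursion directly from the chain rule over hyperedges in $<_v$-order together with the mutual exclusivity of the events $\{v_{ij}\ \text{occupied}\}_{j\le k_i}$ inside $e_i$; note it is that exclusivity, not a ``vertex-by-vertex'' chain rule as you describe it, that turns $\Pr[\text{all unocc.}]$ into $\bigl(1+\sum_j R_{\hyper{H},v_{ij}}^{\tilde\sigma^{(i,j)}}\bigr)^{-1}$. You also upgrade the closing step to a genuine isomorphism $\TSAW(\hyper{H}^{(i,j)},v_{ij})\cong\hyper{T}_{ij}$ with inherited orderings, which is sharper than the paper's ``results in the same probability at the root'' and is in fact true. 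Overall this is a cleaner packaging of the same argument, organized as a strong induction on $|E|$, and it avoids the $\lambda^{1/d}$ bookkeeping.

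Two small imprecisions in the case analysis for the isomorphism should be repaired. The case $u_m=v_{ij'}$, $j'\neq j$, simply cannot occur: the second self-avoiding condition forbids any $v_m$ with $m\ge 2$ from being incident to $e_1=e_i$, and the closure you propose there via $e_i$ would in any event reuse $e_i$ and fail to form a simple cycle. In the case $u_m=v_{i'j'}$, $i'<i$, the hyperedge $e_{i'}$ can never have been ``consumed earlier'' in the walk (again by the SAW condition at $u_m$), so that is not the obstruction; what can happen is that the arriving hyperedge $e_m$ equals $e_{i'}$, blocking the one-step closure $(e_{i'},v)$. Tracing back to the \emph{first} removed vertex $v_t$ along the walk eliminates this: writing $v_t=v_{i''j''}$ with $i''<i$, one checks $e_t\neq e_{i''}$ (else $v_{t-1}\in e_{i''}$ would already be a removed vertex, or would force $e_t=e_1=e_i\neq e_{i''}$), so the closure $(e_{i''},v)$ is a valid simple cycle at $v$ with $e_{i''}<_v e_i$, that prefix is deleted, and the original walk is removed as its descendant. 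With these fixes the argument is sound.
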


\begin{proof}
The proof follows the same routine as that of Weitz~\cite{weitz2006counting}, with some extra cares to be taken to avoid the complications caused by hypergraphs.

Denote $R_{\hyper{H},v}^{\sigma_\Lambda}(\lambda) = p_{\hyper{H},v}^{\sigma_{\Lambda}} / (1 - p_{\hyper{H},v}^{\sigma_{\Lambda}})$ for the ratio between the probability that $v$ in $\hyper{H}$ is occupied and unoccupied conditioning on configuration $\sigma_\Lambda$ of $\Lambda \subset V$.
We write $R_\hyper{T}^{\sigma_\Lambda} = R_{\hyper{T},v}^{\sigma_\Lambda}$ when $v$ is unambiguously the root of $\hyper{T}$.

Let $d$ be the degree of the root of $\hyper{T}$. Suppose that there are $k_i$ children contained in $i$-th child-edge, where the order is determined during the construction of $\TSAW(\hyper{H},v)$. $\hyper{T}_{ij}$ is the subtree rooted at the $j$-th child in the $i$-th child-edge. Let $\Lambda_{ij} = \Lambda \cap \hyper{T}_{ij}$ and $\sigma_{\Lambda_{ij}}$ be the restriction of $\sigma_\Lambda$ on $\Lambda_{ij}$.
Applying the tree recursion \eqref{eq:tree-recursion} for the self-avoiding walk tree $\hyper{T}$, we have
\begin{align}
\label{eq:tree-recur}
R_\hyper{T}^{\sigma_\Lambda} = \lambda \prod _{i=1}^d \frac{1}{1 + \sum_{i=1}^{k_i} R_{\hyper{T}_{ij}}^{\sigma_{\Lambda_{ij}}}},
\end{align}
This defines a recursive procedure for calculating $R_\hyper{T}^{\sigma_\Lambda}$. The base cases are naturally defined when $v$ lies in $\Lambda$, in which case $R_\hyper{T}^{\sigma_\Lambda} = 0$ if $v$ is fixed unoccupied or $R_\hyper{T}^{\sigma_\Lambda} = \infty$ if it is fixed occupied, or when $v$ has no child, in which case $R_\hyper{T}^{\sigma_\Lambda} = \lambda$.

In the following we describe our procedure for calculating $R_{\hyper{H},v}^{\sigma_\Lambda}$ at $v$ in the original hypergraph $\hyper{H}$.
The problem comes that the ratio at different neighbors of $v$ may still depend on each other when we fix the value at $v$ since there may exist cycles in $\hyper{H}$.
We resolve this problem by editing the original hypergraph around $v$ and imposing appropriate conditions for each neighbor of $v$.

Let $\hyper{H}^v$ be the same hypergraph as $\hyper{H}$ except that vertex $v \in V$ is substituted by $d$ vertices $v_1, v_2, ..., v_d$, where $d$ is the degree of $v$. Each vertex $v_i$ is contained into a single hyperedge $e_i$, where $e_i$ is the $i$-th hyperedge connecting $v$, and the order here is the same as the one determined in the definition of $\TSAW(\hyper{H}, v)$. At the same time, we associated each $v_i$ with an activity of $\lambda^{1/d}$ rather than $\lambda$. It is now clear to see that an independent set in $\hyper{H}$ with $v$ occupied has the same weight as the corresponding independent set in $\hyper{H}^v$ with all the $v_i$ occupied, and so is the case when $v$ is unoccupied. Therefore, $R_{\hyper{H},v}^{\sigma_\Lambda}$ equals to the ratio between the probabilities in $\hyper{H}^v$ with all $v_i$ ($1 \le i \le d$) being occupied and unoccupied, conditioning on $\sigma_\Lambda$. Let $\tau_i$ be the configuration for vertex $v_i$ in which the values of $v_j$ are fixed to occupied if $j < i$ and unoccupied if $j > i$. We can then write this in a form of telescopic product:
\[
R_{\hyper{H},v}^{\sigma_\Lambda} = \prod_{i=1}^{d} R_{{\hyper{H}^v}, v_i}^{{\sigma_\Lambda} {\tau_i}},
\]
where ${\sigma_\Lambda} {\tau_i}$ means the combination of the two configurations ${\sigma_\Lambda}$ and ${\tau_i}$.

We can obtain the value of $R_{{\hyper{H}^v}, v_i}^{{\sigma_\Lambda} {\tau_i}}$ by further fix vertices in $e_i$, the hyperedge containing $v_i$. Since now $v_i$ is contained only in $e_i$, we can see that
\[
R_{{\hyper{H}^v}, v_i}^{{\sigma_\Lambda} {\tau_i}} =
\frac{ \lambda^{1/d} }{ 1 + \sum_{j=1}^{k_i} R_{{\hyper{H}^v \slash v_i}, u_{ij}}^{ {\sigma_\Lambda} {\tau_i} {\rho_{ij}} } },
\]
where $k_i$ is the number of the vertices other than $v_i$ which is incident to $e_i$ and $\rho_{ij}$ is the configuration at vertices of $e_i$ in which all the vertices $u_{ij'}$ other than $u_{ij}$ are fixed to unoccupied.

Combining above two equations, we get a recursive procedure for calculating $R_{\hyper{H},v}^{\sigma_\Lambda}$ in the same manner that equation \eqref{eq:tree-recur} has:
\begin{align}
\label{eq:graph-recur}
R_{\hyper{H},v}^{\sigma_\Lambda} = \lambda \prod_{i=1}^{d}
\frac{ 1 }{ 1 + \sum_{j=1}^{k_i} R_{{\hyper{H}^v \slash v_i}, u_{ij}}^{ {\sigma_\Lambda} {\tau_i} {\rho_{ij}} } }.
\end{align}
Notice that the recursion does terminate, since the number of unfixed vertices reduces at least by one in each step because in calculating $R_{{\hyper{H}^v \slash v_i}, u_{ij}}^{ {\sigma_\Lambda} {\tau_i} {\rho_{ij}} }$ all copies $v_{i'}$ of $v$ is either fixed (when $i' \neq i$) or erased (when $i' = i$) from the hypergraph ${\hyper{H}^v \slash v_i}$.

We now show that the procedure described above for calculating $R_{\hyper{H},v}^{\sigma_\Lambda}$ results in the same value as using the hypertree procedure for $\TSAW(\hyper{H}, v)$ with corresponding condition of $\sigma_\Lambda$ imposed on it. First notice that the calculation carried out by the two procedure is the same, since they share the same function (Equation \eqref{eq:tree-recur} and \eqref{eq:graph-recur}) when we view them as recursive calls.
Furthermore, we have the same stopping values for the both recursive procedures.
During constructing $\TSAW(\hyper{H},v)$, if node $u$ corresponding to walk 
is not included in the hypertree, which is equivalent to fix $u$ to unoccupied in the sense of causing the same effect on the ratio of occupation to its parent node. And when node $u$ in the hypertree corresponding to a self-avoiding walk $w=(v,e_1,v_1\ldots,e_{\ell},v_\ell)$, with that $w$ can be extended as $w' = (w,e_{\ell+1},v_{\ell+1})$ to a cycle-closing vertex $v_{\ell+1}=v_i$ for some $0\le i<\ell$ via a new hyperedge $e_{\ell+1}\not\in\{e_0,e_1,\ldots,e_{\ell}\}$, and $e_{\ell+1}<_{v_{i}}e_i$, then the node $u$ along with all its descendants are deleted. This gives the equivalent effect to parent node of $u$ as if $u$ is fixed to unoccupied, or one of the children of $u$ (i.e. the node corresponding to $w'$) to occupied, which is what we did to fix the vertices $v_j$ for $j<i$ in $\tau_i$. 
Eliminating a hyperedge with no child also does not affect the final value of $R_{\hyper{T}}^{\sigma_\Lambda}$.

Thus, what is left to complete the proof is to show that the hypertree $\TSAW(\hyper{H}^v\slash v_{i}, u_{ij})$ with $(\sigma_\Lambda \tau_i \rho_{ij})$'s corresponding condition imposed on it is exactly the same as the subtree of $\TSAW(\hyper{H}, v)$ rooted at the $j$-th child vertex of the $i$-th child-edge of the root with $\sigma_\Lambda$'s corresponding condition imposed on it. This is enough because then the resulting values are the same for both procedures by induction. The observation is that both trees are the hypertree of all self-avoiding walks in $\hyper{H}$ starting at $u_{ij}$, except that $\TSAW(\hyper{H}^v\slash v_{i}, u_{ij})$ has some extra vertices which are fixed to be occupied or unoccupied depending on whether the corresponding walk reaches $v$ via a higher or lower ranked hyperedge, or reaches $i$-th hyperedge of $v$, which results in the same probability of occupation at the root.
\end{proof}

A hypergraph $\hyper{H}$ is a sub-hypergraph of another hypergraph $\hyper{G}$ if the incidence graph of $\hyper{H}$ is a subgraph of that of $\hyper{G}$, and for hypertrees this is samely defined. Note that for hypergraphs, a subgraph is not necessarily formed by a sub-collection of hyperedges, but maybe also by sub-hyperedges.
The $\TSAW$ of a hypergraph $\hyper{H}$ with maximum edge-size at most $k+1$ and maximum degree at most $d+1$ is sub-hypertree of $\RHtree{k}{d}$.

\begin{proposition}\label{lem-pru}
Let $\hyper{T}_0=(V_0, E_0)$ be a rooted hypertree and $\hyper{T}=(V,E)$ its sub-hypertree with the same root. 
For any $\Lambda \subseteq V$ and any $\sigma_{\Lambda}\in\{0,1\}^\Lambda$, there exists a configuration $\sigma_{\Lambda_0}\in\{0,1\}^{\Lambda_0}$ for $\Lambda\subseteq\Lambda_0 \subseteq V_0$, extending the configuration $\sigma_\Lambda$, such that $\ptree{\hyper{T}}{\sigma_{\Lambda}} = \ptree{\hyper{T}_0}{\sigma_{\Lambda_0}}$. 
\end{proposition}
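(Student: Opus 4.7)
The plan is to induct on the size of $\hyper{T}_0$ and construct $\sigma_{\Lambda_0}$ from $\sigma_\Lambda$ by adding only $0$-values (``unoccupied'') at judiciously chosen vertices of $V_0\setminus V$. The driving observation is that if a vertex $u$ is placed into $\Lambda_0$ with value $0$, then $R_{\hyper{T}_0,u}^{\sigma_{\Lambda_0}}=0$, so in the tree recursion \eqref{eq:tree-recursion} such a $u$ contributes $0$ to the sum inside its hyperedge's factor; if every non-root vertex of a hyperedge incident to the root is fixed to $0$, that hyperedge's factor collapses to $\frac{1}{1+0}=1$ and drops out of the product. This mechanism lets us ``erase'' both the extra hyperedges of $\hyper{T}_0$ incident to the root and the extra vertices inside the shared hyperedges, with no other modification needed.

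The base case $|V_0|=1$ forces $\hyper{T}=\hyper{T}_0$ and is trivial. In the inductive step, if the root $v\in\Lambda$ then the marginals on both sides are pinned to $\sigma_\Lambda(v)$ and we take $\Lambda_0=\Lambda$. Otherwise apply \eqref{eq:tree-recursion} at $v$ in both hypertrees. Since the incidence graph of $\hyper{T}$ sits inside that of $\hyper{T}_0$, every hyperedge incident to $v$ in $\hyper{T}$ is also incident to $v$ in $\hyper{T}_0$ (possibly carrying extra vertices), while $\hyper{T}_0$ may additionally contain hyperedges at $v$ absent from $\hyper{T}$. Build $\sigma_{\Lambda_0}$ by: (i) for each hyperedge incident to $v$ in $\hyper{T}_0$ but not in $\hyper{T}$, adding every other vertex of that hyperedge to $\Lambda_0$ with value $0$; (ii) for each shared hyperedge, adding its vertices belonging to $\hyper{T}_0$ but not to $\hyper{T}$ to $\Lambda_0$ with value $0$; (iii) for each shared child $v_{ij}$, observing that the sub-hypertree $\hyper{T}_{ij}$ of $\hyper{T}$ is a sub-hypertree of the corresponding sub-hypertree $\hyper{T}_{0,ij}$ of $\hyper{T}_0$ with the same root $v_{ij}$, then applying the inductive hypothesis to the pair $(\hyper{T}_{ij},\hyper{T}_{0,ij})$ with restriction $\sigma_{\Lambda_{ij}}$ to obtain a matching extension. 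Aggregating these conditionings yields $\sigma_{\Lambda_0}$, and a term-by-term comparison of the two instances of \eqref{eq:tree-recursion} delivers $R^{\sigma_\Lambda}_{\hyper{T}}=R^{\sigma_{\Lambda_0}}_{\hyper{T}_0}$, hence equal marginals.

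The main point needing care is that fixing an extra vertex to $0$ genuinely removes its subtree's influence at the root without introducing spurious correlation; this is immediate in a hypertree because the child sub-hypertrees of the root are disjoint, a fixed-to-$0$ vertex contributes ratio $0$ regardless of what hangs below it in $\hyper{T}_0$, and the ratio at the root depends on deeper vertices only through the ratios at the root's children. Beyond this verification the argument is routine: the construction only appends consistent $0$-fixings, so $\Lambda_0\supseteq\Lambda$ and $\sigma_{\Lambda_0}$ extends $\sigma_\Lambda$, and each recursive call strictly shrinks the larger hypertree, ensuring termination.
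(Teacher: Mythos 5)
Your proposal is correct and uses the same construction as the paper: extend $\sigma_\Lambda$ by fixing to unoccupied the vertices of $\hyper{T}_0$ missing from $\hyper{T}$ (in effect only the topmost missing vertex along each root path, since your recursion never descends past a pruned vertex). The paper states this construction in a one-sentence remark without the explicit induction; your write-up supplies exactly the routine verification that makes it a proof, so the two arguments match.
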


The configuration $\sigma_{\Lambda_0}$ just extends $\sigma_\Lambda$ by fixing all the vertices missing in $\hyper{T}$ (actually only those who are  closest to the root along each path) to be unoccupied.

Theorem \ref{thm-saw} follows immediately from Theorem~\ref{thm-peq} and Proposition~\ref{lem-pru}.

\begin{proof}[Proof of Theorem~\ref{thm-saw}]
Given any hypergraph $\hyper{H}$ of maximum edge-size at most $(k+1)$ and maximum degree at most $(d+1)$, by Theorem~\ref{thm-peq} we have $| \pgraphhv{\sigma_\Lambda} - \pgraphhv{\tau_\Lambda} | = | \ptreehv{\sigma_\Lambda} - \ptreehv{\tau_\Lambda} | $ where $\hyper{T}=\TSAW(\hyper{H},v)$.
The distance from the root $v$ to any vertex $u$ in $\hyper{T}$ is no shorter than the distance $\hyper{H}$ between $v$ and the vertex in $\hyper{H}$ to which $u$ is identified. So the SSM with rate $\delta(\cdot)$ on $\hyper{T}$ implies that on the hypergraph $\hyper{H}$.

Since $\hyper{H}$ has maximum edge-size at most $k+1$ and maximum degree at most $d+1$, its SAW-tree $\hyper{T}=\TSAW(\hyper{H},v)$ is a sub-hypertree of $\RHtree{k}{d}$. Thus by Proposition~\ref{lem-pru},  we have $| \ptreehv{\sigma_\Lambda} - \ptreehv{\tau_\Lambda} |=| \mathbb{P}_{\RHtree{k}{d}}^{\sigma_{\Lambda_0}} - \mathbb{P}_{\RHtree{k}{d}}^{\tau_{\Lambda_0}} |$ for some $\sigma_{\Lambda_0}, \tau_{\Lambda_0}$ extending $\sigma_{\Lambda}, \tau_{\Lambda}$. The SSM on $\RHtree{k}{d}$ with rate $\delta(\cdot)$ implies that on $\hyper{T}$, which implies the same on the original hypergraph $\hyper{H}$.
\end{proof}

\section{Strong spatial mixing}
\label{sec:SSM}

In this section, we show that for independent sets of the infinite $(k+1)$-uniform  $(d+1)$-regular hypertree $\RHtree{k}{d}$, weak spatial mixing implies strong spatial mixing at almost the same rate.

\begin{theorem}\label{theorem-srj-ssm-IS}
For every positive integers $d,k$ and any $\lambda$, if the independent sets of the infinite $(k+1)$-uniform $(d+1)$-regular hypertree $\RHtree{k}{d}$ with activity $\lambda$ exhibits weak spatial mixing with rate $\delta(\cdot)$ then it also exhibits strong spatial mixing with rate $\frac{(1+\lambda)\left(\lambda+(1+k\lambda)^{d+1}\right)}{\lambda}\delta(\cdot)$.
\end{theorem}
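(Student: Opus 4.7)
My plan is a two-stage reduction: first condition on the agreement set to pass to a sub-hypertree, then bound the resulting marginal comparison via WSM on $\RHtree{k}{d}$ combined with the hypergraph tree recursion~\eqref{eq:tree-recursion}, picking up the constant $C$ from a conversion between marginal probabilities $p$ and the ratios $R=p/(1-p)$.

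Given configurations $\sigma,\tau$ on $\Lambda$ differing on $\Delta$ with $t=\mathrm{dist}_{\RHtree{k}{d}}(v,\Delta)$, let $\Gamma=\Lambda\setminus\Delta$ and condition on the common values $\sigma_{\Gamma}=\tau_{\Gamma}$. The conditional Gibbs measure is the intrinsic Gibbs measure with activity $\lambda$ on a sub-hypertree $\hyper{T}_1\subseteq\RHtree{k}{d}$ obtained by deleting each $u\in\Gamma$ and, whenever $\sigma(u)=1$, also deleting the other vertices in every hyperedge incident to $u$ (which are forced to be unoccupied by the hypergraph-independence constraint). The residual boundary is $\Delta'=\Delta\cap\hyper{T}_1$ with $\mathrm{dist}_{\hyper{T}_1}(v,\Delta')\ge t$, so it suffices to bound $|p_{\hyper{T}_1,v}^{\sigma_{\Delta'}}-p_{\hyper{T}_1,v}^{\tau_{\Delta'}}|$ by $C\delta(t)$.

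To bound this difference, I would reduce to a comparison of two extremal boundary configurations on $\Delta'$ (the ``$+$'' and ``$-$'' configurations that extremize $R_v$ under the tree recursion, taking into account the parity structure of the bipartite incidence graph of $\RHtree{k}{d}$), yielding $|p_{\hyper{T}_1,v}^{\sigma_{\Delta'}}-p_{\hyper{T}_1,v}^{\tau_{\Delta'}}|\le|p_{\hyper{T}_1,v}^{+}-p_{\hyper{T}_1,v}^{-}|$. By Proposition~\ref{lem-pru}, these extremal marginals in $\hyper{T}_1$ equal marginals in $\RHtree{k}{d}$ under boundary conditions on $\Delta'\cup M$, where $M$ is the ``first-missing'' vertex set closest to $v$ in $\RHtree{k}{d}\setminus\hyper{T}_1$ (set to $0$ in both extensions). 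Applying WSM on $\RHtree{k}{d}$ and absorbing the ``$M$ fixed to $0$'' condition into the measure (by viewing $M$ as the unoccupied outer boundary of the finite sub-region $\hyper{T}_1$), one obtains $|p_v^{+}-p_v^{-}|\le\delta(t)$.

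Finally, the constant $C$ arises from the $p\leftrightarrow R$ conversion at the boundary and at the root. Throughout $\RHtree{k}{d}$, the tree recursion~\eqref{eq:tree-recursion} forces $R\in[R_{\min},R_{\max}]=[\lambda/(1+k\lambda)^{d+1},\lambda]$, hence $p\in[p_{\min},p_{\max}]$ with $p_{\min}=\lambda/(\lambda+(1+k\lambda)^{d+1})$ and $p_{\max}=\lambda/(1+\lambda)$. Tracking the factors through (using $|dR/dp|=1/(1-p)^2$ and $|dp/dR|=1/(1+R)^2$), the worst-case magnification in $|p_v^{\sigma_{\Delta'}}-p_v^{\tau_{\Delta'}}|\le C\cdot|p_v^{+}-p_v^{-}|$ simplifies to $C=1/(p_{\min}(1-p_{\max}))=(1+\lambda)(\lambda+(1+k\lambda)^{d+1})/\lambda$, as claimed. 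The main obstacle is justifying the ``$M$-absorption'' step: naively, WSM on $\RHtree{k}{d}$ would bound the extremal comparison by $\delta(\mathrm{dist}(v,\Delta'\cup M))$, which can be less than $\delta(t)$ when $M$ contains vertices close to $v$. The resolution is to recognize that the $M=0$ condition is precisely the unoccupied-outer-boundary condition of $\hyper{T}_1$ as a finite sub-region of $\RHtree{k}{d}$, so WSM applies intrinsically on $\hyper{T}_1$ with boundary-to-root distance $t$ measured to the true disagreement set $\Delta'$.
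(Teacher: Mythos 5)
Your plan is structurally reasonable but has a genuine gap at exactly the step you flag as ``the main obstacle,'' and the proposed resolution of that obstacle effectively assumes the conclusion. The WSM hypothesis on $\RHtree{k}{d}$, as used and established in this paper (cf.\ Theorem~\ref{theorem-wsm-IS}), is the decay of the extremal marginals $p^{\pm}_\ell$ (equivalently $R^{\pm}_\ell$) when the boundary condition is placed at depth~$\ell$ in the \emph{full} $(k+1)$-uniform $(d+1)$-regular hypertree. It does not, on its face, say anything about the decay rate in a ``thinner'' sub-hypertree $\hyper{T}_1$ obtained after conditioning on the agreement set $\Gamma$: a priori, a sparser tree could have \emph{slower} decay at a given depth. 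Asserting that ``WSM applies intrinsically on $\hyper{T}_1$'' by absorbing the deleted vertices into the unoccupied outer boundary is exactly the statement that needs to be proved; it is not a consequence of the WSM hypothesis as stated. The paper closes precisely this gap with Theorem~\ref{theorem-srj-sens-de} and Lemma~\ref{lemma-srj-hyp}, which show by a two-part induction on depth (controlling both $R^{+}_\ell/R^{-}_\ell$ and $(1+kR^{+}_\ell)/(1+kR^{-}_\ell)$ simultaneously) that the extremal ratio for arbitrary non-uniform activities $\vec{\lambda}\le\lambda$ — where fixing a vertex unoccupied is encoded as $\lambda_u=0$ — is dominated by the uniform-activity case. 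That monotonicity is the entire substance of the theorem, and your proposal omits it.

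There is also an internal inconsistency in how you derive the constant $C$. You first state $\bigl|p_{\hyper{T}_1,v}^{\sigma_{\Delta'}}-p_{\hyper{T}_1,v}^{\tau_{\Delta'}}\bigr|\le\bigl|p_{\hyper{T}_1,v}^{+}-p_{\hyper{T}_1,v}^{-}\bigr|$ with constant $1$ (correctly, since the extremal boundary conditions bound all marginals), but then claim $C$ arises in the same inequality. In fact the constant appears because Theorem~\ref{theorem-srj-sens-de} controls correlation decay in the multiplicative scale $\log R$, not in $p$, and one must convert back using $p^{+}_\ell\le\lambda/(1+\lambda)$ and $p^{-}_\ell\ge\lambda/\bigl(\lambda+(1+k\lambda)^{d+1}\bigr)$. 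Your arithmetic for $C$ matches the paper's, but the route to it is circular: if the ``$M$-absorption'' were valid as you describe, no conversion and hence no constant would be needed at all.
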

By Theorem~\ref{thm-saw}, this implies the strong spatial mixing with the same rate on all hypergraphs of maximum degree at most $d+1$ and maximum size of hyperedges at most $k+1$. 

Unlike most known strong spatial mixing results, where the spatial mixing is usually established by an analytic approach with help of potential functions, our proof of Theorem~\ref{theorem-srj-ssm-IS} adopts the combinatorial argument used in Weitz's original proof of SSM for the hardcore model~\cite{weitz2006counting}. Weitz's approach gives us a stronger result: It explicitly gives the extremal case for WSM as well as SSM among a family of hypergraphs with bounded maximum degree and bounded maximum edge-size. It can also easily give us the SSM behavior when at the critical threshold.

Assume the hypertree $T=\RHtree{k}{d}$ is rooted at some vertex $v$. 
For $\ell>0$, let $R^+_\ell$ and $R^-_\ell$ denote the respective maximum and minimum values of $R_{T}^{\sigma}$ achieved by a boundary condition $\sigma$ that fixes the states of all vertices at level $\ell$. By the monotonicity of the tree recursion, it is easy to see that $R^+_\ell$ (or $R^-_\ell$) is computed by the tree recursion with initial values at all vertices at level $\ell$ to be $\infty$ (or 0) if $\ell$ is even, and $0$ (or $\infty$) if $\ell$ is odd, with the root $v$ being at level 0.
\footnote{Note that although the all-$\infty$ initial values corresponds to a boundary condition $\sigma$ that fixes all vertices at level $\ell$ to be occupied, which may no longer be a valid independent set in the hypertree, the $R^{\pm}_\ell$ achieved by this choice of initial values is actually the same as the $R_{T}^{\sigma}$ with a boundary condition $\sigma$ that fixes exactly one vertex per hyperedge to be occupied at level $\ell$.}

It is easy to see that fixing a vertex $u$ in $T$ to be occupied has the same effect as fixing $u$'s parent to be unoccupied, therefore to prove SSM, it is sufficient to prove the decay of correlation conditioning on a subset of vertices in $T$ fixed to be unoccupied.
Another key observation from the tree recursion is that fixing a vertex $u$ in $T$ to be unoccupied has the same effect as having a local activity $\lambda_u=0$ at vertex $u$. 
Now consider a vector $\vec{\lambda}$ that assigns every vertex $u$ in $T=\RHtree{k}{d}$ a local activity $\lambda_u$. 
Let $R^{+}_\ell(\vec{\lambda})$ and $R^{-}_\ell(\vec{\lambda})$ be accordingly defined as the respective extremal values of $R^\sigma_{T}(\vec{\lambda})$ achieved by boundary conditions $\sigma$ fixing all vertices at level $\ell$ in the tree $T=\RHtree{k}{d}$ equipped with the nonuniform activities $\vec{\lambda}$.
Clearly, by the same monotonicity, $R^{\pm}_\ell(\vec{\lambda})$ can be computed from the tree recursion with a nonuniform activities $\vec{\lambda}$ with the same settings of initial values as the uniform case $R^{\pm}_\ell=R^{\pm}_\ell({\lambda})$.

The following theorem shows that basically the decay of correlation is dominated by the uniform activity case.
\begin{theorem}\label{theorem-srj-sens-de}
        Fix an arbitrary $\lambda\geq0$. Let $\vec{\lambda}$ be an assignment of activities to vertices of $\RHtree{k}{d}$ such that $0\leq \lambda_v \leq\lambda$ for every $v\in \RHtree{k}{d}$. For every $\ell \geq 1$ we have
        $$\frac{R^{+}_{\ell}(\vec{\lambda})}{R^{-}_{\ell}(\vec{\lambda})}\leq\frac{R^{+}_{\ell}}{R^{-}_{\ell}}$$
\end{theorem}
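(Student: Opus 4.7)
The plan is to prove Theorem~\ref{theorem-srj-sens-de} by induction on $\ell$. For the base case $\ell=1$, the boundary directly fixes the neighbors of the root; setting one child per incident hyperedge to be occupied forces $R^-_1(\vec{\lambda})=0=R^-_1$ for every $\vec{\lambda}$, so both sides of the desired inequality are $\infty$ and the claim holds trivially.

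For the inductive step at depth $\ell$, I would apply the tree recursion \eqref{eq:tree-recursion} at the root $v$, exploiting the subtree-disjointness so that $R^+_\ell(\vec{\lambda})$ and $R^-_\ell(\vec{\lambda})$ are realized simultaneously by the boundary conditions that, respectively, minimize and maximize $R_{v_{ij}}$ in every subtree. Observing that the activity $\lambda_v$ of the root appears as a common multiplicative factor in both extremes and therefore cancels in the ratio, one obtains
\[
\frac{R^+_\ell(\vec{\lambda})}{R^-_\ell(\vec{\lambda})}=\prod_{i=1}^{d+1}\frac{1+\sum_{j=1}^{k}R^+_{v_{ij}}(\vec{\lambda})}{1+\sum_{j=1}^{k}R^-_{v_{ij}}(\vec{\lambda})},
\]
where $v_{ij}$ ranges over the vertices of hyperedge $i$ at depth one from $v$. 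It therefore suffices to prove the per-hyperedge bound
\[
\frac{1+\sum_{j}R^+_{v_{ij}}(\vec{\lambda})}{1+\sum_{j}R^-_{v_{ij}}(\vec{\lambda})}\le\frac{1+kR^+_{\ell-1}}{1+kR^-_{\ell-1}},
\]
and then multiply over $i$ to obtain the theorem.

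The heart of the proof is this per-hyperedge bound. The inductive hypothesis by itself gives only $R^+_{v_{ij}}(\vec{\lambda})/R^-_{v_{ij}}(\vec{\lambda})\le T:=R^+_{\ell-1}/R^-_{\ell-1}$; in isolation this would permit the left-hand side to attain $(1+k\lambda)/(1+k\lambda/T)$, which for $\ell\ge 4$ can exceed the uniform target because $R^+_{\ell-1}<\lambda$. The saving structural fact is that the pair $(R^-_{v_{ij}}(\vec{\lambda}),R^+_{v_{ij}}(\vec{\lambda}))$ cannot be chosen freely in the region $\{(a,b):a\le b\le Ta,\ b\le\lambda\}$; it must be realized by some $\vec{\lambda}$ on the subtree rooted at $v_{ij}$ with $\lambda_u\le\lambda$, which ties absolute values to the ratio. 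My plan is therefore to expand $R^\pm_{v_{ij}}$ by one further level of the tree recursion (so that $R^-_{v_{ij}}$ uses the $R^+$ of the grandchildren and vice versa), invoke the inductive hypothesis at depth $\ell-2$ on the grandchildren to constrain those values, and then prove by an extremalization argument that the per-hyperedge factor is maximized over all feasible $\vec{\lambda}$ at the uniform configuration $\lambda_u\equiv\lambda$, where it equals the right-hand side by definition of $R^\pm_{\ell-1}$.

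The main obstacle will be exactly this variational claim. A quick computation shows that the partial derivative of the factor with respect to a single activity $\lambda_u$ in the subtree need not be of constant sign, so the uniform case is not a coordinate-wise maximum. What is true (and consistent with the worked examples $\ell=2,3,4$) is that the uniform configuration is the joint global maximum: any departure $\lambda_u<\lambda$ reduces the subtree-level pair $(R^-_{v_{ij}},R^+_{v_{ij}})$ in a coupled way that, after the ratio is reassembled, strictly decreases the factor. I expect the cleanest way to formalize this is to re-parameterize by $x_{v_{ij}}=R^-_{v_{ij}}/R^-_{\ell-1}$ and $y_{v_{ij}}=R^+_{v_{ij}}/R^+_{\ell-1}$, noting that the IH becomes $y_{v_{ij}}\le x_{v_{ij}}$, and to argue that along the achievable manifold induced by the two-step recursion $\phi^2$ with $\phi(x)=\lambda/(1+kx)^d$ the factor is dominated by its value at $x_{v_{ij}}=y_{v_{ij}}=1$ (the uniform point), reducing the problem to a one-dimensional monotonicity statement about $\phi^2$ that is immediate from its being an increasing contraction under the standing uniqueness assumption.
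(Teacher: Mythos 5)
You have correctly diagnosed the central obstruction: the inductive hypothesis
\[
\frac{R^+_{\ell-1}(\vec{\lambda})}{R^-_{\ell-1}(\vec{\lambda})}\le\frac{R^+_{\ell-1}}{R^-_{\ell-1}}
\]
is, by itself, too weak to yield the per-hyperedge bound, because on the feasible region it defines the factor
$\bigl(1+\sum_j R^+_{v_{ij}}(\vec{\lambda})\bigr)/\bigl(1+\sum_j R^-_{v_{ij}}(\vec{\lambda})\bigr)$
can overshoot $\bigl(1+kR^+_{\ell-1}\bigr)/\bigl(1+kR^-_{\ell-1}\bigr)$. That observation is precisely right. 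However, your proposed repair leaves a genuine gap. The paper closes it not by expanding two further levels and extremalizing, but by strengthening the inductive hypothesis: Lemma~\ref{lemma-srj-hyp} proves, simultaneously and by a single induction on $\ell$, \emph{both}
\[
\frac{R^{+}_{\ell}(\vec{\lambda})}{R^{-}_{\ell}(\vec{\lambda})}\leq\frac{R^{+}_{\ell}}{R^{-}_{\ell}}
\qquad\text{and}\qquad
\frac{1+kR^{+}_{\ell}(\vec{\lambda})}{1+kR^{-}_{\ell}(\vec{\lambda})}\leq\frac{1+kR^{+}_{\ell}}{1+kR^{-}_{\ell}}.
\]
With the second invariant available at level $\ell$, your per-hyperedge bound at level $\ell+1$ follows immediately via the mediant inequality, giving the first invariant at $\ell+1$; the real work is then re-establishing the \emph{second} invariant at $\ell+1$, which the paper does by a case split, reduction to the worst case $\sum_j R^-_\ell(\vec{\lambda}_{ij})\le kR^-_\ell$, parameterization by $\alpha_i\in[0,1]$, and a derivative-sign argument whose key input is the sandwich monotonicity $R^-_\ell\le R^-_{\ell+1}\le R^+_{\ell+1}\le R^+_\ell$. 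Without adding this second invariant, a recursion to depth $\ell-2$ armed only with the first one does not have the right shape to close.

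Two concrete problems with the way you try to avoid the stronger invariant. First, you yourself flag that the variational claim --- that the uniform configuration $\lambda_u\equiv\lambda$ is a joint global maximum of the per-hyperedge factor --- is unresolved, that the partial derivative in a single $\lambda_u$ is not of constant sign, and that the intended reduction is only an ``expectation.'' That is the entire hard step; the paper's proof of \eqref{eq:srj-ssm-3}--\eqref{eq:srj-ssm-4} is exactly where the content lies, and it is not a formality. Second, you invoke a ``standing uniqueness assumption'' to get that $\phi^2$ is a contraction, but Theorem~\ref{theorem-srj-sens-de} is stated for \emph{arbitrary} $\lambda\ge 0$, including $\lambda>\lambda_c$, where the two-step recursion is not a contraction and has multiple fixed points. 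The paper's proof deliberately avoids any uniqueness hypothesis; the monotonicity it actually needs is the sandwich condition, which holds for all $\lambda$. Finally, a smaller structural point: the paper performs the induction on $\VHtree{k}{d}$ (root of degree $d$, which is what every child-subtree looks like) and only at the end lifts to $\RHtree{k}{d}$ with the extra $(d+1)$-st factor; your induction as written on $\RHtree{k}{d}$ would not close because the subtrees below the root are copies of $\VHtree{k}{d}$, not $\RHtree{k}{d}$.
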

Translated to the language of subtrees, the theorem means that the extremal case of WSM among a family hypertrees with bounded maximum degree and maximum edge-size, is given by the uniform regular tree with the highest degree and edge-size in the family.  
Technically, Theorem~\ref{theorem-srj-sens-de} measures the decay of correlation in terms of $\log R=\log\frac{p}{1-p}$. Note that for $\ell\ge 2$, it always holds that $p^+_\ell(\lambda) \leq \frac{\lambda}{1+\lambda}$ and $p^-_\ell(\lambda) \geq \frac{\lambda}{\lambda+(1+k\lambda)^{d+1} }$, where $R_\ell^\pm=\frac{p_\ell^\pm}{1-p_\ell^\pm}$. Theorem~\ref{theorem-srj-sens-de} implies Theorem~\ref{theorem-srj-ssm-IS}.

We now consider a slightly different hypertree which is exactly the same as $\RHtree{k}{d}$ except that the degree of root is $d$. Denote this hypertree as $\VHtree{k}{d}$. 

\begin{lemma}\label{lemma-srj-hyp}
        For every integer $\ell\geq1$ and any assignment of activities $\vec{\lambda}$ to vertices of $\VHtree{k}{d}$ such that $0\le\lambda_v\le\lambda$ for every vertex $v$, the following two inequalities hold:
        \begin{align}
            \frac{R^{+}_{\ell}(\vec{\lambda})}{R^{-}_{\ell}(\vec{\lambda})}
            &\leq\frac{R^{+}_{\ell}}{R^{-}_{\ell}},\label{eq:srj-ssm-h1}\\
            \frac{1+kR^{+}_{\ell}(\vec{\lambda})}{1+kR^{-}_{\ell}(\vec{\lambda})}
            &\leq\frac{1+kR^{+}_{\ell}}{1+kR^{-}_{\ell}},\label{eq:srj-ssm-h2}
        \end{align}
        with the convention $0/0=1$ and $\infty=\infty$.
\end{lemma}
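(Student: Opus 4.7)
I would prove both inequalities jointly, by induction on $\ell$. The base case $\ell = 1$ is immediate: $R^+_1(\vec\lambda) = \lambda_v$ and $R^-_1(\vec\lambda) = 0$, so~\eqref{eq:srj-ssm-h1} reads $\infty \le \infty$ under the stated convention and~\eqref{eq:srj-ssm-h2} reduces to $1 + k\lambda_v \le 1 + k\lambda$. For the inductive step, I apply the tree recursion for $\VHtree{k}{d}$ at the root: because each level-$1$ child is again the root of a copy of $\VHtree{k}{d}$, we have
\[
R^+_\ell(\vec\lambda) = \frac{\lambda_v}{\prod_{i=1}^d B_i}, \qquad R^-_\ell(\vec\lambda) = \frac{\lambda_v}{\prod_{i=1}^d A_i},
\]
with $A_i := 1 + \sum_{j=1}^k R^+_{\ell-1}(\vec\lambda_{ij})$ and $B_i := 1 + \sum_{j=1}^k R^-_{\ell-1}(\vec\lambda_{ij})$; the analogous identities hold in the uniform case with $A_i = 1 + k R^+_{\ell-1}$ and $B_i = 1 + k R^-_{\ell-1}$.

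Inequality~\eqref{eq:srj-ssm-h1} follows quickly. The $\lambda_v$ factors cancel, giving $R^+_\ell(\vec\lambda)/R^-_\ell(\vec\lambda) = \prod_{i=1}^d A_i/B_i$. Setting $C := (1 + kR^+_{\ell-1})/(1 + kR^-_{\ell-1})$, the inductive~\eqref{eq:srj-ssm-h2} on each subtree $(i,j)$ rearranges to $R^+_{\ell-1}(\vec\lambda_{ij}) - C\cdot R^-_{\ell-1}(\vec\lambda_{ij}) \le (C-1)/k$. Summing over $j=1,\dots,k$ produces $A_i \le CB_i$, and taking the product over the $d$ hyperedges yields $\prod_i A_i/B_i \le C^d = R^+_\ell/R^-_\ell$.

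For~\eqref{eq:srj-ssm-h2}, write $P := \prod_i A_i$ and $Q := \prod_i B_i$ (so $P \ge Q$), giving
\[
\frac{1 + kR^+_\ell(\vec\lambda)}{1 + kR^-_\ell(\vec\lambda)} = \frac{P(Q + k\lambda_v)}{Q(P + k\lambda_v)}.
\]
Differentiation in $\lambda_v$ shows this expression is non-decreasing in $\lambda_v$ (using $P \ge Q$), so it suffices to treat $\lambda_v = \lambda$. To close the induction, I propose to show, by a nested induction on depth, that the LHS above is coordinate-wise non-decreasing in every individual activity $\lambda_u$ of the tree. Granted this, the supremum of the LHS over $\vec\lambda \in [0, \lambda]^V$ is attained at the uniform configuration $\vec\lambda \equiv \lambda$, which is exactly the RHS of~\eqref{eq:srj-ssm-h2}.

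The main obstacle will be~\eqref{eq:srj-ssm-h2}. The most appealing short route, deriving it from~\eqref{eq:srj-ssm-h1} at level $\ell$ together with pointwise bounds $R^+_\ell(\vec\lambda) \le R^+_\ell$ and $R^-_\ell(\vec\lambda) \ge R^-_\ell$, is blocked because both monotonicities can fail: concentrating all activity at the root of a deep subtree can push $R^+_\ell(\vec\lambda)$ strictly above the uniform $R^+_\ell$. Consequently the proof cannot bound $R^+_\ell$ and $R^-_\ell$ independently, and must instead track $(R^+_\ell(\vec\lambda), R^-_\ell(\vec\lambda))$ as a jointly controlled pair, exploiting the specific form $F(\lambda_v;P,Q) = P(Q+k\lambda_v)/[Q(P+k\lambda_v)]$ together with both halves of the inductive hypothesis in tandem.
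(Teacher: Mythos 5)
Your treatment of inequality~\eqref{eq:srj-ssm-h1} is correct and matches the paper: rearranging the inductive~\eqref{eq:srj-ssm-h2} as $R^+_{\ell-1}(\vec\lambda_{ij}) - C\cdot R^-_{\ell-1}(\vec\lambda_{ij}) \le (C-1)/k$, summing over $j$ to get $A_i \le C B_i$, and taking the product is exactly the mediant-inequality step in the paper's proof. Your reduction of~\eqref{eq:srj-ssm-h2} to $\lambda_v = \lambda$ by differentiating in $\lambda_v$ is also sound and parallels the paper's use of $\lambda_r \le \lambda$.

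The gap is the rest of the argument for~\eqref{eq:srj-ssm-h2}. You reduce the problem to the claim that $\frac{1+kR^+_\ell(\vec\lambda)}{1+kR^-_\ell(\vec\lambda)}$ is coordinate-wise non-decreasing in every single activity $\lambda_u$, but you do not prove this; you state only that you ``propose to show it by a nested induction on depth.'' This is where essentially all the difficulty of the lemma resides, and it is not at all obvious that the claim holds — indeed, after fixing $\lambda_v = \lambda$, the expression is $F(P,Q) = \frac{P(Q+k\lambda)}{Q(P+k\lambda)}$, which is increasing in $P = \prod A_i$ and decreasing in $Q = \prod B_i$; since increasing a deeper $\lambda_u$ can move $P$ and $Q$ in the same direction (and you already observed both of the naive pointwise bounds $R^+_\ell(\vec\lambda) \le R^+_\ell$, $R^-_\ell(\vec\lambda) \ge R^-_\ell$ can fail), it is not clear which way $F$ moves, and any proof of the monotonicity would have to invoke the inductive hypotheses~\eqref{eq:srj-ssm-h1} and~\eqref{eq:srj-ssm-h2} on the subtrees in a careful way. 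The paper does something genuinely different at this point: it splits into the case where $\sum_j R^-_\ell(\vec\lambda_{ij}) \ge kR^-_\ell$ for every edge $i$ (which, combined with~\eqref{eq:srj-ssm-h1}, gives~\eqref{eq:srj-ssm-h2} at once); it then reduces the remaining case to $\sum_j R^-_\ell(\vec\lambda_{ij}) \le kR^-_\ell$ for all $i$ by replacing any ``overshooting'' subtree's activities with uniform $\lambda$ and showing this only worsens the target ratio; and finally it parameterizes each edge by $\alpha_i = \sum_j R^-_\ell(\vec\lambda_{ij})/(kR^-_\ell) \in [0,1]$, uses~\eqref{eq:srj-ssm-h1} to get $\sum_j R^+_\ell(\vec\lambda_{ij}) \le \alpha_i k R^+_\ell$, and proves the resulting inequality~\eqref{eq:srj-ssm-3} by a derivative computation in $\alpha_1$ together with the sandwich property $R^-_\ell \le R^-_{\ell+1} \le R^+_{\ell+1} \le R^+_\ell$. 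None of that machinery — the case split, the worst-case replacement by uniform activities, the $\alpha_i$ parameterization, or the sandwich inequality — appears in your proposal, and the closing sentence about tracking $(R^+_\ell,R^-_\ell)$ ``as a jointly controlled pair'' gestures toward the right difficulty without resolving it. As written, the proof of~\eqref{eq:srj-ssm-h2} is incomplete.
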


\begin{proof}[Proof of Lemma~\ref{lemma-srj-hyp}]
The proof is by an induction on $\ell$. The proof is similar to that of Weitz~\cite{weitz2006counting} except for the parts dealing with hyperedges.

First consider the exceptional cases when the denominators in \eqref{eq:srj-ssm-h1} may be zero.
Assume $R^{+}_{\ell}(\vec{\lambda}) = R^{-}_{\ell}(\vec{\lambda})=0$, which only happens when the activity of the root is zero. We adopt the convention that $\frac{R^{+}_{\ell}(\vec{\lambda})}{R^{-}_{\ell}(\vec{\lambda})} = 1$. 
Assume $R^{-}_{\ell} = 0$, which only occurs when $\ell=1$. Then $R^{-}_{\ell}(\vec{\lambda})=0$ also holds, and by convention we have $\frac{R^{+}_{\ell}(\vec{\lambda})}{R^{-}_{\ell}(\vec{\lambda})}=\frac{R^{+}_{\ell}}{R^{-}_{\ell}}=\infty$. 
Note that these conventions are consistent with our induction, such that assuming the induction hypothesis $\frac{R^{+}_{\ell}(\vec{\lambda})}{R^{-}_{\ell}(\vec{\lambda})}\leq\frac{R^{+}_{\ell}(\lambda)}{R^{-}_{\ell}(\lambda)}$, for any $k$ assignments of activities $0\le \vec{\lambda_{1}}, \vec{\lambda_{2}},...,\vec{\lambda_{k}}\le \lambda$, there exists $\alpha\geq0$ such that $\sum_{i=1}^{k}R^{-}_{\ell}(\vec{\lambda_{i}})=\alpha kR^{-}_{\ell}$ and $\sum_{i=1}^{k}R^{+}_{\ell}(\vec{\lambda_{i}})\leq\alpha kR^{+}_{\ell}$.

For the basis, $\ell=1$. We have $R^{-}_{\ell}(\vec{\lambda})\ge R^{-}_{\ell}=0$, $R^{+}_{\ell}=\lambda$, and $R^{+}_{\ell}(\vec{\lambda})=\lambda_r$ where $\lambda_r$ is the activity of the root. The hypotheses~\eqref{eq:srj-ssm-h1} and~\eqref{eq:srj-ssm-h2} are true since $\lambda_r\leq \lambda$.

Assume~\eqref{eq:srj-ssm-h1} and~\eqref{eq:srj-ssm-h2} are true for an $\ell\ge 1$. We will show that they are true for $\ell+1$. The following recursion holds
\begin{align*}
\frac{R^{+}_{\ell+1}(\vec{\lambda})}{R^{-}_{\ell+1}(\vec{\lambda})} = \prod_{i=1}^{d}\frac{1+\sum_{j=1}^{k}R^{+}_{\ell}(\vec{\lambda}_{ij})}{1+\sum_{j=1}^{k}R^{-}_{\ell}(\vec{\lambda}_{ij})}
    =\prod_{i=1}^{d}\frac{\sum_{j=1}^{k}(1+kR^{+}_{\ell}(\vec{\lambda}_{ij}))}{\sum_{j=1}^{k}(1+kR^{-}_{\ell}(\vec{\lambda}_{ij}))},
\end{align*}
where $\vec{\lambda}_{ij}$ stands for the restriction of the assignment $\vec{\lambda}$ to the subtree of $\RHtree{k}{d}$ rooted at the $j$-th child in the $i$-th edge incident to the root.
By induction hypothesis~\eqref{eq:srj-ssm-h2}, we have $\frac{1+kR^{+}_{\ell}(\vec{\lambda}_{ij})}{1+kR^{-}_{\ell}(\vec{\lambda}_{ij})}\leq\frac{1+kR^{+}_{\ell}}{1+kR^{-}_{\ell}}$, so immediately, 
\begin{align*}
   \frac{R^{+}_{\ell+1}(\vec{\lambda})}{R^{-}_{\ell+1}(\vec{\lambda})} = \prod_{i=1}^{d}\frac{\sum_{j=1}^{k}(1+kR^{+}_{\ell}(\vec{\lambda}_{ij}))}{\sum_{j=1}^{k}(1+kR^{-}_{\ell}(\vec{\lambda}_{ij}))}
    \leq\left(\frac{1+kR^{+}_{\ell}}{1+kR^{-}_{\ell}}\right)^{d}=\frac{R^{+}_{\ell+1}}{R^{-}_{\ell+1}},
\end{align*}
where the inequality is due to the simple fact that if $a_i\ge b_i> 0$ and $\frac{a_i}{b_i}\leq t$ for all $i$, then $\frac{\sum^{n}_{i=1}a_i}{\sum^{n}_{i=1}b_i}\leq t$.

This proves~\eqref{eq:srj-ssm-h1} for $\ell+1$. Next we will prove~\eqref{eq:srj-ssm-h2}. Recall the tree recursions:
\begin{align*}
R^{\pm}_{\ell+1}(\vec{\lambda})=\lambda_r\prod\limits^{d}_{i=1}\frac{1}{1+\sum\limits^{k}_{j=1}R^{\mp}_{\ell}(\vec{\lambda}_{ij})}
\quad\text{ and }\quad
R^{\pm}_{\ell+1}=\lambda\prod\limits^{d}_{i=1}\frac{1}{1+kR^{\mp}_{\ell}},
\end{align*}
where $\lambda_r$ is the local activity assigned by $\vec{\lambda}$ to the root $r$.
Observe that if $\sum_{j=1}^{k}R^-_\ell(\vec{\lambda}_{ij}) \geq kR^-_\ell$ for all $i \in [d]$, then $R^+_{\ell+1}(\vec{\lambda}) \leq R^+_{\ell+1}{}$, which combined with~\eqref{eq:srj-ssm-h1} for $\ell+1$ that we just proved above, would give us that $\frac{1+kR^{+}_{\ell+1}(\vec{\lambda})}{1+kR^{-}_{\ell+1}(\vec{\lambda})}\leq\frac{1+kR^{+}_{\ell+1}}{1+kR^{-}_{\ell+1}}$. In this good case, the hypothesis~\eqref{eq:srj-ssm-h2} easily holds for $\ell+1$. We then show that the opposite case where $\sum^{k}_{j=1}R^{-}_{\ell}(\vec{\lambda}_{ij}) \leq kR^{-}_{\ell}$ for all $i$ represents the worst possible case, and it is enough to prove the hypothesis~\eqref{eq:srj-ssm-h2} under this condition.
To see this, assume to the contrary that for some $i_0$, $\sum^{k}_{j=1}R^{-}_{\ell}(\vec{\lambda}_{i_{0}j}) > kR^{-}_{\ell}$. 
We then construct a $\vec{\lambda}'$ that satisfies $\sum^{k}_{j=1}R^{-}_{\ell}(\vec{\lambda}_{i_{0}j}') \le kR^{-}_{\ell}$ and has an even worse ratio between $1+kR^+_{\ell+1}$ and $1+kR^-_{\ell+1}$.
Let $\vec{\lambda}'$ be the same as $\vec{\lambda}$ except that for every $j \in [k]$, $\vec{\lambda}_{i_{0}j}'$ is uniform and is equal to $\lambda$ everywhere.  
Clearly, it holds that $\sum^{k}_{j=1}R^{-}_{\ell}(\vec{\lambda}_{i_{0}j}') = kR^{-}_{\ell}$.
On the other hand, by the induction hypothesis, for every $j$ we have $\frac{1+kR^{+}_{\ell}(\vec{\lambda}_{i_{0}j})}{1+kR^{-}_{\ell}(\vec{\lambda}_{i_{0}j})} \leq \frac{1+kR^{+}_{\ell}}{1+kR^{-}_{\ell}}$, and hence
\[
    \frac{1+\sum_{j=1}^kR^{+}_{\ell}(\vec{\lambda}_{i_{0}j})}{1+\sum_{j=1}^kR^{-}_{\ell}(\vec{\lambda}_{i_{0}j})}=\frac{\sum^{k}_{j=1}(1+kR^{+}_{\ell}(\vec{\lambda}_{i_{0}j}))}{\sum^{k}_{j=1}(1+kR^{-}_{\ell}(\vec{\lambda}_{i_{0}j}))} \leq \frac{1+kR^{+}_{\ell}}{1+kR^{-}_{\ell}},
\]
where again the inequality uses the fact that if $a_i\ge b_i> 0$ and $\frac{a_i}{b_i}\leq t$ for all $i$, then $\frac{\sum^{n}_{i=1}a_i}{\sum^{n}_{i=1}b_i}\leq t$.

    Note that $\vec{\lambda}'$ only changes the activities of all the subtrees rooted by the the children in the $i_0$-th edge of the root. So we have
\begin{align*}
        \frac{R^+_{\ell+1}(\vec{\lambda})}{R^-_{\ell+1}(\vec{\lambda})} =& \prod^{d}_{i=1}\frac{1+\sum^{k}_{j=1}R^+_{\ell}(\vec{\lambda}_{ij})}{1+\sum^{k}_{j=1}R^-_{\ell}(\vec{\lambda}_{ij})} \leq \frac{R^+_{\ell+1}(\vec{\lambda}')}{R^{-}_{\ell+1}(\vec{\lambda}')},\\
\text{ and }\quad       
R^{+}_{\ell+1}(\vec{\lambda})=&\lambda_r\prod\limits^{d}_{i=1}\frac{1}{1+\sum\limits^{k}_{j=1}R^{-}_{l}(\vec{\lambda}_{ij})} \leq R^{+}_{l+1}(\vec{\lambda}').
\end{align*}
Combine the two inequalities, we have $\frac{1+kR^{+}_{\ell+1}(\vec{\lambda})}{1+kR^{-}_{\ell+1}(\vec{\lambda})} \leq \frac{1+kR^{+}_{\ell+1}(\vec{\lambda}{'})}{1+kR^{-}_{\ell+1}(\vec{\lambda}{'})}$, an even worse case. 
So for the rest we only need to consider the case in which for every $i$, $\sum^{k}_{j=1}R^{-}_{\ell}(\vec{\lambda}_{ij}) \leq kR^{-}_{\ell}$.

For every $1 \leq i \leq d$,  we can choose $0\leq \alpha_i \leq 1$ so that $\sum^{k}_{j=1}R^{-}_{\ell}(\vec{\lambda}_{ij}) = \alpha_{i}kR^{-}_{\ell}$. Fix $i$ and by the induction hypothesis,   for every $1 \leq j \leq k$ we have $\frac{R^{+}_{\ell}(\vec{\lambda}_{ij})}{R^{-}_{\ell}(\vec{\lambda}_{ij})} \leq \frac{R^{+}_{\ell}}{R^{-}_{\ell}}$. If all $R^{-}_{\ell}(\vec{\lambda}_{ij})$ equal zero, then $\sum^{k}_{j=1}R^{+}_{\ell}(\vec{\lambda}_{ij}) \leq \alpha_{i}kR^{+}_{\ell}$ trivially holds as we argued in the beginning. Otherwise, note that since not all $R^{-}_{\ell}(\vec{\lambda}_{ij})$ are zero, we must have $\ell>1$, so if $R^{-}_{\ell}(\vec{\lambda}_{ij})=0$ then $R^{+}_{\ell}(\vec{\lambda}_{ij})=0$. Thus we also have $\frac{\sum^{k}_{j=1}R^{+}_{\ell}(\vec{\lambda}_{ij})}{\sum^{k}_{j=1}R^{-}_{\ell}(\vec{\lambda}_{ij})} \leq \frac{R^{+}_{\ell}}{R^{-}_{\ell}}$. In conclusion, in both cases we have $\sum^{k}_{j=1}R^{+}_{\ell}(\vec{\lambda}_{ij}) \leq \alpha_{i}kR^{+}_{\ell}$. 

Observe that $\lambda_r \le \lambda$ and $\prod^{d}_{i=1}\frac{1+\sum^{k}_{j=1}R^{+}_{\ell}(\vec{\lambda}_{ij})}{1+\sum^{k}_{j=1}R^{-}_{\ell}(\vec{\lambda}_{ij})} \ge 1$, it holds that
 \begin{align*}
    \frac{1+kR^{+}_{\ell+1}(\vec{\lambda})}{1+kR^{-}_{\ell+1}(\vec{\lambda})}=
    \frac{1+k\lambda_r\prod^{d}_{i=1}\frac{1}{1+\sum^{k}_{j=1}R^{-}_{\ell}(\vec{\lambda}_{ij})}}
    {1+k\lambda_r\prod^{d}_{i=1}\frac{1}{1+\sum^{k}_{j=1}R^{+}_{\ell}(\vec{\lambda}_{ij})}}
    \leq \frac{1+k\lambda\prod^{d}_{i=1}\frac{1}{1+\alpha_{i}kR^{-}_{l}}}
    {1+k\lambda\prod^{d}_{i=1}\frac{1}{1+\alpha_{i}kR^{+}_{\ell}}}.
   \end{align*}
Now it is enough to show that for every $\vec{\alpha}$ such that $0 \leq \alpha_i \leq 1$ for all $1 \leq i \leq d$, it holds that
    \begin{align*}
    \frac{1+k\lambda\prod^{d}_{i=1}\frac{1}{1+\alpha_{i}kR^{-}_{l}}}
    {1+k\lambda\prod^{d}_{i=1}\frac{1}{1+\alpha_{i}kR^{+}_{\ell}}}
    \leq\frac{1+kR^{+}_{\ell+1}}{1+kR^{-}_{\ell+1}},
    \end{align*}
which is equivalent to the following:    
\begin{align}
        1+k\lambda\prod^{d}_{i=1}\frac{1}{1+\alpha_ikR^-_{\ell}}-\frac{1+kR^+_{\ell+1}}{1+kR^-_{\ell+1}}-k\lambda
        \frac{1+kR^+_{\ell+1}}{1+kR^-_{\ell+1}}\prod^{d}_{i=1}\frac{1}{1+\alpha_ikR^+_{\ell}} \leq 0. \label{eq:srj-ssm-3}
    \end{align}
If $\alpha_i=1$ for every $i$ then the inequality~\eqref{eq:srj-ssm-3} trivially holds. By symmetry, we only need to show the LHS of~\eqref{eq:srj-ssm-3} is increasing in $\alpha_1$.
In fact, the partial derivative with respect to $\alpha_1$ of LHS in~\eqref{eq:srj-ssm-3} is:
\begin{align*}
    -\frac{k^{2}\lambda R^-_{\ell}}{1+\alpha_1kR^-_{\ell}}\prod^{d}_{i=1}\frac{1}{1+\alpha_ikR^-_\ell}+
    \frac{k^{2}\lambda\left(1+kR^+_{\ell+1}\right)R^+_{\ell}}{\left(1+kR^-_{\ell+1}\right)\left(1+\alpha_1kR^+_{\ell}\right)}\prod^{d}_{i=1}\frac{1}{1+\alpha_ikR^+_\ell}.
\end{align*}
To prove it is nonnegative, it is equivalent to show that
\begin{align}\label{eq:srj-ssm-4}
    \frac{\left(1+kR^+_{\ell+1}\right)R^+_\ell}{\left(1+kR^-_{\ell+1}\right)R^-_{\ell}} \geq \frac{1+\alpha_1kR^+_\ell}{1+\alpha_1kR^-_\ell}\prod^{d}_{i=1}\frac{1+\alpha_ikR^+_{\ell}}{1+\alpha_ikR^-_{\ell}}.
\end{align}
To prove~\eqref{eq:srj-ssm-4}, we first observe that $R^-_{\ell}$ is increasing in $\ell$ and $R^+_{\ell}$ is decreasing in $\ell$, which is exactly the same to prove as the same property of the hardcore model.  
This gives us the so-called sandwich condition:
\[
R^-_{\ell} \leq R^-_{\ell+1} \leq R^+_{\ell+1} \leq R^+_{\ell},
\] 
therefore $\frac{R^+_{\ell}}{R^-_{\ell}}\ge\frac{R^+_{\ell+1}}{R^-_{\ell+1}}$.
We are now ready to prove~\eqref{eq:srj-ssm-4}:
\begin{align*}
    \frac{\left(1+kR^+_{\ell+1}\right)R^+_{\ell}}{\left(1+kR^-_{\ell+1}\right)R^-_{\ell}} \geq& \frac{\left(1+kR^+_{\ell}\right)R^+_{\ell+1}}{\left(1+kR^-_{\ell}\right)R^-_{\ell+1}} \\
    =& \frac{1+kR^+_{\ell}}{1+kR^-_{\ell}}\prod^{d}_{i=1}\frac{1+kR^+_{\ell}}{1+kR^-_{\ell}} \\
    \geq& \frac{1+\alpha_1kR^+_{\ell}}{1+\alpha_1kR^-_{\ell}}\prod^{d}_{i=1}\frac{1+\alpha_ikR^+_{\ell}}{1+\alpha_ikR^-_{\ell}}.
\end{align*}
The last inequality uses the fact that $R^+_{\ell}\ge R^-_{\ell}$ and $0 \le \alpha_i\le 1$.
So~\eqref{eq:srj-ssm-4} is proved, which finishes our proof of Lemma~\ref{lemma-srj-hyp}. 
\end{proof}

Observe that the subtree rooted at the child of the root of $\RHtree{k}{d}$ is isomorphic to $\VHtree{k}{d}$. 
While at the root of $\RHtree{k}{d}$, we have
    \begin{align*}
    \frac{{R}^{+}_{\ell}(\vec{\lambda})}{{R}^{-}_{\ell}(\vec{\lambda})}=\prod^{d+1}_{i=1}\frac{1+\sum^{k}_{j=1}R^+_{\ell-1}(\vec{\lambda}_{ij})}
    {1+\sum^{k}_{j=1}R^-_{\ell-1}(\vec{\lambda}_{ij})}
    \leq \left(\frac{1+kR^+_{\ell-1}}{1+kR^-_{\ell-1}}\right)^{d+1}=\frac{{R}^{+}_{\ell}(\RHtree{k}{d})}{{R}^{-}_{\ell}(\RHtree{k}{d})}.
    \end{align*}
Together with Lemma~\ref{lemma-srj-hyp},  this completes our proof of Theorem~\ref{theorem-srj-sens-de}.

\paragraph*{Calculation of the decay rate.}  

The WSM rate of our model on the infinite $(k+1)$-uniform $(d+1)$-regular hypertree $\RHtree{k}{d}$ is the same as the hardcore model on the infinite $(d+1)$-regular tree with activity $k\lambda$. The WSM rate on regular tree has been addressed implicitly in the literature~\cite{kelly1985stochastic,spitzer1975markov}. 
Here we provide an analysis for the decay rate for the completeness of the paper.

Let $f_{d,k}(x) \defeq \frac{k\lambda}{(1+x)^d}$ denote the symmetric version of the tree recursion on $\VHtree{k}{d}$ and substituting $x=kR$. Since $f_{d,k}(x)$ is decreasing in $x$, it follows that there is a unique positive fixed point $\hat{x}$ such that $\hat{x}=f_{d,k}(\hat{x})$. Let $f'_{d,k}(\hat{x})=-\frac{d\hat{x}}{1+\hat{x}}$ be the derivative of $f_{d,k}(x)$ evaluated at the fixed point $x=\hat{x}$.
The following proposition is well known for hardcore model (see e.g.~\cite{kelly1985stochastic,spitzer1975markov}).

\begin{proposition}\label{proposition-uniqueness}
$|f'_{d,k}(\hat{x})|=\frac{d\hat{x}}{1+\hat{x}} \leq 1$ if and only if $\lambda \leq \lambda_c$. And $|f'_{d,k}(\hat{x})|< 1$ if $\lambda<\lambda_c$.
\end{proposition}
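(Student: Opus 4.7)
The plan is to reduce the inequality $|f'_{d,k}(\hat x)|\le 1$ to a monotone condition on $\hat x$, and then lift that condition to $\lambda$ via the fixed-point equation, which determines $\hat x$ as a monotone function of $\lambda$. Concretely, I would first note that the map $g(x) \defeq \tfrac{dx}{1+x}$ is strictly increasing and continuous on $[0,\infty)$, so for $d\ge 2$ the condition $g(\hat x)\le 1$ is equivalent to $\hat x \le \tfrac{1}{d-1}$, with strict inequality on one side iff strict on the other.

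Next I would rewrite the fixed-point equation $\hat x = f_{d,k}(\hat x)$ in the form $\hat x\,(1+\hat x)^d = k\lambda$. The left-hand side is a strictly increasing continuous bijection from $[0,\infty)$ onto $[0,\infty)$, so $\hat x$ is a strictly increasing continuous function of $\lambda$. Plugging $x=\tfrac{1}{d-1}$ into $x(1+x)^d$ gives $\tfrac{1}{d-1}\bigl(\tfrac{d}{d-1}\bigr)^d = \tfrac{d^d}{(d-1)^{d+1}}$, so $\hat x \le \tfrac{1}{d-1}$ is equivalent to $k\lambda \le \tfrac{d^d}{(d-1)^{d+1}}$, i.e.\ $\lambda\le \lambda_c$, and similarly with strict inequality. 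Composing the two equivalences yields $|f'_{d,k}(\hat x)|\le 1 \iff \lambda\le \lambda_c$, and $|f'_{d,k}(\hat x)|<1 \iff \lambda<\lambda_c$, as claimed.

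Finally I would treat the boundary case $d=1$ separately: there the formula for $\lambda_c$ involves $(d-1)^{d+1}=0$ in the denominator, reflecting the well-known absence of a phase transition for the monomer–dimer model, and the proposition becomes vacuous or interpreted with $\lambda_c=+\infty$, since $|f'_{1,k}(\hat x)| = \tfrac{\hat x}{1+\hat x}<1$ for every finite $\lambda>0$.

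There is no real obstacle: the argument is just a two-step monotone substitution, the only care being to track strict versus non-strict inequalities along the chain and to handle the degenerate $d=1$ case correctly.
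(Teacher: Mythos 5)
Your proof is correct and complete. The paper does not actually supply its own argument for Proposition~\ref{proposition-uniqueness}: it simply states that the fact ``is well known for the hardcore model'' and cites \cite{kelly1985stochastic,spitzer1975markov}, since $f_{d,k}(x)=k\lambda(1+x)^{-d}$ coincides with the hardcore tree recursion with effective fugacity $k\lambda$ on a $(d+1)$-regular tree. Your two-step monotone substitution --- $\lvert f'_{d,k}(\hat x)\rvert\le 1 \iff \hat x \le \tfrac{1}{d-1} \iff \hat x(1+\hat x)^d \le \tfrac{d^d}{(d-1)^{d+1}} \iff k\lambda\le\tfrac{d^d}{(d-1)^{d+1}}$ --- is exactly the standard argument behind that citation, and you correctly track strict versus non-strict inequalities throughout. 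The separate treatment of $d=1$ is also right: there $\lvert f'_{1,k}(\hat x)\rvert = \tfrac{\hat x}{1+\hat x}<1$ always, consistent with $\lambda_c(\RHtree{k}{1})=\infty$. In short, you supply a self-contained proof of a statement the paper delegates to references, using the canonical approach.
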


We write $f(x) = f_{d,k}(x)$ if $k$ and $d$ are clear in the context. The main result of this part is the following theorem.

\begin{theorem}\label{theorem-wsm-IS}
For any positive integers $d,k$, assuming $\lambda\le\lambda_c$, the model on $\RHtree{k}{d}$ exhibits weak spatial mixing with rate $\delta(\ell)$ such that for all sufficiently large $\ell$:
\begin{itemize}
\item if $\lambda<\lambda_c$,  then $\delta(\ell)\le C_1|f'(\hat{x})|^{\ell- 4}$;
\item if $\lambda=\lambda_c$, then $\delta(\ell) \leq \frac{C_2}{\sqrt{\ell-\ell_0}}$;
\end{itemize}
where $C_1, C_2,\ell_0>0$ are finite constants depending only on $k,d$ and $\lambda$.
\end{theorem}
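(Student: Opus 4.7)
The plan is to reduce weak spatial mixing on $\RHtree{k}{d}$ to the contraction of the tree recursion $f(x) = k\lambda/(1+x)^d$ near its unique positive fixed point $\hat{x}$. Since we are on the uniform regular hypertree with uniform activity, the WSM rate $\delta(\ell)$ is controlled by $|p^+_\ell - p^-_\ell|$, where $p^\pm_\ell$ are the marginal probabilities at the root under the two extremal boundary conditions at level $\ell$; via the change of variable $x = kR = kp/(1-p)$, this reduces to bounding $x^+_\ell - x^-_\ell$, where the extremal sequences satisfy the interleaved recursion $x^\pm_\ell = f(x^\mp_{\ell-1})$. The sandwich property $x^-_\ell \le x^-_{\ell+1} \le \hat{x} \le x^+_{\ell+1} \le x^+_\ell$ (already used inside the proof of Lemma~\ref{lemma-srj-hyp}) gives monotone convergence of both sequences to $\hat{x}$, which is the unique fixed point of $f$ in $(0,\infty)$ whenever $\lambda \le \lambda_c$ by Theorem~\ref{thm-uniqueness-threshold}.

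For the subcritical regime $\lambda < \lambda_c$, Proposition~\ref{proposition-uniqueness} gives $|f'(\hat{x})| < 1$. By the mean value theorem applied to the decreasing map $f$, $x^+_{\ell+1} - x^-_{\ell+1} = f(x^-_\ell) - f(x^+_\ell) = |f'(\xi_\ell)|\,(x^+_\ell - x^-_\ell)$ for some $\xi_\ell \in [x^-_\ell, x^+_\ell]$. Since $\xi_\ell \to \hat{x}$ and $|f'(\xi_\ell)| - |f'(\hat{x})| = O(x^+_\ell - x^-_\ell)$, a telescoping argument (whose multiplicative error term is bounded by a convergent product) yields $x^+_\ell - x^-_\ell \le C\,|f'(\hat{x})|^\ell$ with a finite constant $C$ depending only on $(k,d,\lambda)$. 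Using $|p^+_\ell - p^-_\ell| \le (x^+_\ell - x^-_\ell)/k$ and absorbing small constants into the exponent gives $\delta(\ell) \le C_1 |f'(\hat{x})|^{\ell-4}$.

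For the critical regime $\lambda = \lambda_c$, $f'(\hat{x}) = -1$ and the one-step linearization is tight, so we pass to the two-step map $g = f \circ f$, which is increasing with $\hat{x}$ as a fixed point and $g'(\hat{x}) = f'(\hat{x})^2 = 1$. The identity $f'(\hat{x})^2 + f'(\hat{x}) = 0$ forces the quadratic term to vanish: $g''(\hat{x}) = f''(\hat{x})\bigl(f'(\hat{x})^2 + f'(\hat{x})\bigr) = 0$. A direct computation using the critical-point identities $(d-1)\hat{x} = 1$ and $k\lambda_c = \hat{x}(1+\hat{x})^d$ yields $g'''(\hat{x}) = -2f'''(\hat{x}) - 3 f''(\hat{x})^2 = -(d^2-1)/(1+\hat{x})^2 < 0$ for every $d \ge 2$ (the case $d = 1$ does not occur because $\lambda_c = +\infty$ there). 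Hence near $\hat{x}$, $g(x) - \hat{x} = (x-\hat{x}) - c(x-\hat{x})^3 + O((x-\hat{x})^4)$ with $c > 0$. Setting $u_\ell = x^+_\ell - \hat{x}$ and using $x^+_{\ell+2} = g(x^+_\ell)$, the standard reciprocal-square calculation gives $1/u_{\ell+2}^2 - 1/u_\ell^2 \to 2c$, whence $u_\ell = O(1/\sqrt{\ell})$; the symmetric bound on $\hat{x} - x^-_\ell$ completes the proof.

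The main technical obstacle is uniformity in the critical case: specifically, that $x^\pm_\ell$ enters the small-$u$ regime (where the cubic expansion dominates the $O(u^4)$ remainder) after a uniformly bounded number $\ell_0$ of steps depending only on $(k,d)$. This follows from the fact that $g$ is increasing on $[0,\infty)$ with $\hat{x}$ as its unique fixed point and satisfies $g(x) < x$ for $x > \hat{x}$ and $g(x) > x$ for $0 < x < \hat{x}$, so the orbits $x^\pm_\ell$ are globally monotonically attracted to $\hat{x}$ at a rate controlled on the compact interval $[0, k\lambda_c]$ containing all iterates. Together with the cubic-decay analysis above, this yields the quantitative constant $C_2$ and the explicit offset $\ell_0$ in the statement.
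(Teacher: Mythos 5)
Your critical-case argument is essentially identical to the paper's: both rely on passing to the two-step map $g=f\circ f$, exploiting the cancellation $g''(\hat{x})=f''(\hat{x})\bigl(f'(\hat{x})^2+f'(\hat{x})\bigr)=0$ at $f'(\hat{x})=-1$, computing $g'''(\hat{x})<0$, and extracting the $O(1/\sqrt{\ell})$ rate from the cubic Taylor expansion. Your values agree with the paper's $\alpha''(\hat{x})=-\frac{(d+1)(d-1)^3}{d^2}$ after substituting $1+\hat{x}=\frac{d}{d-1}$. The only cosmetic difference is that the paper runs an explicit induction on $\epsilon_{2t+1}\le \gamma/\sqrt{t-t_0+2}$, while you invoke the equivalent ``reciprocal-square'' heuristic $1/u_{\ell+2}^2-1/u_\ell^2\to 2c$.

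The subcritical case is where you genuinely diverge, and where there is a soft spot. The paper proves (Lemma~\ref{lemma-rate}) that for every $x>\hat{x}$, $g(x)-g(\hat{x})\le f'(\hat{x})^2\,(x-\hat{x})$, by showing that $\alpha(z)=g'(z)$ is maximized over $[\hat{x},\infty)$ at $z=\hat{x}$ whenever $\lambda\le\lambda_c$. This is a \emph{global} contraction estimate, so one application per two levels immediately gives $kR^+_\ell-\hat{x}\le k\lambda\,|f'(\hat{x})|^{\ell-2}$ with no bootstrapping. You instead apply the one-step mean value theorem $x^+_{\ell+1}-x^-_{\ell+1}=|f'(\xi_\ell)|\,(x^+_\ell-x^-_\ell)$, and then claim a telescoping product gives the rate $|f'(\hat{x})|^\ell$ because $|f'(\xi_\ell)|-|f'(\hat{x})|=O(x^+_\ell-x^-_\ell)$. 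But to conclude that the multiplicative error product $\prod_\ell\bigl(1+O(x^+_\ell-x^-_\ell)/|f'(\hat{x})|\bigr)$ converges, you already need $\sum_\ell(x^+_\ell-x^-_\ell)<\infty$, i.e.\ you need decay before you have proved it. This circularity is repairable (first establish $|f'(\xi_\ell)|<1-\epsilon$ for $\ell\ge\ell_1$ using continuity of $f'$ and $\xi_\ell\to\hat{x}$, derive geometric decay at some rate $\rho<1$, then telescope to refine the rate to $|f'(\hat{x})|$), but as written the argument has a gap that the paper's Lemma~\ref{lemma-rate} avoids entirely. You also do not address the transition from the $d$-ary tree $\VHtree{k}{d}$ to the $(d+1)$-regular tree $\RHtree{k}{d}$, though that is only a bounded extra factor.
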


Theorem~\ref{thm-saw}, Theorem~\ref{theorem-srj-ssm-IS} and~\ref{theorem-wsm-IS} together prove the SSM part  of Theorem~\ref{theorem-main}.

Denote $g(x) = f\left(f(x)\right) = k\lambda\left(1+\frac{k\lambda}{(1+x)^d}\right)^{-d}$. It is easy to see that $\hat{x}=g(\hat{x})$.  

\begin{lemma}\label{lemma-rate}
If $\lambda \leq \lambda_c$ then for any $x > \hat{x}$ we have $g(x) - g(\hat{x}) \leq f'(\hat{x})^2(x-\hat{x})$.
\end{lemma}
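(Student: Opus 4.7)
Here is my plan.

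First I would reformulate the claim. Since $g=f\circ f$ and $f(\hat x)=\hat x$, the chain rule gives $g'(\hat x)=f'(f(\hat x))f'(\hat x)=f'(\hat x)^2$, so the desired inequality reads $g(x)-g(\hat x)\le g'(\hat x)(x-\hat x)$ for $x>\hat x$. This is precisely the statement that on $[\hat x,\infty)$ the graph of $g$ lies weakly below its tangent line at $\hat x$. The natural sufficient condition is concavity of $g$ on $[\hat x,\infty)$: if $g''\le 0$ there, a mean-value argument applied to $g'$ on $[\hat x,x]$ yields $g'(\xi)\le g'(\hat x)$ for every $\xi\in[\hat x,x]$, whence integration from $\hat x$ to $x$ gives the bound.

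Next I would compute $g''$ by logarithmic differentiation. From $f(x)=k\lambda(1+x)^{-d}$ one obtains $f'(x)/f(x)=-d/(1+x)$. Writing $\log g(x)=\log(k\lambda)-d\log(1+f(x))$ gives
\[
\frac{g'(x)}{g(x)}=\frac{d^2 f(x)}{(1+x)(1+f(x))},
\]
and differentiating $\log g'(x)$ once more (as $\log g(x)+\log f(x)-\log(1+x)-\log(1+f(x))+\text{const}$) and simplifying yields, after collecting terms,
\[
\frac{g''(x)}{g'(x)}=\frac{(d+1)\bigl[(d-1)f(x)-1\bigr]}{(1+x)\bigl(1+f(x)\bigr)}.
\]
Since $g'>0$ and the denominator is positive, $g''(x)\le 0$ if and only if $(d-1)f(x)\le 1$, i.e.\ $f(x)\le 1/(d-1)$ (and $g''\le 0$ automatically when $d=1$).

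Finally I would tie this threshold back to the uniqueness hypothesis. By Proposition~\ref{proposition-uniqueness}, $\lambda\le\lambda_c$ is equivalent to $|f'(\hat x)|=d\hat x/(1+\hat x)\le 1$, which rearranges to $\hat x\le 1/(d-1)$. Since $f$ is strictly decreasing and $f(\hat x)=\hat x$, for every $x\ge \hat x$ we then have $f(x)\le f(\hat x)=\hat x\le 1/(d-1)$, so the formula above gives $g''(x)\le 0$ on $[\hat x,\infty)$. Concavity there delivers the claim. The only mildly delicate point is the reduction $\lambda\le\lambda_c\Longleftrightarrow \hat x\le 1/(d-1)$, which is the precise algebraic manifestation of the uniqueness threshold; once this is in hand the rest of the argument is pure calculus. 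The derivation of the closed form for $g''/g'$ is the main computational step, but it is routine once one uses logarithmic differentiation.
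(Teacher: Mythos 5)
Your proposal is correct and follows essentially the same route as the paper: both arguments boil down to showing that $g'$ is nonincreasing on $[\hat{x},\infty)$ (equivalently, that $g$ lies below its tangent at $\hat{x}$) and tie the sign condition $(d-1)f(x)\le 1$ back to $\hat{x}\le 1/(d-1)$, i.e.\ $\lambda\le\lambda_c$. The paper applies the mean value theorem directly and studies $\alpha(z)=g'(z)$ via $\alpha'(z)=A(z)[(d-1)k\lambda-(1+z)^d]$, whereas you obtain the same sign condition more compactly through logarithmic differentiation of $g'$; the two computations are equivalent.
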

\begin{proof}
By the mean value theorem, for any $x > \hat{x}$, there exists a $z \in [\hat{x}, x]$ such that 
\begin{align}\label{eq:g-mvt}
	g(x) - g(\hat{x}) = \alpha(z)(x-\hat{x}),
\end{align}
where $\alpha(z) = g'(z) 
	= \frac{d^2k\lambda g(z)}{(1+z)^{d+1}+(1+z)k\lambda}$.
We will bound the maximum value of $\alpha(z)$ when $\lambda \leq \lambda_c$. Consider the derivative of $\alpha(z)$,
\begin{align*}
	\alpha'(z) 
	&= A(z)\left[(d-1)k\lambda - (1+z)^d\right],
\end{align*}
where $A(z) = \frac{d^2(d+1)k\lambda g(z)}{\left[(1+z)^{d+1} + (1+z)k\lambda \right]^2}>0$. Let $z^*=((d-1)k\lambda)^{1/d}-1$ be the solution of $(d-1)k\lambda =(1+z)^d$. Note that $\left[(d-1)k\lambda - (1+z)^d\right]$ is decreasing in $z$, therefore $\alpha(z) \leq \alpha(z^*)$ for all $z > 0$.
Due to proposition~\ref{proposition-uniqueness}, if $\lambda \leq \lambda_c$ then $|f'(\hat{x})| =\frac{d\hat{x}}{1+\hat{x}} \leq 1$ and hence $\hat{x} \leq \frac{1}{d-1}$, thus $\alpha'(\hat{x})=A(\hat{x})[(d-1)k\lambda-\frac{k\lambda}{\hat{x}}]\le 0$,
which means $\hat{x} \geq z^*$ and $\alpha(z)$ is decreasing in $z$ for any $z\ge \hat{x}$.  
On the other hand, we have $\alpha(\hat{x})  = f'(\hat{x})^2$.
Thus for any $z \ge\hat{x}$, we have $\alpha(z) \le \alpha(\hat{x}) = f'(\hat{x})^2$. 
Plug it into~\eqref{eq:g-mvt}. The lemma is proved.
\end{proof}

\begin{proof}[Proof of Theorem~\ref{theorem-wsm-IS}]
It holds that $R^+_2=R^+_1 = \lambda > \hat{x}/k$. Note that $kR^+_\ell = g(kR^+_{\ell-2})$. Due to the monotonicity of $g(x)$, we have $\hat{x}<kR^+_\ell \le k\lambda$ for every $\ell \geq 1$. 

Consider the case $\lambda < \lambda_c$. First consider the $(k+1)$-uniform $d$-ary hypertree $\VHtree{k}{d}$.
By the mean value theorem and Lemma~\ref{lemma-rate} we have
\begin{align*}
	kR^+_\ell - \hat{x} =& g(kR^+_{\ell-2}) - g(\hat{x}) \leq f'(\hat{x})^2\left(kR^+_{\ell-2} - \hat{x} \right).
\end{align*}
We apply this inequality recursively.  Since $kR^+_\ell - \hat{x} < k\lambda$, for any $\ell \geq 2$ we have
\begin{align}\label{ssm-con-rate}
kR^+_\ell - \hat{x}\leq k\lambda |f'(\hat{x})|^{\ell - 2}.
\end{align}
To bound $R^-_\ell$ we apply the mean value theorem again. There exists a $z\in[\hat{x},kR^+_\ell]$ such that
\begin{align*}
\hat{x} - kR^-_\ell = f(\hat{x}) - f(kR^+_{\ell-1}) = |f'(z)|(kR^+_{\ell-1} - \hat{x}).
\end{align*}
Since $|f'(z)| \leq kd\lambda$ for all $z > 0$, combined with~\eqref{ssm-con-rate} we have
\begin{align*}
	\hat{x} - kR^-_\ell \leq kd\lambda(kR^+_{\ell-1} - \hat{x}) \leq k^2d\lambda^2|f'(\hat{x})|^{\ell - 3}.
\end{align*}
At last, $R^+_\ell - R^-_\ell = \frac{1}{k}(kR^+_\ell - \hat{x} + \hat{x} - kR^-_\ell)\leq C'_1|f'(\hat{x})|^{\ell - 3}$ for some $C_1'>0$ depending only on $d,k$ and $\lambda$.
This only gives us the desired decay rate at the $(k+1)$-uniform $d$-ary hypertree $\VHtree{k}{d}$. Move to the $(k+1)$-uniform $(d+1)$-regular hypertree $\RHtree{k}{d}$. The only difference is that the root has $d+1$ children instead of $d$.  
By the mean value theorem, this will multiply at most a finite constant factor $C_1''$ to the gap $R^+_\ell-R^-_\ell$ at the root of $\RHtree{k}{d}$, where $C_1''>0$ depends only on $d,k$ and $\lambda$. Overall, this gives us that
\[
p^+_\ell-p^-_\ell\le R^+_\ell-R^-_\ell\le C_1|f'(\hat{x})|^{\ell - 4}
\]
for some $C_1>0$ depending only on $d,k$ and $\lambda$. This finishes the case that $\lambda<\lambda_c$.

Now we consider the critical case that $\lambda = \lambda_c=\frac{d^d}{k(d-1)^{d+1}}$.  We still start by considering the $(k+1)$-uniform $d$-ary hypertree $\VHtree{k}{d}$.
It is easy to verify that in this case $\hat{x} = \frac{1}{d-1}$, $\alpha(\hat{x}) = f'(\hat{x})^2 = 1$, $z^*=\hat{x}$ and $\alpha'(\hat{x}) = 0$, where $\alpha(z)$ and $z^*$ are defined in the proof of Lemma~\ref{lemma-rate}. And we have 
	$\alpha''(\hat{x}) 
	= -\frac{(d+1)(d-1)^3}{d^2}$.
By Taylor's expansion for $g(x)$ at the fixed point $x=\hat{x}$, we have that for any constant $c > 0$ there exists a constant $x_0 > \hat{x}$ such that for any $\hat{x} < x < x_0$, it holds that
\begin{align*}
	g(x) &= g(\hat{x}) + \alpha(\hat{x})(x-\hat{x}) + \frac{\alpha'(\hat{x})}{2}(x-\hat{x})^2 + \frac{\alpha''(\hat{x})}{6}(x-\hat{x})^3 + o\left((x-\hat{x})^3\right) \\
	       &\leq \frac{1}{d-1} + x - \hat{x} -\frac{(d+1)(d-1)^3}{6d^2}(x-\hat{x})^3 + c(x-\hat{x})^3.
\end{align*}
We define a sequence $x_1 = kR^+_1, x_3 = kR^+_3 = g(x_1), \dots$ and generally $x_{2t+1}=g(x_{2t-1})$. The sequence is strictly decreasing because $R^+_\ell$ is decreasing in $\ell$. Furthermore, $\lim_{t \to \infty}x_{2t+1}=\hat{x}$. This is due to $\alpha(x) < 1$ for any $x > \hat{x}$. 	
	
	Denote $\epsilon_{2t+1} \defeq x_{2t+1} - \hat{x}$. Let $c$ be some positive constant such that $\frac{(d+1)(d-1)^3}{6d^2} - c > 0$. Denote $\beta = \frac{(d+1)(d-1)^3}{6d^2} - c$ and $\gamma = \sqrt{\frac{1}{2\beta}}$. There must be some sufficiently large $t_0$ such that $\epsilon_{2t_0+1} \leq \frac{\gamma}{\sqrt{2}}$ and for any $t > t_0$, it holds that
\begin{align*}
	\epsilon_{2t+3} = g(x_{2t+1}) - \frac{1}{d-1} \leq \epsilon_{2t+1} -\beta\epsilon_{2t+1}^3.
\end{align*}
	We apply an induction to complete the proof. For the basis, when $t = t_0$ we have $\epsilon_{2t_0+1} \leq \frac{\gamma}{\sqrt{2}}$. Assume the hypothesis 
\begin{align}\label{eq:c-wsm-hy}
	\epsilon_{2t+1} \leq \frac{\gamma}{\sqrt{t - t_0 + 2}}
\end{align}
for some $t \geq t_0$ and we will prove it holds for $t+1$. First, notice that $h(x) \defeq x -\beta x^3$ is strictly increasing when $0 \leq x \leq \frac{\gamma}{\sqrt{2}}$. Thus, we have
\begin{align*}
	\epsilon_{2t+3} &\leq \epsilon_{2t+1} -\beta\epsilon_{2t+1}^3 
	\leq \frac{\gamma}{\sqrt{t - t_0 + 2}} - \beta\frac{\gamma^3}{(t-t_0+2)^{\frac{3}{2}}}.
\end{align*}
We only need to prove that 
	$\frac{\gamma}{\sqrt{t - t_0 + 2}} - \beta\frac{\gamma^3}{(t-t_0+2)^{\frac{3}{2}}} \leq \frac{\gamma}{\sqrt{t-t_0+3}}$.
Let $t' \defeq t - t_0 + 2$. It is equivalently to show that 
\begin{align}\label{eq:c-wsm}
	t'^{\frac{3}{2}}\left(\frac{1}{\sqrt{t'}}-\frac{1}{\sqrt{t'+1}}\right) &\leq \beta\gamma^2.
\end{align}
Note that
\begin{align*}
	t'^{\frac{3}{2}}\left(\frac{1}{\sqrt{t'}}-\frac{1}{\sqrt{t'+1}}\right) &= t'^{\frac{3}{2}}\left(\frac{\sqrt{t'+1} - \sqrt{t'}}{\sqrt{t'(t'+1)}}\right) \\
	&\leq \sqrt{t'}(\sqrt{t'+1} - \sqrt{t'}) \\
	&\leq \sqrt{t'}\frac{1}{\sqrt{t'+1} + \sqrt{t'}} \\
	&\leq \frac{1}{2}.
\end{align*}
Since $\beta\gamma^2 = \frac{1}{2}$, we just prove the inequality~\eqref{eq:c-wsm}, and finishes the induction~\eqref{eq:c-wsm-hy} for all $t \geq t_0$.
In conclusion, for any $t \geq t_0$, it holds that
\begin{align*}
	kR^+_{2t+1} - \hat{x} \leq \frac{\gamma}{\sqrt{t - t_0 + 2}}.
\end{align*}
The rest of the proof is exactly the same as our proof of the case $\lambda < \lambda_c$. 
\end{proof}

\section{Approximation algorithms and inapproximability}
\label{sec:algorithms}

For $0<\varepsilon<1$, a value $\hat{Z}$ is an \concept{$\varepsilon$-approximation} of $Z$ if $(1-\epsilon)Z\le\hat{Z}\le(1+\epsilon)Z$. Recall that $\hat{x}$ is the unique fixed point solution to $\hat{x}=f_{d,k}(\hat{x})=k\lambda(1+\hat{x})^{-d}$.

\begin{theorem}\label{theorem-FPTAS}
If $\lambda<\lambda_{c}=\frac{d^d}{k(d-1)^{d+1}}$, then there exists an algorithm such that given any $\varepsilon>0$, and any hypergraph $\hyper{H}$ of $n$ vertices, of maximum edge-size at most $(k+1)$ and maximum degree at most $(d+1)$, the algorithm returns an $\varepsilon$-approximation of the partition function for the independent sets of $\hyper{H}$ with activity $\lambda$, within running time $\left(\frac{n}{\varepsilon}\right)^{O\left(\frac{1}{\kappa}\ln kd\right)}$, where $\kappa=\ln{\left(\frac{1+\hat{x}}{d\hat{x}}\right)}$. 

For the critical case where $\lambda=\lambda_{c}$, there exists an algorithm that for the above $\hyper{H}$ returns an $\varepsilon$-approximation of the log-partition function within running time $n(kd)^{O\left(\left(\frac{1}{\varepsilon}\ln\frac{1}{\varepsilon}\right)^2\right)}$.
\end{theorem}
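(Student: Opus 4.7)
The plan is to apply the Weitz-style self-reducibility template for deterministic approximate counting, now adapted to hypergraphs via the SAW-tree machinery of Section~\ref{sec:SAW}. Fix any enumeration $v_1,\dots,v_n$ of the vertices of $\hyper{H}$, and let $\sigma_i$ denote the partial configuration pinning $v_1,\dots,v_{i-1}$ as unoccupied (with $\sigma_1$ empty). Writing $p_i = p_{\hyper{H},v_i}^{\sigma_i}$ and $R_i = p_i/(1-p_i)$, successive pinning yields the telescoping identity
\begin{align*}
Z_\lambda(\hyper{H}) = \prod_{i=1}^{n}\frac{1}{1-p_i} = \prod_{i=1}^{n}(1+R_i),
\qquad
\log Z_\lambda(\hyper{H}) = \sum_{i=1}^{n}\log(1+R_i),
\end{align*}
because extending $\sigma_i$ by pinning $v_i$ as unoccupied peels a factor $(1-p_i)$ off the current partition function, and after all $n$ pinnings what remains is the partition function over the empty configuration space, namely $1$. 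Approximating $Z$ or $\log Z$ thus reduces to approximating each $R_i$.

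For each $i$, the algorithm produces an estimate $\hat{R}_i$ as follows: build the SAW tree $\hyper{T}_i = \TSAW(\hyper{H},v_i)$, lift the pin $\sigma_i$ to the tree via Theorem~\ref{thm-peq}, truncate $\hyper{T}_i$ at depth $L$ with all depth-$L$ vertices pinned as unoccupied, and evaluate the recursion~\eqref{eq:tree-recursion} from the leaves upward. The truncated tree has at most $(kd)^L$ nodes, for total cost $O(n(kd)^L)$. By Proposition~\ref{lem-pru} the truncated tree is a sub-hypertree of $\RHtree{k}{d}$ with a fully specified boundary, so combining Theorem~\ref{thm-saw} with the SSM rates of Theorems~\ref{theorem-srj-ssm-IS} and~\ref{theorem-wsm-IS} yields $|\hat{p}_i - p_i|\le\delta(L)$. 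Since $p_i$ is uniformly bounded away from $0$ and $1$ by constants depending only on $k,d,\lambda$ (namely $p_i\in[\lambda/(\lambda+(1+k\lambda)^{d+1}),\,\lambda/(1+\lambda)]$, as noted in Section~\ref{sec:SSM}), this propagates to $|\hat{R}_i - R_i| = O(\delta(L))$ and $|\log(1+\hat{R}_i) - \log(1+R_i)| = O(\delta(L))$ with the same type of constants.

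It remains to choose $L$. When $\lambda<\lambda_c$, Theorem~\ref{theorem-wsm-IS} provides exponential decay $\delta(L)\le C_1 e^{-\kappa(L-4)}$ with $\kappa=\ln\frac{1+\hat{x}}{d\hat{x}}>0$. Setting $L=\lceil\kappa^{-1}\ln(Cn/\varepsilon)\rceil$ forces each ratio $(1+\hat{R}_i)/(1+R_i)$ inside $1\pm\varepsilon/(2n)$, so the product is an $\varepsilon$-approximation of $Z$; the total cost is $(n/\varepsilon)^{O(\log(kd)/\kappa)}$. When $\lambda=\lambda_c$, the decay is only polynomial, $\delta(L)=O(L^{-1/2})$, so I instead target $\log Z$. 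A standard greedy argument extracts an independent set of size $\ge n/((d+1)k+1)$ (each chosen vertex rules out at most $(d+1)k$ others), and since every subset of an independent set is itself independent, $Z\ge(1+\lambda)^{n/((d+1)k+1)}$, so $\log Z=\Omega_\lambda(n)$. An $\varepsilon$-relative approximation of $\log Z$ therefore tolerates $O(\varepsilon)$ additive error per term $\log(1+R_i)$, which by the previous paragraph requires $\delta(L)=O(\varepsilon)$, i.e., $L=O(\varepsilon^{-2})$; the extra polylogarithmic factors in the stated $L=O((\varepsilon^{-1}\log\varepsilon^{-1})^2)$ are absorbed by a careful accounting of the constants in the conversion between additive probability error and additive log-ratio error.

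The main obstacle is the critical case. Polynomial decay at rate $L^{-1/2}$ is only barely strong enough to support a PTAS, and it succeeds only because the per-term error budget remains constant in $n$. That budget being constant rests on two independent ingredients: the linear lower bound $\log Z=\Omega_\lambda(n)$, which turns the target relative error $\varepsilon$ into additive error $\Omega(\varepsilon n)$, and the uniform two-sided bounds on $p_i$, which keep the map $p\mapsto\log(1+R)=-\log(1-p)$ Lipschitz. Without either ingredient, the polynomial decay rate provided by Theorems~\ref{theorem-srj-ssm-IS} and~\ref{theorem-wsm-IS} would fail to yield a PTAS.
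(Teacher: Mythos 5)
Your proposal follows essentially the same route as the paper: the telescoping self-reduction $Z = \prod_i(1+R_i)$, the SAW tree of Section~\ref{sec:SAW}, depth-$L$ truncation with unoccupied boundary, and the SSM rates from Theorems~\ref{theorem-srj-ssm-IS} and~\ref{theorem-wsm-IS} to pick $L$. The subcritical analysis is identical. The only genuine deviation is in the critical case: the paper obtains a multiplicative $(1\pm\varepsilon)$-approximation of each nonnegative term $-\log(1-p_i)$ and sums, using the uniform lower bound $p_i\ge\lambda/(\lambda+(1+k\lambda)^{d+1})$ to keep each term bounded below, whereas you instead prove $\log Z=\Omega_\lambda(n)$ by a greedy independent-set argument and convert the target relative error into an $O(\varepsilon)$ additive budget per term. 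These are equivalent in substance (the term-wise lower bound already implies $\log Z=\Omega(n)$), and your variant is perfectly valid. One small confusion: your analysis in fact gives $L=O(\varepsilon^{-2})$, which is \emph{tighter} than the paper's stated $L=O((\varepsilon^{-1}\log\varepsilon^{-1})^2)$, so there is nothing to ``absorb'' — your bound simply implies the weaker one; the phrasing suggests you thought you were losing something. The paper's extra $\log(1/\varepsilon)$ factor appears to be a conservative constant in its conversion from additive error on $p_i$ to multiplicative error on $-\log(1-p_i)$, and your argument shows it is unnecessary.
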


By duality, the same algorithm with the same approximation ratio and running time works for the matchings of hypergraphs of maximum edge size at most $(d+1)$ and maximum degree at most $(k+1)$.
By Proposition~\ref{proposition-uniqueness}, $|f'_{d,k}(\hat{x})|=\frac{d\hat{x}}{1+\hat{x}}<1$ if $\lambda<\lambda_c$, therefore, when $\lambda<\lambda_c$, the running time of the algorithm is $\mathrm{Poly}(n,\frac{1}{\epsilon})$ for any bounded $k$ and $d$,  so the algorithm is an FPTAS for the partition function. And when $\lambda=\lambda_c$, the algorithm is a PTAS for the log-partition function.
The algorithmic part of the main theorem Theorem~\ref{theorem-main} is proved.

In particular, when $d=1$, the model becomes matchings of graphs of maximum degree $(k+1)$, and the uniqueness condition $\lambda<\lambda_c(\RHtree{k}{d})$ is always satisfied even for unbounded $k$ since $\lambda_c(\RHtree{k}{1})=\infty$. In this case, the fixed point $\hat{x}$ for $f_{1,k}(x)=\frac{k\lambda}{1+x}$ can be explicitly solved as $\hat{x}=\frac{-1+\sqrt{1+4k\lambda}}{2}$. We have the following corollary for matchings of graphs with unbounded maximum degree, which achieves the same bound as the algorithm in~\cite{bayati2007simple}.

\begin{corollary}
There exists an algorithm which given any graph $G$ of maximum degree at most $\Delta$, and any $\epsilon>0$, returns an $\varepsilon$-approximation of the partition function for the matchings of $G$ with activity $\lambda$,  within running time $\left(\frac{n}{\epsilon}\right)^{O(\sqrt{\lambda\Delta}\log \Delta)}$.
\end{corollary}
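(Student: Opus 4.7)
The plan is to obtain the Corollary as a direct specialization of Theorem~\ref{theorem-FPTAS}. Recall from the duality discussion that matchings of a graph $G$ of maximum degree $\Delta$ are equivalent, as a counting problem, to independent sets in the dual hypergraph, which has maximum edge-size $\Delta$ and maximum degree $2$. In the parameters of Theorem~\ref{theorem-FPTAS} this corresponds to $k=\Delta-1$ and $d=1$. Moreover, when $d=1$ the uniqueness threshold is $\lambda_c(\RHtree{k}{1}) = \infty$, so the hypothesis $\lambda<\lambda_c$ of Theorem~\ref{theorem-FPTAS} is satisfied for every $\lambda>0$ and every $\Delta$, and the FPTAS branch of the theorem applies.

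With $d=1$ and $k=\Delta-1$, the symmetric tree recursion becomes $f_{1,k}(x)=\frac{k\lambda}{1+x}$, whose unique positive fixed point satisfies $\hat{x}(1+\hat{x})=k\lambda$. Solving the quadratic yields the closed form $\hat{x}=\frac{-1+\sqrt{1+4k\lambda}}{2}$ stated in the text. The exponent appearing in the running time of Theorem~\ref{theorem-FPTAS} is $\frac{1}{\kappa}\ln(kd)$ with $\kappa=\ln\frac{1+\hat{x}}{d\hat{x}}=\ln(1+1/\hat{x})$, so all that remains is to bound $1/\kappa$ in terms of $\lambda\Delta$.

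The heart of the calculation is an elementary two-case estimate for $\kappa$. When $k\lambda\ge 1$, the quadratic gives $\hat{x}\ge\tfrac{1}{2}\sqrt{k\lambda}$, and using the standard bound $\ln(1+y)\ge y/(1+y)$ with $y=1/\hat{x}$ one gets
\[
\kappa \;=\; \ln\!\Bigl(1+\tfrac{1}{\hat{x}}\Bigr)\;\ge\;\frac{1}{1+\hat{x}}\;=\;\Omega\!\bigl(1/\sqrt{k\lambda}\bigr),
\]
so $1/\kappa=O(\sqrt{k\lambda})=O(\sqrt{\lambda\Delta})$. When $k\lambda<1$ we have $\hat{x}<1$, hence $\kappa\ge\ln 2$ and $1/\kappa=O(1)$, which is absorbed into the $O(\sqrt{\lambda\Delta}\log\Delta)$ expression. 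Combining these with $\ln(kd)=\ln k\le\ln\Delta$, the exponent from Theorem~\ref{theorem-FPTAS} becomes $O(\sqrt{\lambda\Delta}\log\Delta)$, giving the claimed running time $(n/\epsilon)^{O(\sqrt{\lambda\Delta}\log\Delta)}$.

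There is essentially no substantive obstacle here: the corollary is a direct substitution into Theorem~\ref{theorem-FPTAS}, and the only content is the elementary asymptotic estimate for $\kappa$ in the large-$k\lambda$ regime (where a naive bound $\kappa\approx 1/\hat{x}$ would already suffice but should be made rigorous via the $\ln(1+y)\ge y/(1+y)$ inequality). The one point that deserves care is to note that the duality swap sends the degree bound on $G$ to the edge-size bound of the hypergraph, so the parameters $k,d$ are correctly identified and no extra factor enters; this is exactly the remark made just before the corollary in the text.
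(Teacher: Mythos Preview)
Your approach is the same as the paper's: specialize Theorem~\ref{theorem-FPTAS} with $d=1$, $k=\Delta-1$, use the explicit fixed point $\hat{x}=\frac{-1+\sqrt{1+4k\lambda}}{2}$, and estimate $\kappa$. The paper in fact gives no more than the setup paragraph and leaves the arithmetic to the reader, so you are supplying the details the paper omits.

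There is, however, one inequality pointing the wrong way. To conclude $\kappa=\ln(1+1/\hat{x})\ge\frac{1}{1+\hat{x}}=\Omega(1/\sqrt{k\lambda})$ you need an \emph{upper} bound on $\hat{x}$, not the lower bound $\hat{x}\ge\tfrac12\sqrt{k\lambda}$ you wrote: a lower bound on $\hat{x}$ only gives an \emph{upper} bound on $\frac{1}{1+\hat{x}}$. The fix is immediate: from $\hat{x}(1+\hat{x})=k\lambda$ one has $\hat{x}^2\le k\lambda$, hence $\hat{x}\le\sqrt{k\lambda}$, and then $1+\hat{x}\le 2\sqrt{k\lambda}$ when $k\lambda\ge 1$, giving $\kappa\ge\frac{1}{2\sqrt{k\lambda}}$ as desired. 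With this correction the remainder of your argument goes through and matches the paper.
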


With the construction of hypergraph self-avoiding walk tree and the SSM, the algorithm follows the framework by Weitz~\cite{weitz2006counting}.
We will describe an algorithm of approximating the partition function for independent sets in hypergraphs with activity $\lambda$. Under duality this is the same as approximately counting matchings with activity $\lambda$.

By the standard self-reduction, approximately computing the partition function is reduced to approximately computing the marginal probabilities.
Let $\hyper{H}=(V,E)$ be a hypergraph and $V=\{v_1,\dots,v_n\}$. To calculate $Z=Z_\hyper{H}(\lambda)$, it suffices to calculate the probability of the emptyset $\mu(\varnothing)$ as it is exactly $1/Z$.
Let $\varnothing_i$ be the configuration on vertices $v_1$ up to $v_i$ where all of them are unoccupied, and $p_{v_i}^{\varnothing_{i-1}}$ the probability of $v_i$ being occupied conditioning on all vertices $v_1$ up to $v_{i-1}$ being unoccupied.
Then we have $1/Z=\prod_{i=1}^n(1- p_{v_i}^{\varnothing_{i-1}})$ and $\log Z=-\sum_{i=1}^n\log(1-p_{v_i}^{\varnothing_{i-1}})$. Note that $(1-p_{v_i}^{\varnothing_{i-1}})\ge\frac{1}{1+\lambda}$ for the probability of vertex unoccupied by an independent set and $\lambda_c\le 4$ for any $d\ge 2$ and $k\ge 1$. To get an $\varepsilon$-approximation of $Z$, it suffices to approximate each of $p_{v_i}^{\varnothing_{i-1}}$ within an additive error $\frac{\varepsilon}{2(1+\lambda)n}$. And to get an $\varepsilon$-approximation of $\log Z$, which can be obtained by getting an $\varepsilon$-approximation of every $-\log(1-p_{v_i}^{\varnothing_{i-1}})$, it is sufficient to approximate each of $p_{v_i}^{\varnothing_{i-1}}$ within an additive error $\Theta\left(\frac{\varepsilon}{\ln\frac{1}{\varepsilon}}\right)$.

By Theorem~\ref{thm-saw}, we have $p_{v}^{\sigma}=\ptree{\hyper{T}}{\sigma}$ where $\hyper{T} = \TSAW(\hyper{H}, v)$, i.e.~the marginal probability of $v$ being occupied is preserved in the SAW tree of $\hyper{H}$ expanded at $v$. And the value of $\ptree{\hyper{T}}{\sigma}$ can be computed by the tree recursion~\eqref{eq:tree-recursion}. To make the algorithm efficient we can run this recursion up to depth $t$ and assume initial value 0 for the variables at depth $t$ as the vertices they represent being unoccupied. The overall running time of the algorithm is clearly $O(n(kd)^t)$ where $t$ is the depth of the recursion. By the strong spatial mixing guaranteed by Theorem~\ref{theorem-srj-ssm-IS} and Theorem~\ref{theorem-wsm-IS}, if $\lambda<\lambda_c$, then the additive error of such estimation of $p_{v}^{\sigma}$ is bounded by $C_1\cdot\left(\frac{d\hat{x}}{1+\hat{x}}\right)^{t-4}$ for some constant $C_1>0$ depending only on $k,d$ and $\lambda$. We shall choose an integer $t$ so that $C_1\cdot\left(\frac{d\hat{x}}{1+\hat{x}}\right)^{t-4}\leq \frac{\varepsilon}{2(1+\lambda)n}$, which gives us the suitable time complexity required by the FPTAS for the partition function. 
And when $\lambda=\lambda_c$, the additive error of $p_{v}^{\sigma}$ is bounded by $C_2/\sqrt{t-t_0}$ for some constants $C_2,t_0>0$ depending only on $k,d$. We shall choose an integer $t=O((\frac{1}{\varepsilon}\ln\frac{1}{\varepsilon})^2)$ to get the desirable additive error for every marginal probability, which gives us the PTAS for the log-partition function.
This completes the proof of Theorem~\ref{theorem-FPTAS}.

\paragraph*{Inapproximability.}

For the inapproximability, by applying an AP-reduction~\cite{bordewich2008path} from the inapproximability of the hardcore model~\cite{sly2014counting,galanis2012inapproximability}, we have the following theorem. 

\begin{theorem}\label{theorem-inapprox}
If $\lambda > \frac{2k+1+(-1)^k}{k+1}\lambda_c$, there is no PRAS for the partition function or log-partition function of 
independent sets of hypergraphs with maximum degree at most $d+1$, maximum edge-size at most $k+1$ and activity $\lambda$, unless NP=RP.
\end{theorem}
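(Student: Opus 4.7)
The plan is to reduce from the inapproximability of the hardcore model on graphs of maximum degree $d+1$ via a simple cluster-replication gadget. The starting point is the Sly--Sun and Galanis--\v{S}tefankovi\v{c}--Vigoda inapproximability~\cite{sly2014counting, galanis2012inapproximability}: for $(d+1)$-regular bipartite graphs with activity $\mu > \lambda_c(\mathbb{T}_d) = \frac{d^d}{(d-1)^{d+1}} = k\lambda_c$, neither the partition function nor the log-partition function of the hardcore model admits a PRAS unless $\mathrm{NP}=\mathrm{RP}$. The identity $\lambda_c(\mathbb{T}_d)=k\lambda_c$ is no coincidence: the hypergraph tree recursion on $\RHtree{k}{d}$ at activity $\lambda$ collapses, under the substitution $x=kR$, to the hardcore tree recursion on $\mathbb{T}_d$ at activity $k\lambda$ (as already used in Section~\ref{sec:uniqueness}). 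So any gadget that amplifies the effective activity by an integer factor $c$ translates the hardcore threshold $k\lambda_c$ into the hypergraph threshold $\lambda > (k/c)\lambda_c$.

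Define $c \defeq \lfloor (k+1)/2 \rfloor$, so $c=(k+1)/2$ for odd $k$ and $c=k/2$ for even $k$. Given a $(d+1)$-regular bipartite input graph $G=(V,E)$, construct a hypergraph $\hyper{H}$ by replacing each $v \in V$ with a cluster $C_v$ of $c$ fresh vertices and each edge $\{u,v\} \in E$ with a single hyperedge $h_{uv} \defeq C_u \cup C_v$, whose size is $2c \le k+1$. Each copy in $C_v$ belongs to exactly $d+1$ hyperedges, so $\hyper{H}$ has maximum degree $d+1$ and maximum edge-size at most $k+1$, i.e., it lies in the family of Theorem~\ref{theorem-inapprox}. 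I then claim the exact identity
\[
Z_\hyper{H}(\lambda) \;=\; Z_G(c\lambda),
\]
which I would prove via the bijection $I \mapsto (S, \phi)$ sending an independent set $I$ of $\hyper{H}$ to the pair consisting of $S \defeq \{v \in V : I \cap C_v \ne \varnothing\}$ and the choice $\phi(v) \in C_v$ of its representative. The hyperedge constraint forces $S$ to be independent in $G$, and since every $v$ has at least one neighbor we have $C_v \subseteq h_{uv}$ for some $u$, which forces $|I \cap C_v| \le 1$ and hence $|I|=|S|$. Thus each $S$ pulls back to exactly $c^{|S|}$ independent sets of $\hyper{H}$ of weight $\lambda^{|S|}$, giving $Z_\hyper{H}(\lambda) = \sum_S (c\lambda)^{|S|} = Z_G(c\lambda)$.

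Since the identity is exact, any PRAS for $Z_\hyper{H}(\lambda)$ or for $\log Z_\hyper{H}(\lambda)$ immediately yields, by polynomial-time composition with the gadget, a corresponding PRAS for the hardcore partition function or log-partition function at activity $\mu = c\lambda$. Choosing $\lambda > (k/c)\lambda_c$ forces $\mu > k\lambda_c = \lambda_c(\mathbb{T}_d)$, contradicting~\cite{sly2014counting, galanis2012inapproximability}. A case check on the parity of $k$ then gives $\tfrac{k}{c} = \tfrac{k}{\lfloor (k+1)/2\rfloor} = \tfrac{2k+1+(-1)^k}{k+1}$, which is exactly the stated threshold. The proof contains no real obstacle; the only step requiring a small sanity check is the degenerate case $c=1$ (that is, $k \in \{1,2\}$), in which every cluster is a singleton and $\hyper{H}$ is just $G$ viewed as a $2$-uniform hypergraph, so the statement reduces tautologically to the hardcore hardness itself.
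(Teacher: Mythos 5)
Your proposal is correct and matches the paper's proof almost verbatim: the same cluster-replication gadget with $c=\lfloor(k+1)/2\rfloor$ copies per vertex, the same exact identity $Z_{\hyper{H}}(\lambda)=Z_G(c\lambda)$, and the same invocation of the Sly--Sun hardness at activity $c\lambda>\lambda_c(\mathbb{T}_d)$. The only cosmetic difference is that you spell out the bijection and the $|I\cap C_v|\le1$ argument a bit more explicitly than the paper does.
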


\begin{proof}
The reduction is as described in~\cite{bordewich2008path}, which is reduced from the hardcore model.
Given a graph $G(V,E)$ with maximum degree at most $(d+1)$, we construct a hypergraph $\hyper{H}(V_{\hyper{H}},E_{\hyper{H}})$ as follows. For each $v\in V$, we create $t=\floor{\frac{k+1}{2}}$ distinct vertices $w_{v,1},w_{v,2},\ldots,w_{v,t}$ and let $V_{\hyper{H}}=\{w_{v,i}\mid v\in V,1\le i\le t\}$. And for every edge $e=(u,v)\in E$, we create a hyperedge $S_{e}=\{w_{u,1},\ldots,w_{u,t}, w_{v,1},\ldots,w_{v,t}\}$ and let $E_{\hyper{H}}=\{S_e\mid e\in E\}$. Clearly, the maximum degree of $\hyper{H}$ is at most $d+1$ and the maximum edge-size of $\hyper{H}$ is at most $2t\le k+1$.
We define
\[
Z_{\hyper{H}}(\lambda)=\sum_{I\text{: IS of }\hyper{H}}\lambda^{|I|}
\quad
\text{and}
\quad
Z_{G}(\lambda)=\sum_{I\text{: IS of }G}\lambda^{|I|}.
\]
Note that by the above reduction every independent set $I$ of $\group{G}$ is naturally identified to $t^{|I|}$ distinct independent sets of hypergraph $\hyper{H}$ such that a $v\in V$ is occupied by $I$ if and only if one of $w_{v,i}$ is occupied by the corresponding independent set of $\hyper{H}$. Thus $Z_{\hyper{H}}(\lambda)=Z_{G}(\lambda')$ where $\lambda'=t\lambda$.

Recall that $G$ is an arbitrary graph of maximum degree at most $d+1$. According to Sly and Sun~\cite{sly2014counting}, when $\lambda'>\frac{d^d}{(d-1)^{d+1}}$, there exists a constant $c$ such that unless NP=RP, the partition function $Z_{G}(\lambda')$ can not be approximated within a factor of $c^{n}$ in polynomial time, which means there is no PRAS for the log-partition function $\log Z_{G}(\lambda')$ when $\lambda'>\frac{d^d}{(d-1)^{d+1}}$, i.e.~when $\lambda>\frac{d^d}{\floor{(k+1)/2}(d-1)^{d+1}}=\frac{2k+1+(-1)^k}{k+1}\lambda_c$.

\end{proof}

\begin{figure}
\centering
\includegraphics[scale=.5]{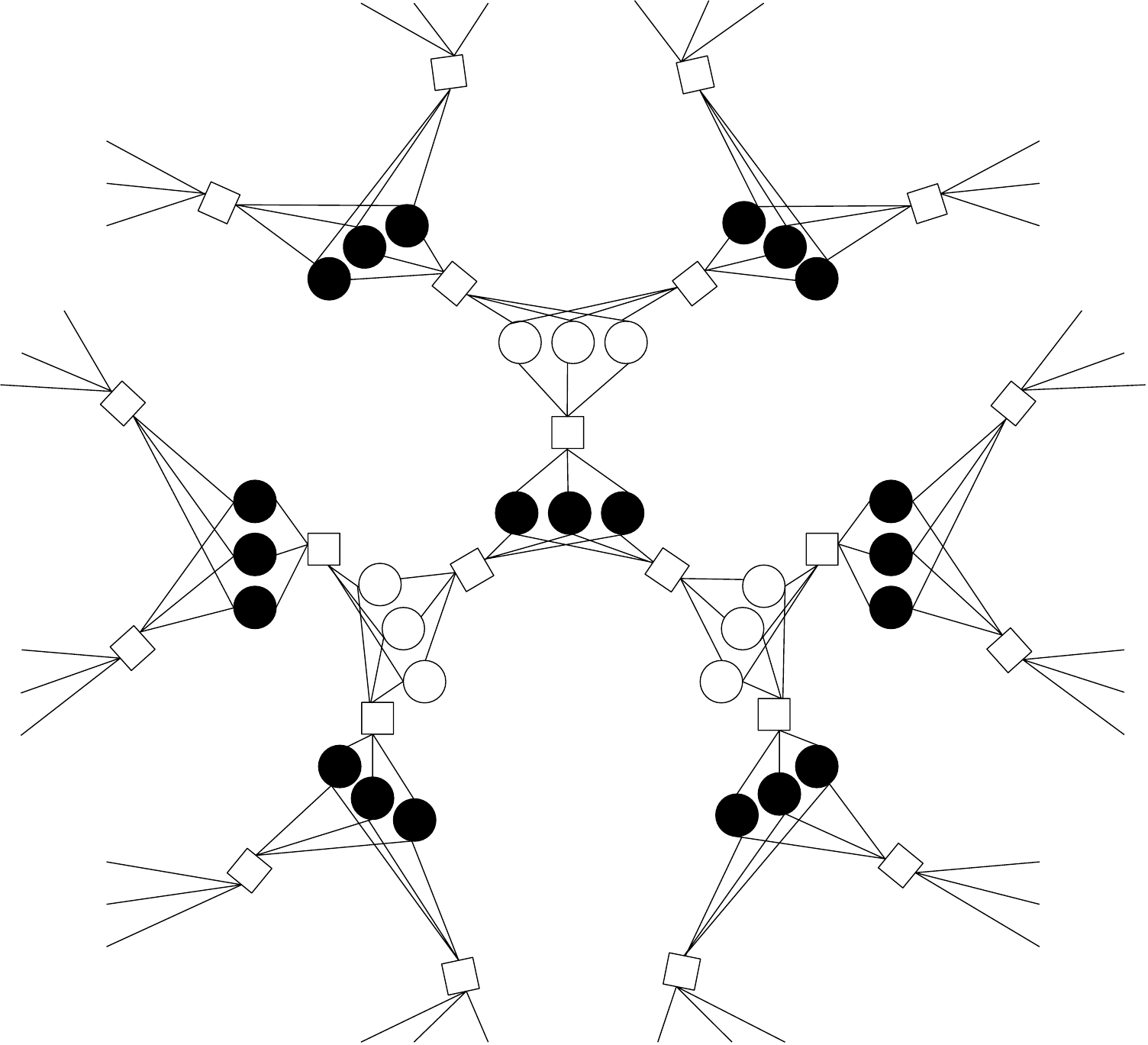}
\caption{The infinite hypergraph that achieves the uniqueness threshold $\frac{2k+1+(-1)^k}{k+1}\lambda_c$.}
\label{fig:gadget}
\end{figure}

The reduction in Theorem~\ref{theorem-inapprox} transforms a hardcore model on a graph with maximum degree $d+1$ and activity $\frac{2k+1+(-1)^k}{k+1}\lambda$ to an instance of hypergraph independent sets with maximum degree at most $d+1$, maximum edge-size at most $k+1$, and activity $\lambda$. In particular, it transforms the infinite $(d+1)$-regular tree $\RHtree{d}{1}$ to the infinite $2\lfloor(k+1)/2\rfloor$-uniform hypergraph as shown in Figure~\ref{fig:gadget}. This infinite hypergraph has the uniqueness threshold $\frac{d^d}{\floor{(k+1)/2}(d-1)^{d+1}}=\frac{2k+1+(-1)^k}{k+1}\lambda_c$.

\newcommand{\seqc}[1]{C_{{#1}}}
\newcommand{\seqpr}[2]{P_{#2}^{{#1}}}

\section{Local convergence of hypergraphs}
\label{sec:local-converge}

For the infinite $(k+1)$-uniform $(d+1)$-regular hypertree $\RHtree{k}{d}$, a group $\group{G}$ of automorphisms on  $\RHtree{k}{d}$ classifies the vertices and hyperedges in $\RHtree{k}{d}$ into orbits (equivalent classes). We consider only $\group{G}$ with finitely many orbits. By Proposition~\ref{prop-bracnching-matrix}, such group $\group{G}$ can be uniquely identified by a pair of branching matrices $(\mat{D}, \mat{K})$ defined in Section~\ref{sec:uniqueness} that classifies vertices and hyperedges in $\RHtree{k}{d}$ into finitely many types (labels), where the incidence relation between vertices and hyperedges with each type is specified by $(\mat{D}, \mat{K})$. We use $\RHtree{k}{d}^{\group{G}}$ to denote this resulting labeled hypertree.

For a finite hypergraph $\hyper{H} = (V,E)$, we also consider the classification of vertices $V=\biguplus_{i\in[\tau_v]} V_{i}$ and hyperedges $E=\biguplus_{j\in[\tau_e]} E_{j}$ into disjoint types.

Given a hypergraph $\hyper{H}$ and a vertex $v$ in  $\hyper{H}$, write $\neigh{}{t}(v)=\neigh{\hyper{H},}{t}(v)$ for the \concept{$t$-neighborhood} around $v$ in $\hyper{H}$, that is, the sub-hypergraph induced by the vertices in $\hyper{H}$ at distance at most $t$ from $v$. For the labeled hypertree $\RHtree{k}{d}^{\group{G}}$, since once the type of the root is fixed the neighborhoods are identical (in terms of types), for each $i\in[\tau_v]$, we can denote $\RHtree{k}{d}^{\group{G}}(t,i)=\neigh{T,}{t}(v)$ where $T=\RHtree{k}{d}^{\group{G}}$ and $v$ is any vertex in $T$ of type-$i$.

The following definition is inspired by those of~\cite{sly2014counting} and~\cite{dembo2010ising} for spin systems. Intuitively, a sequence of finite structures locally resemble the infinite tree structure along with the suitable symmetry which exhibits the uniqueness/nonuniqueness phase transition at the critical threshold, so the measures on the sequence of finite structures may have \concept{local weak convergence} to that on the infinite tree. The existence of such local convergence profoundly leads to several most important phase-transition-based inapproximability results~\cite{dyer2002counting, mossel2009hardness, sly2010computational, sly2014counting, galanis2012improved, galanis2012inapproximability, galanis2014inapproximability} and is a key to the success of random regular bipartite graph as a gadget for anti-ferromagnetic spin systems.

\begin{definition}[local convergence]\label{definition-local-convergence}

Let $\hyper{H}_n = (V_n,E_n)$ be a sequence of random finite hypergraphs, whose vertices $V_n=\biguplus_{i\in[\tau_v]} V_{n,i}$ and hyperedges $E_n=\biguplus_{j\in[\tau_e]} E_{n,j}$ are classified into disjoint types, and for each $i\in[\tau_v]$, let $I_{n,i}\in V_{n,s}$ denote a uniformly random vertex in $V_n$ of type-$i$.

We say the $\hyper{H}_n$ \concept{converge locally} to $\RHtree{k}{d}^{\group{G}}$, and write $\hyper{H}_n\localto\RHtree{k}{d}^{\group{G}}$, if
for all $t\ge 0$ and $i\in[\tau_v]$, $\neigh{}{t}(I_{n,i})$ converges to $\RHtree{k}{d}^{\group{G}}(t,i)$ in distribution with respect to the joint law $\law{P}_n$ of $(\hyper{H}_n, I_{n,i})$: that is,
\[
\lim_{n \to \infty} \law{P}_n\left( \neigh{}{t}(I_{n,i}) \cong \RHtree{k}{d}^{\group{G}}(t,i) \right) = 1,
\]
where $\cong$ denotes isomorphism which preserves vertex- and hyperedge-types and the incidence relation.
\end{definition}

Consider the natural uniform random walk on the incidence graph of $\RHtree{k}{d}^{\group{G}}$, and its projection onto the finitely many disjoint orbits (types) for vertices and hyperedges, which gives a (bipartite) finite Markov chain. It is quite amazing to see that the reversibility of this projected chain determines whether there exists a sequence of finite hypergraphs that  converge locally to $\RHtree{k}{d}^{\group{G}}$.

\begin{theorem}\label{theorem-local-converge}
Let $\group{G}$ be an automorphism group of $\RHtree{k}{d}$ with finitely many orbits for vertices and hyperedges. 
Let $\mat{D}$ and $\mat{K}$ be the branching matrices that corresponds to $\group{G}$ as defined in Section~\ref{sec:uniqueness}. 
There is a sequence of random finite hypergraphs $\hyper{H}_n\localto\RHtree{k}{d}^{\group{G}}$ if and only if the  Markov chain $\boldsymbol{P}=\begin{bmatrix}
\boldsymbol{0} & \frac{1}{d+1}\boldsymbol{D}\\
\frac{1}{k+1}\boldsymbol{K} & \boldsymbol{0}
\end{bmatrix}$ is time-reversible.
\end{theorem}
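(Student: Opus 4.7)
\textbf{Proof proposal for Theorem~\ref{theorem-local-converge}.}

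The plan is to treat each direction by working with the bipartite incidence graph of $\RHtree{k}{d}^{\group{G}}$ and its finite counterparts, and to exploit the bijection between edge-counts in a typed hypergraph and detailed-balance equations for $\mat{P}$.

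\emph{Necessity (local convergence $\Rightarrow$ reversibility).} Suppose $\hyper{H}_n = (V_n, E_n) \localto \RHtree{k}{d}^{\group{G}}$. For each $n$ and each pair $(i,j) \in [\tau_v] \times [\tau_e]$, let $N_{n,i,j}$ be the number of vertex-hyperedge incidences in $\hyper{H}_n$ of type pair $(i,j)$. Double counting yields $N_{n,i,j} = \sum_{v \in V_{n,i}} |\{e \in E_{n,j} : v \in e\}| = \sum_{e \in E_{n,j}} |\{v \in V_{n,i} : v \in e\}|$. By local convergence at radius $t=1$ around a uniform type-$i$ vertex, the empirical average $N_{n,i,j}/|V_{n,i}|$ converges to $d_{ij}$; symmetrically $N_{n,i,j}/|E_{n,j}| \to k_{ji}$. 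Dividing, $|V_{n,i}|/|E_{n,j}| \to k_{ji}/d_{ij}$ whenever $d_{ij}>0$. Passing to a convergent subsequence of $(|V_{n,i}|/|V_n|, |E_{n,j}|/|E_n|)$ produces probability vectors $\pi^V, \pi^E$, and the identity $(d+1)|V_n| = (k+1)|E_n|$ (each a count of total incidences) yields $|E_n|/|V_n| \to (d+1)/(k+1)$. Combining,
\[
\pi_i^V \cdot \tfrac{d_{ij}}{d+1} \;=\; \pi_j^E \cdot \tfrac{k_{ji}}{k+1}
\]
for every $(i,j)$, which is exactly detailed balance for the chain $\mat{P}$ with respect to $(\tfrac{1}{2}\pi^V, \tfrac{1}{2}\pi^E)$. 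Irreducibility (Proposition~\ref{prop-bracnching-matrix}) makes the reversing measure unique, so this holds for the chain as stated.

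\emph{Sufficiency (reversibility $\Rightarrow$ local convergence).} Assuming $\mat{P}$ is reversible, take its (positive, rational) reversing measure and clear denominators to obtain positive integers $a_i, b_j$ with $a_i d_{ij} = b_j k_{ji}$ for all $(i,j)$. For every positive integer $m$, set $|V_{n,i}| = m a_i$ and $|E_{n,j}| = m b_j$, so that the incidence counts $N_{n,i,j} = m a_i d_{ij}$ are consistent from both sides. Construct $\hyper{H}_n$ by a typed configuration model: each type-$i$ vertex carries, for every $j$, $d_{ij}$ labeled half-edges destined for type-$j$ hyperedges; each type-$j$ hyperedge carries $k_{ji}$ slots waiting for type-$i$ vertices. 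For each unordered label $(i,j)$ independently, match the two sides via a uniformly random bijection; the resulting incidences define the bipartite incidence graph of $\hyper{H}_n$, hence $\hyper{H}_n$ itself.

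\emph{Local convergence of the configuration model.} Given a uniformly random type-$i$ vertex $I_{n,i}$, explore $\neigh{}{t}(I_{n,i})$ one half-edge at a time, revealing its partner only when encountered. At each step, conditional on the already-revealed neighborhood having $O(1)$ half-edges used, the next partner is uniform over the remaining same-labeled half-edges, of which $\Theta(m)$ are available, so the probability of reconnecting to any previously visited vertex or hyperedge is $O(1/m)$. Summing over the $O_t(1)$ steps needed to generate $B_t$, the probability that $\neigh{}{t}(I_{n,i})$ contains any cycle in the incidence graph vanishes as $m \to \infty$. Conditional on being cycle-free and respecting the type-rules of $(\mat{D},\mat{K})$, the neighborhood is forced to be isomorphic (as a typed hypergraph) to $\RHtree{k}{d}^{\group{G}}(t,i)$, which is what we needed.

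\emph{Main obstacle.} The delicate step is making sure the typed configuration model is well defined and that the ``tree-like'' argument is clean in the hypergraph setting, where a single hyperedge contributes up to $k+1$ half-edge slots and a ``short cycle'' in the incidence graph corresponds to two vertices sharing two hyperedges (or two hyperedges sharing two vertices), not to cycles in a simple graph. Once one writes this down carefully, the analysis reduces to the usual birthday-style estimate familiar from random regular graphs; I would expect no substantive new difficulty beyond careful bookkeeping of half-edge labels and incidence directions.
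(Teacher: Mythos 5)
Your proposal follows essentially the same approach as the paper's proof: the necessity direction is the same double-counting of typed vertex--hyperedge incidences leading to the detailed-balance equations, and the sufficiency direction is the same typed configuration-model construction combined with the standard cycle-avoidance (Galton--Watson) estimate showing the random neighborhood is asymptotically a hypertree. The only differences are presentational bookkeeping: the paper rounds $\ceil{p_s n}$ and $\ceil{q_t n}$ directly while you clear denominators to integers $a_i, b_j$, and the paper invokes Perron--Frobenius to pin down the reversing measure where you appeal to irreducibility of the chain.
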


We say a uniform random walk over a hypergraph $\hyper{H}$ is a uniform random walk on the incidence graph of $\hyper{H}$: that is, a random walk moves between vertices and hyperedges.
Then the Markov chain $\boldsymbol{P}$ is the projection of the uniform random walk over $\RHtree{k}{d}$ onto the equivalent classes of vertices and hyperedges (i.e.~the orbits of the automorphism group $\group{G}$ that corresponds to the $\mat{D}$ and $\mat{K}$).
Meanwhile, matrix $\begin{bmatrix}
\boldsymbol{0} & \boldsymbol{D}\\
\boldsymbol{K} & \boldsymbol{0}
\end{bmatrix}$ is the adjacent matrix for a directed bipartite graph that describes the (weighted) incidence relation between vertex- and hyperedge-types in the following way: each directed bipartite edge from vertex-type-$i$ to hyperedge-type-$j$ (or vice versa) is assigned with weight $d_{ij}$ (or $k_{ji}$). So the Markov chain $\boldsymbol{P}$ is also the random walk on this directed bipartite graph where the transition probability of each directed edge is proportional to its weight.

For the bipartite Markov chain $\boldsymbol{P}$, recall that due to Proposition~\ref{prop-bracnching-matrix}, $\boldsymbol{P}$ must be irreducible. Then the time-reversibility of $\boldsymbol{P}$ is equivalent to the following:
There exist positive vectors $\vec{p}=(p_i)_{i\in[\tau_v]}$ and $\vec{q}=(q_j)_{j\in[\tau_e]}$ that satisfy the bipartite {detailed balanced equation}:
\[
p_i d_{ij} = q_j k_{ji}
\]
for every $(i,j)\in[\tau_v]\times[\tau_e]$. Without loss of generality, we assume $\sum_ip_i + \sum_jq_j = 1$.

In fact, it is easy to check that  $\vec{p}\mat{D}=(k+1)\vec{q}$ and $\vec{q}\mat{K}=(d+1)\vec{p}$, therefore the $\vec{p}$ and $\vec{q}$ are respectively the left eigenvector of $\mat{D}\mat{K}$ and $\mat{K}\mat{D}$ both with eigenvalue $(d+1)(k+1)$. Since both $\mat{D}\mat{K}$ and $\mat{K}\mat{D}$ are irreducible, due to the Perron-Frobenius theorem, the only positive left eigenvectors $\vec{p}$ and $\vec{q}$ are the ones that are  associated with the Perron-Frobenius eigenvalue $(d+1)(k+1)$ and are one-dimensional. 

Furthermore, it must holds that $\frac{||\vec{p}||_1}{||\vec{q}||_1} = \frac{k+1}{d+1}$. Denote $\vec{p'} = \frac{\vec{p}}{\|\vec{p}\|_1}$ and $\vec{q'} = \frac{\vec{q}}{\|\vec{q}\|_1}$. We have $\|\vec{p'}\|=\|\vec{q'}\|=1$ and
$p'_i \frac{d_{ij}}{d+1} = q'_j \frac{k_{ji}}{k+1}$
for every $(i,j)\in[\tau_v]\times[\tau_e]$, i.e.~$\vec{p'}$ is the $\concept{vertex-stationary distribution}$ and $\vec{q'}$ is the $\concept{edge-stationary distribution}$. 
We will mostly use $\vec{p}$ and $\vec{q}$ in our proof of Theorem~\ref{theorem-local-converge}. 
Recall for the automorphism group $\widehat{\group{G}}$ defined in Section~\ref{sec:uniqueness} such that $\lambda_c(\RHtree{k}{d}^{\widehat{\group{G}}})=\lambda_c(\RHtree{k}{d})=\frac{d^d}{k(d-1)^{d+1}}$, i.e.~the uniqueness of $\widehat{\group{G}}$-translation-invariant Gibbs measures on $\RHtree{k}{d}$ represents the uniqueness of all Gibbs measures on $\RHtree{k}{d}$, the branching matrices are given as
$\widehat{\boldsymbol{D}}=\begin{bmatrix}
1 & d\\
d & 1
\end{bmatrix}$ and $\widehat{\boldsymbol{K}}=\begin{bmatrix}
k & 1\\
1 & k
\end{bmatrix}$.
It is easy to verify that the resulting Markov chain $\widehat{\boldsymbol{P}}$ is not time-reversible. It then follows from Theorem~\ref{theorem-local-converge} that there does not exist \emph{any} sequence of random finite hypergraphs that converge locally to $\RHtree{k}{d}$ with the symmetry $\widehat{\group{G}}$ assumed by the extremal Gibbs measures $\mu^+,\mu^-$ whose uniqueness represents the uniqueness of all Gibbs measures.

\begin{remark*}
Given branching matrices $\mat{D}$ and $\mat{K}$, instead of considering $\hyper{H}_n$ that converges locally for every type to the $\RHtree{k}{d}^{\group{G}}$  as in Definition~\ref{definition-local-convergence}, we can alternatively define a sequence $\hyper{H}_n$ that \concept{converges locally in average} to the $\RHtree{k}{d}^{\group{G}}$: that is, for all $t>0$, the $\neigh{}{t}(I_n)$ converges to $\RHtree{k}{d}^{\group{G}}(t,I)$ in distribution, where $I_n$ denotes a uniformly random vertex in the finite hypergraph $\hyper{H}_n$, and $I$ denotes a random vertex-type chosen according to the vertex-stationary distribution $\vec{p}'$. This definition looks more analogous to the local convergence defined in~\cite{sly2014counting} for the anti-ferromagnetic 2-spin system. But we will see the two definitions are equivalent: A sequence $\hyper{H}_n\localto\RHtree{k}{d}^{\group{G}}$ also converges locally to $\RHtree{k}{d}^{\group{G}}$ in average, since by double counting the portion of vertices of type-$i$ must converge to $p_i'$ as $n \to \infty$; and conversely, a sequence converges locally to $\RHtree{k}{d}^{\group{G}}$ in average must also have$\hyper{H}_n\localto\RHtree{k}{d}^{\group{G}}$, simply because neighborhoods of vertices of different types cannot be isomorphic to each other. 
\end{remark*}

\begin{proof}[Proof of Theorem~\ref{theorem-local-converge}]
We will prove the necessity of the reversibility of the chain by a double counting argument and the sufficiency is proved by explicitly constructing the sequence of the finite hypergraphs.

\paragraph*{Double counting.}
Let $\hyper{H}_n=(V_n, E_n)$ where $V_n=\biguplus_{s \in \tau_v}V_{n,s}$ and $E_n=\biguplus_{t \in \tau_e}E_{n,t}$. Assume that $\hyper{H}_n\localto\RHtree{k}{d}^{\group{G}}$.

For $\RHtree{k}{d}^{\group{G}}$ such that there is a hypergraph sequence $\hyper{H}_n=(V_n=\biguplus_{s \in \tau_v}V_{n,s}$, $E_n=\biguplus_{t \in \tau_e}E_{n,t})$ converging locally to $\RHtree{k}{d}^{\group{G}}$, we show that the Markov chain $\mat{P}$ is time reversible. The proof is by a double counting of the number of vertex-hyperedge pairs with specific type combination.

Since the $1$-neighborhood of the vertex with each type in $\hyper{H}_n$ converges in distribution to the $1$-neighborhood of the vertex with the same type in $\RHtree{k}{d}^{\group{G}}$, for sufficiently large $n$, we have all but a $o(1)$-fraction of vertices in $\hyper{H}_n$ whose local transitions between vertex-types and hyperedge-types within 1-step are given precisely by $\mat{D}$ and $\mat{K}$. Thus, for every $(i,j) \in [\tau_v] \times [\tau_e]$, the total number of incident vertex-hyperedge pair $(v,e)$ with $v \in V_{n,i}$ and $e \in E_{n,j}$ (counted from the vertex-side and from the hyperedge-side) is given by
\[
d_{ij} (|V_{n,i}| + o(1)) = k_{ji} (|E_{n,j}| + o(1)).
\]
As $n \to \infty$, we will have $(d_{ij} |V_{n,i}|)/( k_{ji}|E_{n,j}|) \to 1$, or equivalently
\[
\frac{|E_{n,j}|}{ |V_{n,i}|} \to \frac{d_{ij}}{ k_{ji}}
\]
for all $(i,j) \in [\tau_v] \times [\tau_e]$ such that $d_{ij}, k_{ji} \ne 0$.
Thus there exists positive $p_i,q_j$ such that $q_j / p_i = d_{ij} / k_{ji}$ for all such $(i,j)$. Since $\mat{D}\mat{K}$ and $\mat{K}\mat{D}$ are irreducible, we have unique corresponding positive left eigenvectors, which is $(p_i)_{i \in [\tau_v]}, (q_j)_{j \in [\tau_e]}$ here, such that $p_i d_{ij} = q_j k_{ji}$ for all $(i,j)$.

\paragraph*{Construction of $\hyper{H}_n$.}
Assume the Markov chain $\boldsymbol{P} $ in Theorem~\ref{theorem-local-converge} to be time-reversible, and let $\vec{p}=(p_i)_{i\in[\tau_v]}$ and $\vec{q}=(q_j)_{j\in[\tau_e]}$ be the unique positive vectors satisfying $p_i d_{ij} = q_j k_{ji}$ for every $(i,j)\in[\tau_v]\times[\tau_e]$ and $\sum_ip_i + \sum_jq_j = 1$.
The sequence of finite hypergraph sequence $\hyper{H}_n$ that converges locally to $\RHtree{\mat{K}}{\mat{D}}$ is constructed as follows. The number $n$ is approximately the total number of vertices and hyperedges in $\hyper{H}_n$ (where the approximation is due to rounding).
\begin{itemize}
\item For each $s\in[\tau_v]$ and $t\in[\tau_e]$, let $V_{n,s}$ be the set of $\ceil{p_s n}$ vertices of type $s$, and $E_{n,t}$ be the set of $\ceil{q_t n}$ hyperedges of type $t$.
We then describe hypergraphs $\hyper{H}_n=(V_n,E_n)$ where $V_n=\biguplus_{s \in \tau_v}V_{n,s}$ and $E_n=\biguplus_{t \in \tau_e}E_{n,t}$.
\item For each $s\in[\tau_v]$ and $t\in[\tau_e]$, let $N_{s,t} \defeq \ceil{d_{st} p_s n} = \ceil{k_{ts} q_t n}$. Sample a uniformly random permutation $f : [N_{s,t}] \to [N_{s,t}]$, and create an incidence between the $i$-th vertex in $V_{n,s}$ and the $j$-th hyperedge in $E_{n,t}$ for every $(a,b=f(a))$ with $a \in i + \ceil{p_s n} \mathbb{Z}$ and $b \in j + \ceil{q_t n} \mathbb{Z}$.
\end{itemize}

Note that as normalized Perron eigenvectors for irreducible integer matrices, the $\vec{p}$ and $\vec{q}$ must be rational. Then there are infinitely many $n$ such that $N_{s,t}/|V_{n,s}| = d_{st}$ and $N_{s,t}/|E_{n,t}| = k_{ts}$. Without loss of generality, we can consider only these $n$, since otherwise it will contribute at most $o(1)$-fractions of bad neighborhoods.

Viewing multi-edges in the incidence graph of $\hyper{H}_n$ as different edges, it holds that each vertex of type-$s$ is incident to exactly $d_{st}$ hyperedges of type-$t$ and each hyperedge of type-$t$ is incident by exactly $k_{ts}$ vertices of type-$s$. Therefore it is sufficient to show that for any finite $r>0$ the probability that the $r$-neighborhood of a vertex in $\hyper{H}_n$ has no circle is 1 as $n \to \infty$, i.e.~almost surely the $r$-neighborhood of a vertex in $\hyper{H}_n$ is a hypertree. This can be proved easily by a standard routine of Galton-Watson branching process (see e.g.~Ch.~9 in~\cite{janson2011random}) since the neighborhood is of constant size and the probability of reencountering a vertex or an edge from a population whose size goes to~$\infty$ goes to~0.

\end{proof}



\begin{thebibliography}{10}

\bibitem{bayati2007simple}
Mohsen Bayati, David Gamarnik, Dimitriy Katz, Chandra Nair, and Prasad Tetali.
\newblock Simple deterministic approximation algorithms for counting matchings.
\newblock In {\em Proceedings of the 39th ACM Symposium on Theory of Computing
  (STOC)}, pages 122--127, 2007.

\bibitem{bezakova2015counting}
Ivona Bezakova, Andreas Galanis, Leslie~Ann Goldberg, Heng Guo, and Daniel
  Stefankovic.
\newblock Counting independent sets in hypergraphs when strong spatial mixing
  fails.
\newblock {\em arXiv preprint arXiv:1510.09193}, 2015.

\bibitem{bordewich2008path}
Magnus Bordewich, Martin Dyer, and Marek Karpinski.
\newblock Path coupling using stopping times and counting independent sets and
  colorings in hypergraphs.
\newblock {\em Random Structures \& Algorithms}, 32(3):375--399, 2008.

\bibitem{dembo2010ising}
Amir Dembo, Andrea Montanari, et~al.
\newblock {Ising} models on locally tree-like graphs.
\newblock {\em The Annals of Applied Probability}, 20(2):565--592, 2010.

\bibitem{dudek2014approximate}
Andrzej Dudek, Marek Karpinski, Andrzej Ruci{\'n}ski, and Edyta Szyma{\'n}ska.
\newblock Approximate counting of matchings in (3, 3)-hypergraphs.
\newblock In {\em SWAT}, pages 380--391, 2014.

\bibitem{dyer2002counting}
Martin Dyer, Leslie~A Goldberg, and Mark Jerrum.
\newblock Counting and sampling {$H$}-colourings.
\newblock In {\em RANDOM}, pages 51--67, 2002.

\bibitem{galanis2012improved}
Andreas Galanis, Qi~Ge, Daniel {\v{S}}tefankovi{\v{c}}, Eric Vigoda, and Linji
  Yang.
\newblock Improved inapproximability results for counting independent sets in
  the hard-core model.
\newblock {\em Random Structures \& Algorithms}, 45(1):78--110, 2014.

\bibitem{galanis2015complexity}
Andreas Galanis and Leslie~Ann Goldberg.
\newblock The complexity of approximately counting in 2-spin systems on $ k
  $-uniform bounded-degree hypergraphs.
\newblock {\em arXiv preprint arXiv:1505.06146}, 2015.

\bibitem{galanis2012inapproximability}
Andreas Galanis, Daniel Stefankovic, and Eric Vigoda.
\newblock Inapproximability of the partition function for the antiferromagnetic
  {Ising} and hard-core models.
\newblock arXiv preprint arXiv:1203.2226, 2012.

\bibitem{galanis2014inapproximability}
Andreas Galanis, Daniel {\v{S}}tefankovi{\v{c}}, and Eric Vigoda.
\newblock Inapproximability for antiferromagnetic spin systems in the tree
  non-uniqueness region.
\newblock In {\em Proceedings of the 46th ACM Symposium on Theory of Computing
  (STOC)}, pages 823--831, 2014.

\bibitem{gamarnik2012correlation}
David Gamarnik and Dmitriy Katz.
\newblock Correlation decay and deterministic {FPTAS} for counting colorings of
  a graph.
\newblock {\em Journal of Discrete Algorithms}, 12:29--47, 2012.

\bibitem{gamarnik2013strong}
David Gamarnik, Dmitriy Katz, and Sidhant Misra.
\newblock Strong spatial mixing of list coloring of graphs.
\newblock {\em Random Structures \& Algorithms}, 2013.

\bibitem{heilmann1972existence}
Ole~J Heilmann.
\newblock Existence of phase transitions in certain lattice gases with
  repulsive potential.
\newblock {\em Lettere Al Nuovo Cimento (1971--1985)}, 3(3):95--98, 1972.

\bibitem{heilmann1972theory}
Ole~J Heilmann and Elliott~H Lieb.
\newblock Theory of monomer-dimer systems.
\newblock {\em Communications in Mathematical Physics}, 25(3):190--232, 1972.

\bibitem{janson2011random}
Svante Janson, Tomasz Luczak, and Andrzej Rucinski.
\newblock {\em Random graphs}, volume~45.
\newblock John Wiley \& Sons, 2011.

\bibitem{jerrum1989approximating}
Mark Jerrum and Alistair Sinclair.
\newblock Approximating the permanent.
\newblock {\em SIAM Journal on Computing}, 18(6):1149--1178, 1989.

\bibitem{karpinski2013approximate}
Marek Karpinski, Andrzej Rucinski, and Edyta Szymanska.
\newblock Approximate counting of matchings in sparse uniform hypergraphs.
\newblock In {\em Proceedings of the Workshop on Analytic Algorithmics and
  Combinatorics (ANALCO)}, pages 72--79, 2013.

\bibitem{kelly1985stochastic}
Frank~P Kelly.
\newblock Stochastic models of computer communication systems.
\newblock {\em Journal of the Royal Statistical Society. Series B
  (Methodological)}, 47(3):379--395, 1985.

\bibitem{li2012approximate}
Liang Li, Pinyan Lu, and Yitong Yin.
\newblock Approximate counting via correlation decay in spin systems.
\newblock In {\em Proceedings of the 23rd ACM-SIAM Symposium on Discrete
  Algorithms (SODA)}, pages 922--940, 2012.

\bibitem{li2013correlation}
Liang Li, Pinyan Lu, and Yitong Yin.
\newblock Correlation decay up to uniqueness in spin systems.
\newblock In {\em Proceedings of the 24th ACM-SIAM Symposium on Discrete
  Algorithms (SODA)}, pages 67--84, 2013.

\bibitem{lin2014simple}
Chengyu Lin, Jingcheng Liu, and Pinyan Lu.
\newblock A simple {FPTAS} for counting edge covers.
\newblock In {\em Proceedings of the 25th ACM-SIAM Symposium on Discrete
  Algorithms (SODA)}, pages 341--348, 2014.

\bibitem{liu2013fptas}
Jingcheng Liu and Pinyan Lu.
\newblock {FPTAS} for counting monotone {CNF}.
\newblock In {\em Proceedings of the 26th Annual ACM-SIAM Symposium on Discrete
  Algorithms (SODA)}, pages 1531--1548, 2015.

\bibitem{lu2014fptas}
Pinyan Lu, Menghui Wang, and Chihao Zhang.
\newblock {FPTAS} for weighted fibonacci gates and its applications.
\newblock In {\em Proceedings of the 41st International Colloquium on Automata,
  Languages and Programming (ICALP)}, pages 787--799, 2014.

\bibitem{lu2015fptas}
Pinyan Lu, Kuan Yang, and Chihao Zhang.
\newblock Fptas for hardcore and ising models on hypergraphs.
\newblock {\em arXiv preprint arXiv:1509.05494}, 2015.

\bibitem{lu2013improved}
Pinyan Lu and Yitong Yin.
\newblock Improved {FPTAS} for multi-spin systems.
\newblock In {\em RANDOM}, pages 639--654, 2013.

\bibitem{mossel2009hardness}
Elchanan Mossel, Dror Weitz, and Nicholas Wormald.
\newblock On the hardness of sampling independent sets beyond the tree
  threshold.
\newblock {\em Probability Theory and Related Fields}, 143(3-4):401--439, 2009.

\bibitem{restrepo2013improved}
Ricardo Restrepo, Jinwoo Shin, Prasad Tetali, Eric Vigoda, and Linji Yang.
\newblock Improved mixing condition on the grid for counting and sampling
  independent sets.
\newblock {\em Probability Theory and Related Fields}, 156(1-2):75--99, 2013.

\bibitem{sinclair2014approximation}
Alistair Sinclair, Piyush Srivastava, and Marc Thurley.
\newblock Approximation algorithms for two-state anti-ferromagnetic spin
  systems on bounded degree graphs.
\newblock {\em Journal of Statistical Physics}, 155(4):666--686, 2014.

\bibitem{sinclair2013spatial}
Alistair Sinclair, Piyush Srivastava, and Yitong Yin.
\newblock Spatial mixing and approximation algorithms for graphs with bounded
  connective constant.
\newblock In {\em Proceedings of the 54th IEEE Symposium on Foundations of
  Computer Science (FOCS)}, pages 300--309, 2013.

\bibitem{sly2010computational}
Allan Sly.
\newblock Computational transition at the uniqueness threshold.
\newblock In {\em Proceedings of the 51st IEEE Symposium on Foundations of
  Computer Science (FOCS)}, pages 287--296, 2010.

\bibitem{sly2014counting}
Allan Sly, Nike Sun, et~al.
\newblock Counting in two-spin models on d-regular graphs.
\newblock {\em The Annals of Probability}, 42(6):2383--2416, 2014.

\bibitem{spitzer1975markov}
Frank Spitzer.
\newblock Markov random fields on an infinite tree.
\newblock {\em The Annals of Probability}, pages 387--398, 1975.

\bibitem{weitz2005combinatorial}
Dror Weitz.
\newblock Combinatorial criteria for uniqueness of {Gibbs} measures.
\newblock {\em Random Structures \& Algorithms}, 27(4):445--475, 2005.

\bibitem{weitz2006counting}
Dror Weitz.
\newblock Counting independent sets up to the tree threshold.
\newblock In {\em Proceedings of the 38th ACM Symposium on Theory of Computing
  (STOC)}, pages 140--149, 2006.

\end{thebibliography}

\end{document}